\newtheorem{definition}{Definition}
\newtheorem{example}{Example}
\newtheorem{theorem}{Theorem}
\newtheorem{lemma}{Lemma}
\begin{document}
\begin{frontmatter} 

%
\title{Nonoverlapping (delta, gamma)-approximate pattern matching}





	\author[1,2]{Youxi Wu}

		\author[1]{Bojing Jian}
		
		\author[3]{{Yan Li}\corref{mycorrespondingauthor}}
		\ead{lywuc@163.com}

		\author[4]{He Jiang}
		
		\author[5]{Xindong Wu}
		\cortext[mycorrespondingauthor]{Corresponding author}
		
\address[1]{School of Artificial Intelligence, Hebei University of Technology, Tianjin 300401, China}

\address[2]{Hebei Key Laboratory of Big Data Computing, Tianjin {\rm 300401}, China}

\address[3]{School of Economics and Management, Hebei University of Technology, Tianjin 300401, China}
\address[4]{School of Software, Dalian University of Technology, Dalian, Liaoning 116621, China}
\address[5]{Key Laboratory of Knowledge Engineering with Big Data ( Ministry of Education), Hefei University of Technology, Hefei 230009, China}


\begin{abstract}
Pattern matching can be used to calculate the support of patterns, and is a key issue in sequential pattern mining (or sequence pattern mining). Nonoverlapping pattern matching means that two occurrences cannot use the same character in the sequence at the same position. Approximate pattern matching allows for some data noise, and is more general than exact pattern matching. At present, nonoverlapping approximate pattern matching is based on Hamming distance, which cannot be used to measure the local approximation between the subsequence and pattern, resulting in large deviations in matching results. To tackle this issue, we present a Nonoverlapping Delta and gamma approximate Pattern matching (NDP) scheme that employs the $(\delta, \gamma)$-distance to give an approximate pattern matching, where the local and the global distances do not exceed $\delta$ and $\gamma$, respectively. We first transform the NDP problem into a local approximate Nettree and then construct an efficient algorithm, called the local approximate Nettree for NDP (NetNDP). We propose a new approach called the Minimal Root Distance which allows us to determine whether or not a node has root paths that satisfy the global constraint and to prune invalid nodes and parent-child relationships. NetNDP finds the rightmost absolute leaf of the max root, searches for the rightmost occurrence from the rightmost absolute leaf, and deletes this occurrence. We iterate the above steps until there are no new occurrences. Numerous experiments are used to verify the performance of the proposed algorithm.
\end{abstract}

\begin{keyword}
Patterns matching \sep
approximate pattern matching \sep
nonoverlapping condition  \sep
$(\delta, \gamma)$-distance  \sep
Nettree structure
\end{keyword}

\end{frontmatter}

\section{Introduction}
	Pattern matching (or string matching \cite{Al-Ssulami2015Hybrid,Fernau2020Pattern, qiangtkde, wu2015apin}) is a key issue in computer science \cite{ida2020,Nie2016Query, li2021dse,Wang2019} and plays an important part in many applications, such as pattern mining \cite {Wu2020NetNCSP,Min2018Frequent,song2021kais}, knowledge discovery \cite{Wu2019On}, bioinformatics analysis \cite{Upama2015A}, fault detection \cite{Lee2018Fault}, and time series forecasting \cite{Nguyen2017Pattern}. In recent years, pattern matching with multiple gap constraints, as an important branch of pattern matching \cite {wu2021jos,ida2018}, has attracted a great deal of attention in many fields, such as time series analysis \cite{wu2021tkdd,min2020ins}, sequential pattern mining \cite{apin2021,Fournier2021}, and text key phrase extraction \cite{Xie2017Efficient}. A pattern with gap constraints can be written as $P=p_{1}[min_{1}, max_{1}]p_{2} \cdots p_{j}[min_{j}, max_{j}]p_{j + 1}$ $\cdots p_{m-1}[min_{m-1}, max_{m-1}]p_{m}$, where $min_{j}$ and $max_{j}$ represent the minimal and maximal wildcards between $p_{j}$ and $p_{j+1}$, respectively. $[min_{j}, max_{j}]$ can be set flexibly \cite{Liu2021}, and this appoach is more practical than the traditional wildcards ``?'' and ``*''.

	In a pattern matching problem with gap constraints, the number of occurrences is exponential if there are no constraints \cite{Huang2013Algorithms}, and some researchers have therefore focused on pattern matching under the one-off condition \cite{Wu2021eswa}. However, the one-off condition is too stringent, and means that some useful information is lost. Ding et al. \cite{Ding2009Efficient} then proposed the concept of nonoverlapping. The nonoverlapping pattern matching was proved to be solved in polynomial time complexity \cite{Wu2017Strict}. To make the gap constraint more suitable for practical applications, non-overlapping pattern matching with general gaps was studied \cite{Shi2020NetNPG}. Nonoverlapping sequential pattern mining is able to find valuable frequent patterns effectively \cite{Wu2018NOSEP}. To reduce the number of redundant patterns, nonoverlapping closed sequential pattern matching was explored and improved the mining performance \cite{Wu2020NetNCSP}.
	
	However, the above mentioned researches are exact pattern matching \cite{Wu2017Strict,Chen2014Bit}. It means that noise is not allowed in the above researches, which is difficult to obtain valuable information. Approximate pattern matching \cite{Hu2015GFilter,Chen2018On} allows for noise, and it can therefore handle practical problems more effectively.  The Hamming distance \cite{Wu2018NETASPNO,Wu2016Approximate} is commonly applied in approximate pattern matching. But the Hamming distance only reflects the number of different characters, and ignores the distances between characters. Therefore, nonoverlapping approximate pattern matching scheme based on the Hamming distance may cause large deviations when applied to time series  \cite{li2021apind,wu2020APIN}.  Inspired by $(\delta, \gamma)$-distance, \cite{Zhang2017On,Fredriksson2006Efficient,Clifford2005},  this paper focuses on Nonoverlapping Delta and gamma approximate Pattern matching (NDP) that employs the $(\delta, \gamma)$-distance to give an approximate pattern matching, where the local and the global distances do not exceed $\delta$ and $\gamma$, respectively. An illustrative example is shown as follows.

Figure \ref {9figs} shows the matching results of symbolized time series of pattern $P=$ b$[0,1]$d$[0,1]$b. Figure 1(a) is consistent with pattern $P$ without gaps, while Figures 1(b)(c) contain gaps and can match pattern $P$ exactly. Figures 1(d)(e)(f) match pattern $P$ with a Hamming distance of one, but due to the large deviations, they are not very similar to Figure 1(a). For instance, `e' and `b' show a large deviation in Figure 1(d). Figures 1(g)(h)(i) which match pattern $P$ with the $(\delta, \gamma)$-distance show a close similarity to Figure 1(a). Figures 1(g)(h) match pattern $P$ with $(\delta=1, \gamma=1)$, while Figure 1(i) matches pattern $P$ with $(\delta=1, \gamma=2)$. We can therefore see that the $\delta$-distance and $\gamma$-distance can be used to measure the local approximation and the global approximation, respectively. $(\delta, \gamma)$-approximate pattern matching ensures overall similarity.

	\begin{figure} 
		\centering
		\includegraphics[width=0.75\textwidth]{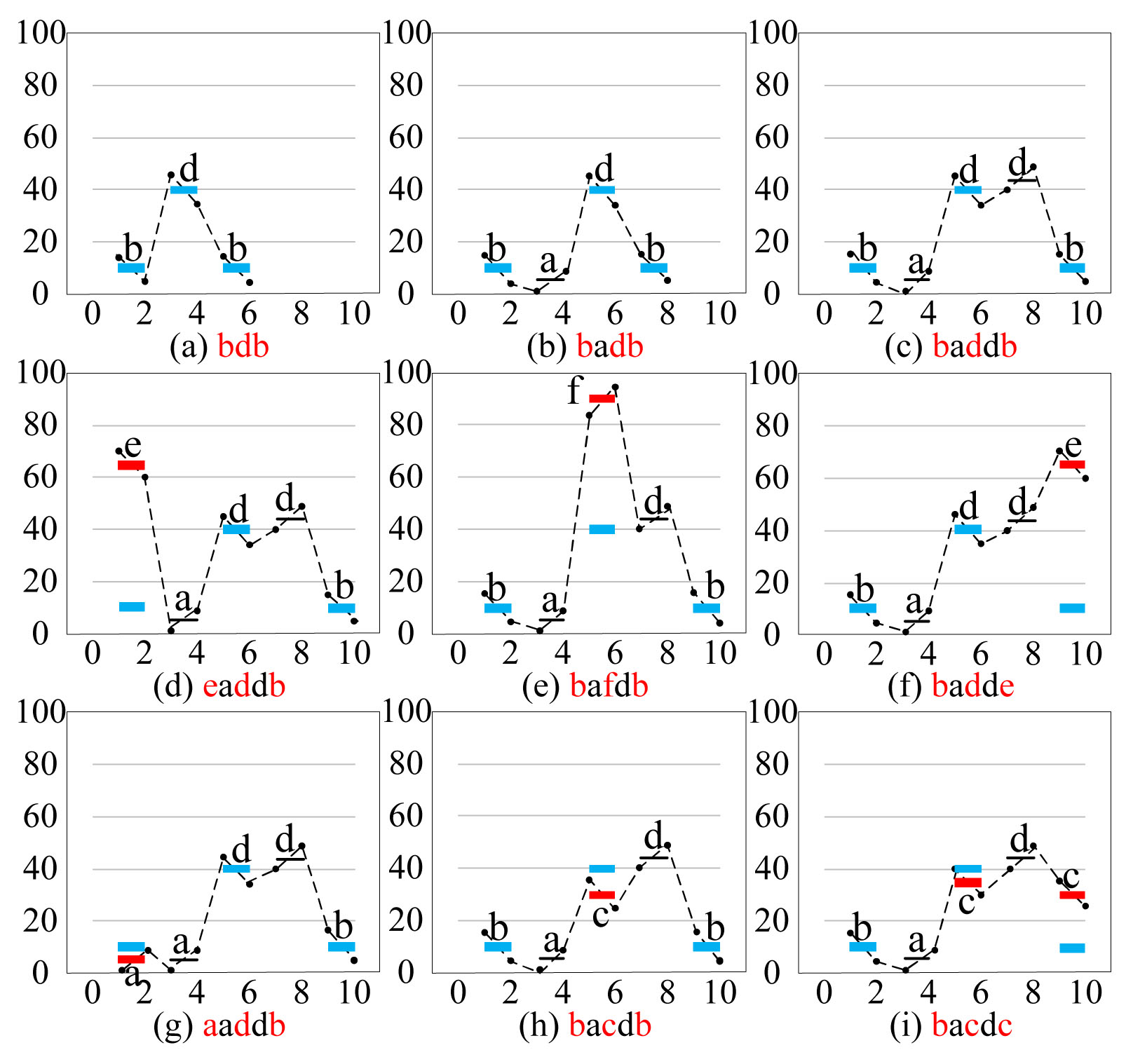}	
	
		Note: ``\includegraphics[width=2.5mm,height=0.35mm]{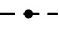}'' indicates a numerical time series, ``\protect\includegraphics[width=2.5mm,height=1mm]{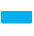}'' indicates exact matching segments, ``\protect\includegraphics[width=2.5mm,height=1mm]{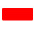}'' indicates approximate matching segments, ``\protect\includegraphics[width=2.5mm,height=0.6mm]{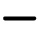}'' indicates gaps.
		
		\caption{{Pattern $P=$ b$[0,1]$d$[0,1]$b and symbolized time series matching results. The time series in (a), (b), (c), (d), (e), (f), (g), (h), and (i) can be symbolized as ``bdb'', ``badb'', ``baddb'', ``ebaddb'', ``bafdb'', ``badde'', ``aaddb'', ``bacdb'', and , ``bacdc'', respectively.   All these sequences can be matched by pattern $P$. The trends of (b) and (c) are similar to that of (a), since they are exact pattern matching. Although Figures 1(d)(e)(f) match pattern $P$ with a Hamming distance of one, the trends of (d), (e), and (f) are significant different from that of (a) due to the large deviations. However, Figures 1(g)(h)(i) match pattern P with the $(\delta, \gamma)$-distance, and the trends of (g), (h), and (i) are similar to that of (a).}
		}
		\label{9figs} 	
	\end{figure}
	
		The contributions of this paper are as follows.
	
	(1) To avoid  large deviations of Hamming distance, we present a Nonoverlapping Delta and gamma approximate Pattern matching (NDP) shceme and propose  an efficient algorithm called NetNDP (a local approximate Nettree for NDP).
	
	(2)  NetNDP employs the concept of a local approximate Nettree and MRD (Minimal Root Distance) to improve the efficiency.
	
	(3) We carry out numerous experiments to verify the efficiency of NetNDP, and show that the $(\delta, \gamma)$-distance outperforms the Hamming distance for matching.  
	
	The rest of this paper is organized as follows:  Section 2 presents some relevant definitions. Section 3 introduces related work. Section 4 explores the local approximate Nettree structure, and proposes the NetNDP algorithm. Section 5 presents results that validate the performance of NetNDP. We conclude this paper in Section 6.

	\section{Problem Definition}
	\label{problem_definition}
	\begin{definition}
		A sequence $S$ can be written as $S=s_{1}s_{2} \cdots s_{n}$, where $n$ is the length of $S$, $s_{i} \in \Sigma (1 \leq i \leq n)$, and $\Sigma$ is the character set.
	\end{definition}
	
	\begin{definition}
		A pattern with gap constraints can be expressed as $P=p_{1}[min_{1}$, $max_{1}]p_{2} \cdots p_{j}[min_{j},$ $max_{j}]p_{j + 1} \cdots p_{m-1}[min_{m-1}, max_{m-1}]p_{m}$, where $p_{j} \in \Sigma$ $(1 \leq j \leq m)$, $m$ is the length of $P$, and $min_{j}$ and $max_{j}$ represent the minimal and maximal wildcards between $p_{j}$ and $p_{j+1}$, respectively. $[min_{j}, max_{j}]$ is a gap constraint.
	\end{definition} 
	
	\begin{definition}
		Given any two characters $c$ and $d$ in $\Sigma$ , the $\delta$-distance between $c$ and $d$ is $|c-d|$, and is denoted by $D_{\delta}(c, d)$.
	\end{definition} 

	\begin{definition}
		Suppose we have two sequences $S_{1}=x_{1}x_{2} \cdots x_{n}$ and $S_{2}=y_{1}y_{2} \cdots y_{n}$, where  $x_{i} \in \Sigma$,  $y_{i} \in \Sigma (1 \leq i \leq n)$. The $\gamma$-distance between $S_{1}$ and $S_{2}$ is $\sum_{i=1}^{n} D_{\delta}(x_{i}, y_{i})=\sum_{i=1}^{n}|x_{i}-y_{i}|$, and is denoted by $D_{\gamma}(S_{1}, S_{2})$.
	\end{definition} 
	
\begin{example}
	 Suppose $S_{1}=$ aef and $S_{2}=$ cee, the $\delta$-distances between the corresponding characters of $S_{1}$ and $S_{2}$ are then $D_{\delta}(x_{1}, y_{1})=|\mathrm{a}-\mathrm{c}|=2$, $D_{\delta}(x_{2}, y_{2})=|\mathrm{e}-\mathrm{e}|=0$, and $D_{\delta}(x_{3}, y_{3})=|\mathrm{f}-\mathrm{e}|=1$. Thus, the $\gamma$-distance between $S_{1}$ and $S_{2}$ is $D_{\gamma}(S_{1}, S_{2})=\sum_{i=1}^{3} D_{\delta}(x_{i}, y_{i})=\sum_{i=1}^{3}|x_{i}-y_{i}|=2+0+1=3$. 
\end{example}
	
	\begin{definition}
		Suppose we have a sequence $S=s_{1}s_{2} \cdots s_{n}$, a pattern $P=p_{1}[min_{1}$, $max_{1}]p_{2} \cdots [min_{j},$ $max_{j}]p_{j+1} \cdots [min_{m-1}, max_{m-1}]p_{m}$, a local threshold $\delta$, and a global threshold $\gamma$. If $<l_{1}, l_{2}, \cdots, l_{m}>$  satisfies the following conditions,  then $<l_{1}, l_{2}, \cdots, l_{m}>$ is a $(\delta, \gamma)$-approximate occurrence of $P$ in $S$.

		(1) $1 \leq l_{1}<l_{2}<\cdots<l_{m}\leq n$ and $min_{j} \leq l_{j+1}-l_{j}-1 \leq max_{j}$  $(1 \leq j \leq m-1)$.
		
		(2) (local constraint) $\max_{j=1}^{m} D_{\delta}(s_{l_{j}}, p_{j}) \leq \delta$.
		
		(3) (global constraint) $D_{\gamma}(s_{l_{1}}s_{l_{2}} \cdots s_{l_{m}}, p_{1}p_{2} \cdots p_{m})=\sum_{j=1}^{m}|s_{l_{j}}-p_{j}| \leq \gamma$.
		
	\end{definition}
	
	\begin{definition}
		$L=$ $<l_{1}, l_{2}, \cdots, l_{m}>$ and $L^{\prime}=$ $<l_{1}^{\prime}, l_{2}^{\prime}, \cdots, l_{m}^{\prime}>$ are any two occurrences of pattern $P$ in sequence $S$. If for all $j$ $(1 \leq j \leq m)$ $l \neq l_{j}^{\prime}$, then $L$ and $L^{\prime}$ are two nonoverlapping occurrences.
	\end{definition}
	
	\begin{definition}
		The NDP problem is to find the maximum $(\delta
		, \gamma)$-approximate occurrences of pattern $P$ in sequence $S$, where any two occurrences are nonoverlapping.
	\end{definition} 
	
\begin{example}
 We have a sequence $S=s_{1}s_{2}s_{3}s_{4}s_{5}=$ acaba, a pattern $P=p_{1}[min_{1}$, $max_{1}]p_{2}[min_{2}, max_{2}]$ $p_{3}=$ a$[0,1]$b$[0,2]$a, a local threshold $\delta=1$, and a global threshold $\gamma=1$. According to Figure \ref{occurrences}, we know that there are four $(\delta, \gamma)$-approximate occurrences of $P$ in $S$ under no special condition, which are $<$1, 2, 3$>$, $<$1, 3, 5$>$, $<$1, 4, 5$>$, and $<$3, 4, 5$>$. The maximum nonoverlapping $(\delta, \gamma)$-approximate occurrences are  $\{<$1, 2, 3$>$, $<$3, 4, 5$>\}$. 
\end{example}



	\section{Related Work}	\label{related_work}


Pattern matching can be divided into exact \cite{Wu2017Strict,Chen2014Bit} and approximate pattern matching \cite{Hu2015GFilter}. Exact pattern matching requires that the pattern and subsequence are the same, which limits the flexibility of matching, while approximate pattern matching allows differences between the pattern and subsequence, and can discover more useful information. For instance, Hu et al. \cite{Hu2015GFilter} implemented a string similarity search through a gram filter. Chen et al. \cite{Chen2018On} discovered all subsequences with $k$ mismatches using an indexing mechanism. The Hamming distance \cite{Wu2018NETASPNO,Wu2016Approximate} is commonly applied in approximate pattern matching. However, it only reflects the number of different characters, and ignores the distances between characters. Motivated by this, Zhang et al. \cite{Zhang2017On} developed threshold-based approximate pattern matching, in which the maximal distance between characters does not exceed $\delta$. Fredriksson and Grabowski \cite{Clifford2005} added a threshold $\gamma$, meaning that the sum of the distances between the characters does not exceed $\gamma$, which can obtain a better matching effect. Compared with traditional wildcard pattern matching, the pattern matching with gap constraints is both more flexible and difficult.  An illustrative example is shown as follows.
	
\begin{example}
 Suppose we have a sequence $S=s_{1}s_{2}s_{3}s_{4}s_{5}=$ acaba, a pattern $P=p_{1}[min_{1}$, $max_{1}]p_{2}[min_{2}, max_{2}]p_{3}=$ a$[0,1]$b$[0,2]$a, a local threshold $\delta=1$ and global threshold $\gamma=1$.	Figure \ref{occurrences} shows all $(\delta, \gamma)$-approximate occurrences of pattern $P$ in sequence $S$. 

	\begin{figure} 
		\centering
		\includegraphics[width=0.75\textwidth]{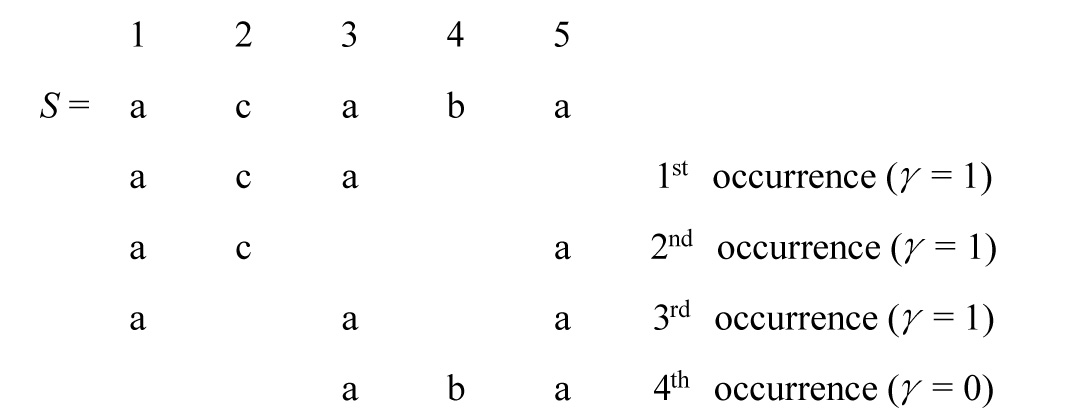}
		\caption{All $(\delta, \gamma)$-approximate occurrences of  $P = a[0,1]b[0,2]a$ in  $S$ with $(\delta=1, \gamma=1)$
		}
		\label{occurrences} 	
	\end{figure}

The subsequence $s_{1}s_{2}s_{3}$ is abbreviated as $<$1, 2, 3$>$, and the other three occurrences are $<$1, 2, 5$>$, $<$1, 3, 5$>$, and $<$3, 4, 5$>$. The reasons are as follows. Let us consider $<$3, 4, 5$>$ at first. Since $s_{3}=p_{1}=$ a, we know that $D_{\delta}(s_{3}, p_{1})=0$. Similarly, $D_{\delta}(s_{4}, p_{2})=D_{\delta}(s_{5}, p_{3})=0$, Therefore, $D_{\delta}(s_{3}s_{4}s_{5}, p_{1}p_{2}p_{3})=0$.  Hence, $<$3, 4, 5$>$ is an exact match which is a special case of $(\delta, \gamma)$-approximate pattern matching.
Now, let us take $<$1, 2, 3$>$ as another example. Since we know that $D_{\delta}(s_{2}, p_{2})=|\mathrm{c}-\mathrm{b}|=1 \leq \delta$, and $D_{\delta}(s_{1}, p_{1})=D_{\delta}(s_{3}, p_{3})=0$, we also know that $D_{\gamma}(s_{1}s_{2}s_{3}, p_{1}p_{2}p_{3})=1 \leq \gamma$. Thus, $<$1, 2, 3$>$ is a legal $(\delta, \gamma)$-approximate occurrence with $(\delta=1, \gamma=1)$. Similarly, $<$1, 2, 5$>$ and $<$1, 3, 5$>$ are two legal $(\delta, \gamma)$-approximate occurrences with $(\delta=1, \gamma=1)$. However, $<$2, 4, 5$>$ and $<$1, 3, 4$>$ are illegal $(\delta, \gamma)$-approximate occurrences with $(\delta=1, \gamma=1)$, since $D_{\delta}(s_{2}, p_{1})=2>\delta$ and $D_{\gamma}(s_{1}s_{3}s_{4}, p_{1}p_{2}p_{3})=2>\gamma$, respectively.
\end{example}

	From the perspective of description an occurrence, there are loose  pattern matching and strict pattern matching. Loose pattern matching uses the last position in the sequence to express an occurrence \cite{Clifford2005}, while strict pattern matching uses a set of positions to express. Under loose pattern matching \cite{Clifford2005}, there are only two $(\delta, \gamma)$-approximate occurrences in Example 1, $<$3$>$ and $<$5$>$, while Example 1 can be seen as strict pattern matching. Obviously, compared with loose pattern matching, strict pattern matching emphasizes the details of matching.
	
	There are three kinds of conditions in strict pattern matching, no special condition \cite{Wu2016Approximate}, the one-off condition \cite{Wu2021eswa}, and the nonoverlapping condition \cite{Wu2018NETASPNO}. Example 3 can be seen as  no special condition, since the condition has no restrictions on characters, which result in the number of occurrences grows exponentially. The occurrence under the one-off condition may be any of the four occurrences, since any character of the sequence can only be used once. For instance, if we select $<$1, 2, 3$>$, then $<$3, 4, 5$>$ cannot be selected since this means that $s_{3}$ would be used twice. However, $<$1, 2, 3$>$ and $<$3, 4, 5$>$ are two occurrences under the nonoverlapping condition. Although $s_{3}$ is used twice, it matches with $p_{3}$ and $p_{1}$, respectively. We can clearly see that the nonoverlapping condition can simplify the occurrences without excluding useful information, and therefore reduces the limitations of matching.

{Although nonoverlapping pattern matching has been investigated in \cite{Wu2017Strict}, it is an exact pattern matching approach which does not allow noise. To the best of our knowledge, the researches in \cite{Clifford2005,Wu2018NETASPNO,wu2020APIN} are the most closest studies. The drawbacks are as follows.}

{(1) Although the scheme in \cite{Clifford2005} focused on approximate pattern matching with the $(\delta, \gamma)$-distance, it is a kind of loose pattern matching which adopts the end position in the sequence to represent an occurrence. Therefore, it is difficult to represent an occurrence clearly. Our method is a kind of strict pattern matching which uses a set of positions in the sequence to express an occurrence. Thus, our method can represent an occurrence clearly. Hence, our method is a more practical approach.}

{(2) Although nonoverlapping approximate pattern matching has been investigated in \cite{Wu2018NETASPNO}, it adopts the Hamming distance to measure the distance between pattern and occurrence. But the Hamming distance only reflects the number of different characters, and cannot evaluate the distances between characters. However, our method employs $(\delta, \gamma)$-distance to effectively measure the local and global distances. Therefore, our method is more challenging. }

{(3) Although $(\delta, \gamma)$-distance pattern mining under no special condition was studied in \cite {wu2020APIN}, this method can find all approximate occurrences, which contain many redundant occurrences. As Example 3 shown, there are four $(\delta, \gamma)$-approximate occurrences according to no special condition \cite {wu2020APIN}. However, Example 3 has two nonoverlapping $(\delta, \gamma)$-approximate occurrences, which can effectively reduce redundant occurrences.}

{Our method can be used in repetitive sequential pattern mining \cite{dong2018tcyb, wu2021tmis} and sequence classification \cite{Wu2021tcyb,wu2019cc}. For example, the repetitive sequential pattern mining task is to discover frequent patterns in sequences whose supports are no less than the given threshold \cite{ Truong2020EHAUSM,Fournier2020Mining}. To calculate the support of a candidate pattern is a pattern matching task. Thus, many gap constraints sequential pattern mining methods employed pattern matching strategies to discover the interesting patterns \cite{li2021apin,wu2021kbs}. Moreover, contrast pattern mining was investigated to extract the features for classification task \cite{Wu2021tcyb,wu2019cc}. Hence, our method can also be applied in approximate repetitive sequential pattern mining and time series classification. }


	\section{Local Approximate Nettree and Algorithms}
	\label{algorithms}
	Section 4.1 introduces the local approximate Nettree. Section 4.2 explains how to transform the NDP problem into a local approximate Nettree and proposes the NetNDP algorithm. 
	
	\subsection{Local Approximate Nettree}
	\label{local_approximate_nettree}
	\begin{definition}
		A Nettree \cite{Wu2016Approximate} is an extended tree structure with multiple roots and multiple parents. In a Nettree, nodes with the same ID can be found at different levels. To describe a node clearly, a node with ID $i$ at the $j^{\rm th}$ level is denoted by $n_{j}^{i}$.
	\end{definition} 
	
{A standard Nettree is shown in Figure \ref{anettree}. In Figure 3, $r_1,\cdots, r_m$ are the roots of the Nettree. $T_1, T_2, \cdots, T_n$ are the subNettrees. Each subNettree can have many parents.}
 \begin{figure} 
	 	\centering
	 	\includegraphics[width=0.25\textwidth]{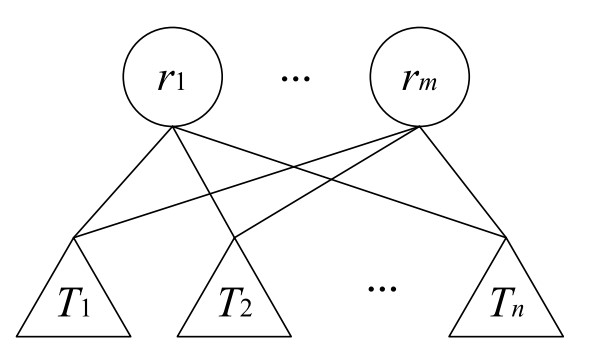}	\\ 	
	 	
	 	\caption{{A standard Nettree}}
	 	\label{anettree} 	
	 \end{figure}


	\begin{definition}
		In a Nettree, a leaf at the $m^{\rm th}$ level is called an absolute leaf.
	\end{definition} 
	
	\begin{definition}
		In a Nettree, a path from $n_{j}^{i}$ to a root is called a root path of $n_{j}^{i}$. A path from a absolute leaf to a root is called a root-leaf path, or a full path, and is denoted by $<n_{1}^{i_{1}}, n_{2}^{i_{2}}, \cdots, n_{m}^{i_{m}}>$.
	\end{definition} 
	
	\begin{lemma}
		An occurrence of pattern $P$ in sequence $S$ can be represented by a full path.
	\end{lemma}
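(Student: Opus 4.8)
The plan is to exhibit an explicit correspondence between the three defining conditions of a $(\delta, \gamma)$-approximate occurrence (Definition 6) and the structural ingredients of the Nettree, and then simply read off the path. First I would fix the natural construction of the Nettree associated with the NDP instance: a node $n_j^i$ is created exactly when position $i$ of $S$ can play the role of the $j$-th pattern character under the local threshold, that is, when $D_{\delta}(s_i, p_j)=|s_i-p_j|\leq\delta$; and a parent--child edge from $n_j^i$ to $n_{j+1}^{i'}$ is created exactly when the gap between these positions is admissible, that is, when $min_j\leq i'-i-1\leq max_j$. Under this construction every level-$1$ node is a root (it has no parent, since there is no level $0$) and every level-$m$ node is a leaf, hence an absolute leaf.

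Given an arbitrary occurrence $<l_1, l_2, \ldots, l_m>$, I would map it to the node list $<n_1^{l_1}, n_2^{l_2}, \ldots, n_m^{l_m}>$ and verify that this list is a full path. The local constraint (condition (2)) guarantees $D_{\delta}(s_{l_j}, p_j)\leq\delta$ for every $j$, so each $n_j^{l_j}$ is a genuine node of the Nettree. The gap part of condition (1) guarantees $min_j\leq l_{j+1}-l_j-1\leq max_j$ for $1\leq j\leq m-1$, so each consecutive pair $(n_j^{l_j}, n_{j+1}^{l_{j+1}})$ is joined by an edge. Because $n_1^{l_1}$ sits at level $1$ it is a root and $n_m^{l_m}$ sits at level $m$ it is an absolute leaf; combined with the edges just established, the node list is therefore a root-leaf path, i.e.\ a full path in the sense of the preceding definition. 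The ordering $l_1<l_2<\cdots<l_m$ from condition (1) merely confirms that the node IDs strictly increase along the path, so distinct levels carry distinct positions and the path is well formed.

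The step I expect to require the most care --- and the reason the statement is phrased as ``can be represented by'' rather than as an equivalence --- is that only conditions (1) and (2) are encoded in the Nettree's nodes and edges. The global constraint (condition (3)), $\sum_{j=1}^{m}|s_{l_j}-p_j|\leq\gamma$, is invisible to the local structure: a full path always witnesses the local and gap requirements, but not every full path corresponds to a genuine $(\delta, \gamma)$-approximate occurrence. Hence I would claim only the direction actually needed, namely that each occurrence yields a full path, and I would explicitly flag that the converse fails in general. This gap is precisely what justifies calling the structure a \emph{local} approximate Nettree, and it is what the Minimal Root Distance is later introduced to repair when the global threshold must be tested.
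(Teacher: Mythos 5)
Your proof is correct, and it is more self-contained than the paper's. The paper disposes of this lemma in a single line by citation: it asserts that an occurrence $<i_{1}, i_{2}, \cdots, i_{m}>$ corresponds to a full path $<n_{1}^{i_{1}}, n_{2}^{i_{2}}, \cdots, n_{m}^{i_{m}}>$, appealing to the earlier Nettree work \cite{Wu2016Approximate}, and concludes that searching for occurrences reduces to searching for full paths. You instead reconstruct the correspondence from scratch: nodes exist exactly when the local constraint $D_{\delta}(s_{i}, p_{j}) \leq \delta$ holds, parent--child edges exist exactly when the gap constraint holds, level-$1$ nodes are roots and level-$m$ nodes are absolute leaves, so conditions (1) and (2) of the occurrence definition translate verbatim into a root-leaf path. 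The mathematical content is the same natural level-by-level correspondence, but your version makes explicit which parts of the occurrence definition the Nettree structure actually encodes. In particular, your closing observation --- that the global constraint (condition (3)) is invisible to the node/edge structure, so the lemma is genuinely one-directional --- is exactly the asymmetry the paper relies on later: its Example 6 exhibits a full path $<2, 3, 6, 8>$ whose $\gamma$-distance exceeds $\gamma$, and the Minimal Root Distance machinery of Theorems 1--3 exists precisely to repair this. The paper never states that caveat at the point of the lemma, so flagging it is a genuine improvement in exposition rather than a deviation from the intended argument.
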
 
\begin{proof}
 We know that   an occurrence can be transformed into a full path in a Nettree \cite {Wu2016Approximate}, i.e. a full path $<n_{1}^{i_{1}}, n_{2}^{i_{2}}, \ldots, n_{m}^{i_{m}}>$ corresponds to an occurrence $<i_{1}, i_{2}, \cdots, i_{m}>$. Thus, a search for the occurrences of pattern $P$ in sequence $S$ can be transformed into a search for the full paths in a Nettree.
\end{proof}
	
	\begin{definition}
		In a Nettree, $n_{j}^{i}$ can reach multiple parents, of which the max parent is the rightmost parent of $n_{j}^{i}$. In a similar way, we can reach the rightmost absolute leaf.
	\end{definition} 
	
	To effectively solve the problem of nonoverlapping approximate pattern matching with the $(\delta, \gamma)$-distance, we propose the concept of a local approximate Nettree.
	
	\begin{definition} \label {lan}
		Local approximate Nettree has two features  different from a Nettree.	
	
		(1) Each node $n_{j}^{i}$ calculates and stores its $D_{\delta}(s_{i}, p_{j})(D_{\delta}(s_{i}, p_{j}) \leq \delta)$.
		
		(2) Each node $n_{j}^{i}$ calculates and stores its MRD, denoted by $M_{r}(n_{j}^{i})$, which presents the shortest $\gamma$-distance from $n_{j}^{i}$ to its roots.
	\end{definition} 
	
	\begin{theorem}
		If $n_{j-1}^{r^{1}}, n_{j-1}^{r^{2}}, \cdots,$ and $n_{j-1}^{r^{t}}$ are parents that satisfy the gap constraint  $[min_{j-1}, max_{j-1}]$ with $n_{j}^{i}$, then $M_{r}(n_{j}^{i})$ can be calculated as follows.
		\begin{small} 
			\begin{equation*}
			M_{r}(n_{j}^{i}) =\left\{
			\begin{array}{crr}
			D_{\delta}(s_{i}, p_{1})=|s_{i}-p_{1}|,\quad j=1\\
			\min (M_{r}(n_{j-1}^{r^{1}}), M_{r}(n_{j-1}^{r^{2}}), \cdots, M_{r}(n_{j-1}^{r^{t}}))+D_{\delta}(s_{i}, p_{j}),\quad 2 \leq j \leq m 
			\end{array}
			\right.
			\end{equation*}
		\end{small}
		where $t$ denotes the number of parents of $n_{j}^{i}$.
	\end{theorem}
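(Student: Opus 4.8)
The plan is to prove the recurrence by induction on the level $j$, exploiting the additivity of the $\gamma$-distance along a root path. Recall from the definition of the $\gamma$-distance together with the full-path correspondence in Lemma 1 that a root path $<n_1^{i_1}, n_2^{i_2}, \ldots, n_j^{i_j}>$ ending at $n_j^i$ (so $i_j = i$) carries a total $\gamma$-distance $\sum_{k=1}^{j} D_{\delta}(s_{i_k}, p_k) = \sum_{k=1}^{j}|s_{i_k}-p_k|$, because each edge of the path contributes the local distance $D_{\delta}(s_{i_k}, p_k)$ of the node it enters. By the definition of MRD, $M_{r}(n_{j}^{i})$ is the minimum of this sum taken over all root paths that terminate at $n_{j}^{i}$. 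The whole argument is then to show that this minimum obeys the stated recurrence.

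For the base case $j=1$, the node $n_1^i$ sits at the first level and is therefore a root. Its only root path is the trivial one consisting of $n_1^i$ alone, whose $\gamma$-distance is simply $D_{\delta}(s_i, p_1) = |s_i - p_1|$. This establishes the first line of the formula at once.

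For the inductive step with $2 \leq j \leq m$, I would first observe that every root path ending at $n_j^i$ reaches $n_j^i$ from exactly one of its parents at level $j-1$, and by the hypothesis these parents are precisely $n_{j-1}^{r^{1}}, \ldots, n_{j-1}^{r^{t}}$, namely the nodes satisfying the gap constraint $[min_{j-1}, max_{j-1}]$ with $n_j^i$. Hence any such root path decomposes uniquely as a root path to some parent $n_{j-1}^{r^{k}}$ followed by the single edge into $n_j^i$, and its $\gamma$-distance equals the $\gamma$-distance of the parent's root path plus the fixed term $D_{\delta}(s_i, p_j)$. The central step is a cut-and-paste optimal-substructure argument: since the term $D_{\delta}(s_i, p_j)$ is independent of how the parent is reached, minimizing over all root paths to $n_j^i$ factors into first minimizing over root paths to each individual parent (which yields $M_{r}(n_{j-1}^{r^{k}})$ by the inductive hypothesis) and then minimizing over the choice of parent. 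This gives $M_{r}(n_{j}^{i}) = \min_{1 \leq k \leq t} M_{r}(n_{j-1}^{r^{k}}) + D_{\delta}(s_i, p_j)$, which is exactly the second line of the claimed formula.

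The step I expect to demand the most care is justifying that the parent's contribution may be minimized independently of the terminal node. This rests on the observation that both the gap constraint and the local distance $D_{\delta}(s_i, p_j)$ are determined solely by $n_j^i$ and its parent relationship, so no root path to a parent is ever excluded or penalized by the act of extending it to $n_j^i$. Once this independence is stated explicitly, the minimization splits cleanly and the induction closes.
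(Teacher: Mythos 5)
Your proof is correct and follows essentially the same approach as the paper's: both decompose a root path of $n_{j}^{i}$ into a root path of one of its parents plus the fixed terminal contribution $D_{\delta}(s_{i}, p_{j})$, and then minimize over parents. You merely make explicit (via induction and the optimal-substructure argument) what the paper states informally.
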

\begin{proof}
 According to Definition \ref{lan}, $M_{r}(n_{j}^{i})$ reprents the shortest $\gamma$-distance from $n_{j}^{i}$ to its roots. If $j=1$, then $n_{j}^{i}$ is a root, and $M_{r}(n_{j}^{i})$ is the $\gamma$-distance from $n_{1}^{i}$ to itself, i.e. $D_{\delta}(s_{i}, p_{1})=|s_{i}-p_{1}|$. If $2 \leq j \leq m$, then the $\gamma$-distance from $n_{j}^{i}$ to its roots is the sum of $D_{\delta}(s_{i}, p_{j})$ and the $\gamma$-distance from its parents to its roots. Since $D_{\delta}(s_{i}, p_{j})$ is a fixed value, we need to find the shortest $\gamma$-distance from the parents to the roots. We can see that the $\gamma$-distance from the parent with the minimal MRD to the roots is the shortest, and the sum of the $\gamma$-distance and $D_{\delta}(s_{i}, p_{j})$ is $M_{r}(n_{j}^{i})$.
\end{proof}

To illustrate the above concepts, an example is  shown as follows.

\begin{example} \label{exmp4}
Suppose we have a sequence $S=s_{1}s_{2}s_{3}s_{4}s_{5}s_{6}s_{7}s_{8}s_{9}=$ baabcbbab, a pattern $P=p_{1}[min_{1}, max_{1}]p_{2} [min_{2},$ $ max_{2}]p_{3}[min_{3}, max_{3}]p_{4}=$ b$[0,1]$a$[0,2]$b $[0,2]$b, a local threshold $\delta$=1, and a global threshold $\gamma$=1. The corresponding local approximate Nettree is shown in Figure \ref{nettree1}.
	 
	 \begin{figure} 
	 	\centering
	 	\includegraphics[width=0.35\textwidth]{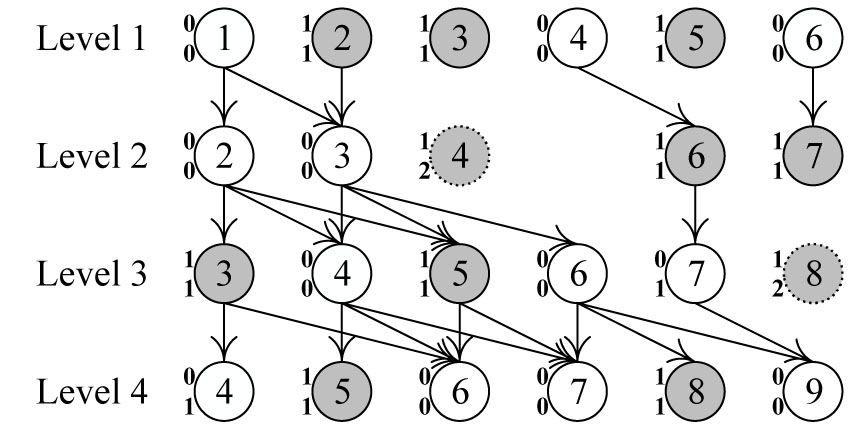}	\\ 	
	 	Note: The white node $n_{j}^{i}$ indicates that $s_{i}$ matches $p_{j}$ exactly. The grey node $n_{j}^{i}$ indicates that $s_{i}$ matches $p_{j}$ approximately. There are two numbers outside each node: the top left and bottom left numbers which are $D_{\delta}(s_{i}, p_{j})$ and $M_{r}(n_{j}^{i})$, respectively.
	 	
	 	\caption{A local approximate Nettree}
	 	\label{nettree1} 	
	 \end{figure}
	
	In Figure  \ref{nettree1}, nodes with the same ID occur at different levels, such as $n_{1}^{3}$, $n_{2}^{3}$, and $n_{3}^{3}$. $<n_{1}^{4}, n_{2}^{6}, n_{3}^{7}, n_{4}^{9}>$ is a full path, and its corresponding occurrence is $<$4, 6, 7, 9$>$. $s_{3}$ matches $p_{2}$ exactly, and thus $n_{2}^{3}$ is a white node. $s_{3}$ matches $p_{3}$ approximately, and thus $n_{3}^{3}$ is a grey node. Each node $n_{j}^{i}$ in the local approximate Nettree satisfies $D_{\delta}(s_{i}, p_{j}) \leq 1$, and each node $n_{1}^{i}$ satisfies $M_{\delta}(n_{1}^{i})=D_{\delta}(s_{i}, p_{1})$. Since $M_{r}(n_{4}^{9})=0$, the shortest $\gamma$-distance of $n_{4}^{9}$ to its roots is zero, and the corresponding path is therefore $<n_{1}^{1}, n_{2}^{3}, n_{3}^{6}, n_{4}^{9}>$. For path $<n_{1}^{2}, n_{2}^{3}, n_{3}^{6}, n_{4}^{8}>$, $n_{1}^{2}$ is the rightmost parent of $n_{2}^{3}$.
\end{example}

	\subsection{NetNDP}	\label{sec:NetNDP}
\subsubsection{Creating the Nettree}\label{sec:the_creating_algorithm}
	There are two key issues in creating a local approximate Nettree: creating nodes and parent-child relationships. We propose Theorem 2 to reduce the invalid nodes, and explore Theorem 3 to reduce parent-child relationships.

	\begin{theorem}
		If $M_{r}(n_{j}^{i})>\gamma$, then $n_{j}^{i}$  can be deleted.
	\end{theorem}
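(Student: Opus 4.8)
The plan is to show that $M_r(n_j^i)$ is a \emph{lower bound} on the global $\gamma$-distance of every occurrence whose full path passes through $n_j^i$. Once this is established, a node with $M_r(n_j^i) > \gamma$ cannot lie on any legal full path, so it contributes to no $(\delta,\gamma)$-approximate occurrence and may be safely removed together with its incident edges.

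First I would recall, via Lemma 1, that every $(\delta,\gamma)$-approximate occurrence corresponds to a full path $<n_1^{i_1}, n_2^{i_2}, \cdots, n_m^{i_m}>$, whose global constraint (condition (3) of Definition 5) reads $D_\gamma = \sum_{k=1}^{m} D_\delta(s_{i_k}, p_k) \leq \gamma$. Assuming such a path passes through $n_j^i$ (that is, $i_j = i$ at level $j$), I would split the sum at level $j$ into a root-side part $\sum_{k=1}^{j} D_\delta(s_{i_k}, p_k)$ and a leaf-side part $\sum_{k=j+1}^{m} D_\delta(s_{i_k}, p_k)$.

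Next I would observe two facts. The root-side part is exactly the $\gamma$-distance of the particular root path of $n_j^i$ that is used by this full path, and since $M_r(n_j^i)$ is by Definition \ref{lan} the \emph{shortest} $\gamma$-distance from $n_j^i$ to its roots, the root-side part is at least $M_r(n_j^i)$. The leaf-side part is a sum of $\delta$-distances $D_\delta(s_{i_k},p_k)=|s_{i_k}-p_k|$, each nonnegative, so it cannot lower the total. Combining the two gives $D_\gamma \geq M_r(n_j^i)$ for \emph{any} full path through $n_j^i$. Hence if $M_r(n_j^i) > \gamma$, then $D_\gamma > \gamma$ on every such path, no full path through $n_j^i$ can be a legal occurrence, and deleting $n_j^i$ removes no valid occurrence. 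The base case $j=1$ is consistent: then $n_1^i$ is a root, the root-side part collapses to $D_\delta(s_i,p_1)=M_r(n_1^i)$, and the same bound holds. The value itself is available from the recursion of Theorem 1, which computes $M_r$ precisely as this minimum.

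The main obstacle I anticipate is not the inequality but the \emph{safety} of the deletion: I must argue that \emph{all} full paths through $n_j^i$ are eliminated, not just the cheapest one. This is exactly why the lower-bound direction is the right tool, combining $M_r$ as a minimum over root paths with the nonnegativity of the $\delta$-distances on the leaf side; any route that could salvage the node would have to route through a cheaper ancestor chain than $M_r(n_j^i)$ or accrue negative distance below $n_j^i$, and neither is possible.
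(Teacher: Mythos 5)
Your proof is correct and follows essentially the same route as the paper's: the paper's appeal to the ``monotonicity'' of the $\gamma$-distance is exactly your observation that the leaf-side terms $D_\delta(s_{i_k},p_k)$ are nonnegative, and both arguments combine this with the minimality of $M_r(n_j^i)$ over root paths to conclude that every full path through the node violates the global constraint. Your version merely makes the paper's reasoning more explicit (the prefix/suffix split, the link to Lemma 1, and the $j=1$ base case), but the underlying idea is identical.
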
 
\begin{proof}
We know that $D_{\gamma}(s_{l_{1}}s_{l_{2}} \cdots s_{l_{j-1}}s_{i}, p_{1}p_{2}\cdots p_{j-1}p_{j}) \geq D_{\gamma}(s_{l_{1}}s_{l_{2}} \cdots s_{l_{j-1}},$ $ p_{1}p_{2}\cdots p_{j-1})$. Thus, the $\gamma$-distance is monotonic. Suppose $M_{r}(n_{j}^{i})>\gamma$, i.e. the shortest $\gamma$-distance from $n_{j}^{i}$ to its roots is greater than $\gamma$. Since the $\gamma$-distance is monotonic, the $\gamma$-distance of all root-leaf paths via $n_{j}^{i}$ is greater than $\gamma$. Therefore, there is no path via $n_{j}^{i}$ that satisfies the global constraint. Hence, if $M_{r}(n_{j}^{i})>\gamma$, $n_{j}^{i}$ can be deleted.
\end{proof}
	
	\begin{theorem}
		If $n_{j-1}^{r^{q}}$ satisfies the gap constraint $[min_{j-1}, max_{j-1}]$ with $n_{j}^{i}$, and $M_{r}(n_{j-1}^{r^{q}})+D_{\delta}(s_{i}, p_{j}) > \gamma$, then a parent-child relationship cannot be established between $n_{j-1}^{r^q}$ and $n_{j}^{i}$, otherwise the parent-child relationship is created.
	\end{theorem}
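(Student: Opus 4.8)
The plan is to show that the number $M_{r}(n_{j-1}^{r^{q}})+D_{\delta}(s_{i}, p_{j})$ is a lower bound on the $\gamma$-distance of \emph{every} root-leaf path that is forced to traverse the candidate edge $(n_{j-1}^{r^{q}}, n_{j}^{i})$, so that whenever this quantity exceeds $\gamma$ no legal occurrence can use the edge. The two ingredients I would lean on are both already available: Theorem 1, which states that $M_{r}(n_{j-1}^{r^{q}})$ is the \emph{shortest} $\gamma$-distance from $n_{j-1}^{r^{q}}$ to a root, and the monotonicity of the $\gamma$-distance recorded inside the proof of Theorem 2, namely that extending a path by one node adds the nonnegative term $D_{\delta}(\cdot,\cdot)$ and therefore never decreases the accumulated distance. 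Note that the gap constraint $[min_{j-1}, max_{j-1}]$ is assumed to hold, so the only thing in question is the global constraint.

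First I would fix an arbitrary root-leaf path $\pi$ that passes through the edge $(n_{j-1}^{r^{q}}, n_{j}^{i})$ and split its total $\gamma$-distance into three parts: the portion running from a root down to $n_{j-1}^{r^{q}}$, the single local term $D_{\delta}(s_{i}, p_{j})$ contributed at $n_{j}^{i}$, and the remaining portion running from $n_{j}^{i}$ down to its absolute leaf. Since $M_{r}(n_{j-1}^{r^{q}})$ is by definition the minimum over all root paths ending at $n_{j-1}^{r^{q}}$, the first portion is at least $M_{r}(n_{j-1}^{r^{q}})$, and by monotonicity the third portion is nonnegative. Adding the three parts yields that the $\gamma$-distance of $\pi$ is at least $M_{r}(n_{j-1}^{r^{q}})+D_{\delta}(s_{i}, p_{j})$.

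Combining this bound with the hypothesis $M_{r}(n_{j-1}^{r^{q}})+D_{\delta}(s_{i}, p_{j})>\gamma$ then shows that every root-leaf path through the edge has $\gamma$-distance strictly greater than $\gamma$, and hence violates the global constraint (condition (3) of the occurrence definition). Consequently the edge can never lie on a full path corresponding to a legal $(\delta,\gamma)$-approximate occurrence, so building it serves no purpose and it may be suppressed. For the converse claim, when $M_{r}(n_{j-1}^{r^{q}})+D_{\delta}(s_{i}, p_{j})\le\gamma$ I would exhibit the particular root path through $n_{j-1}^{r^{q}}$ that realizes $M_{r}(n_{j-1}^{r^{q}})$; extending it across the edge to $n_{j}^{i}$ produces a partial path whose accumulated distance is exactly $M_{r}(n_{j-1}^{r^{q}})+D_{\delta}(s_{i}, p_{j})\le\gamma$, so the edge remains a genuine candidate for some valid occurrence and must therefore be created.

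Given Theorems 1 and 2, this argument is largely routine, and the one point I would treat with care is the direction of the inequality in the first portion: because $M_{r}$ is a \emph{minimum}, an arbitrary root path through $n_{j-1}^{r^{q}}$ is bounded \emph{below} by $M_{r}(n_{j-1}^{r^{q}})$, and it is exactly this ``at least'' estimate, together with the nonnegativity of the downstream local terms, that makes the pruning correct rather than over-aggressive; reversing it would wrongly discard edges that could still participate in valid occurrences.
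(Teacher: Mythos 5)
Your proof is correct and follows essentially the same route as the paper's: both rest on the minimality of $M_{r}(n_{j-1}^{r^{q}})$ to bound the root-path portion from below and on the monotonicity (nonnegativity) of the accumulated $\gamma$-distance to handle the downstream portion, concluding that every root-leaf path through the candidate edge exceeds $\gamma$. Your explicit three-part decomposition and your treatment of the ``otherwise'' clause (which the paper's proof leaves implicit) are just slightly more formal renderings of the same argument.
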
 
\begin{proof}
 Suppose $M_{r}(n_{j-1}^{r^{q}})+D_{\delta}(s_{i}, p_{j}) > \gamma$, i.e. the sum of $D_{\delta}(s_{i}, p_{j})$ and the shortest $\gamma$-distance from $n_{j-1}^{r^{q}}$ to its roots is greater than $\gamma$, meaning that the sum of $D_{\delta}(s_{i}, p_{j})$ and the $\gamma$-distance from $n_{j-1}^{r^{q}}$ to any root is greater than $\gamma$. Since the $\gamma$-distance is monotonic, the $\gamma$-distance for all root-leaf paths via $n_{j-1}^{r^{q}}$ and $n_{j}^{i}$ is greater than $\gamma$. Thus, there is no path via $n_{j-1}^{r^{q}}$ and $n_{j}^{i}$ that satisfies the global constraint.  Hence, the parent-child relationship between node $n_{j-1}^{r^{q}}$ and node $n_{j}^{i}$ cannot be established. 
\end{proof}

\begin{example}
 To illustrate the principles of Theorems  1, 2, and 3, we apply the same conditions as in Example \ref {exmp4}.
	
	We first deal with $s_{1}$. Since $D_{\delta}(s_{1}, p_{1})=|\mathrm{b}-\mathrm{b}|=0 \leq \delta$, we create $n_{1}^{1}$ at the first level. According to Theorem 1, we know that $M_{r}(n_{1}^{1})=D_{\delta}(s_{1}, p_{1})=0$. 

We then turn to $s_{2}$. Since $D_{\delta}(s_{2}, p_{1})=|\mathrm{a}-\mathrm{b}|=1 \leq \delta$, we create $n_{1}^{2}$ at the first level, and $M_{r}(n_{1}^{2})=D_{\delta}(s_{2}, p_{1})=1$. Since $D_{\delta}(s_{2}, p_{2})=|\mathrm{a}-\mathrm{a}|=0 \leq \delta$, we create $n_{2}^{2}$ at the second level. There is a parent $n_{1}^{1}$ that satisfies the gap constraint [0, 1] with $n_{2}^{2}$. From Theorem 1, we know that $M_{r}(n_{2}^{2})=M_{r}(n_{1}^{1})+D_{\delta}(s_{2}, p_{2})=0 \leq \gamma$. From Theorem 3, we establish a parent-child relationship between $n_{1}^{1}$ and $n_{2}^{2}$.

We now deal with $s_{3}$. Since $D_{\delta}(s_{3}, p_{1})=|\mathrm{a}-\mathrm{b}|=1 \leq \delta$, we create $n_{1}^{3}$ at the first level, and $M_{r}(n_{1}^{3})=D_{\delta}(s_{3}, p_{1})=1$. Similarly, we create $n_{2}^{3}$ at the second level, since $D_{\delta}(s_{3}, p_{2})=|\mathrm{a}-\mathrm{a}|=0 \leq \delta$. Both $n_{1}^{1}$ and $n_{1}^{2}$ satisfy the gap constraint [0, 1] with $n_{2}^{3}$. Thus, according to Theorem 1, $M_{r}(n_{2}^{3})=\min(M_{r}(n_{1}^{1}), M_{r}(n_{1}^{2}))+D_{\delta}(s_{3}, p_{2})=0 \leq \gamma$, and according to Theorem 3, $n_{2}^{3}$ establishes a parent-child relationship with both $n_{1}^{1}$ and $n_{1}^{2}$. 

We create the rest of the nodes in a similar way. Since $M_{r}(n_{2}^{4})=\min(M_{r}(n_{1}^{2}),$ $M_{r}(n_{1}^{3}))+D_{\delta}(s_{4}, p_{2})=2>\gamma$ and $M_{r}(n_{3}^{8})=\min(M_{r}(n_{2}^{6}), M_{r}(n_{2}^{7}))+D_{\delta}(s_{8}, p_{3})=2>\gamma$, we know from Theorem 2 that $n_{2}^{4}$ and $n_{3}^{8}$ can be deleted. Using this method, the local approximate Nettree can be created.
\end{example}
	
	From Example 5, we see that MRD has the following three advantages.

 (i) It allows us to know whether or not a node has root paths that satisfy the global constraint. If $M_{r}(n_{j}^{i})$ is not greater than $\gamma$, then $n_{j}^{i}$ has root paths that satisfy the global constraint. For instance, we can see that $M_{r}(n_{4}^{9})=0 \leq \gamma$, and that a root path that satisfies the global constraint on $n_{4}^{9}$ is $<n_{1}^{1}, n_{2}^{3}, n_{3}^{6}, n_{4}^{9}>$. 

(ii) Some invalid nodes can be pruned, such as $n_{2}^{4}$. 

(iii) We can prune invalid parent-child relationships. For instance, a parent-child relationship cannot be established between $n_{3}^{3}$ and $n_{4}^{5}$.

Algorithm 1, called CreLANtree, is used to create the local approximate Nettree for a sequence $S$, a pattern $P$, a local threshold $\delta$, and a global threshold $\gamma$.

	\begin{algorithm}[t] 
		\caption{CreLANtree} 
		\hspace*{0.02in} {\bf Input:} 
		sequence $S$, pattern $P$, local threshold $\delta$, global threshold $\gamma$\\
		\hspace*{0.02in} {\bf Output:} 
		$LANtree$
		\begin{algorithmic}[1]
			\For{$i = 1$ to $n$ step 1}
			\If{$D_{\delta}(s_{i}, p_{1}) \leq \delta$}
			\State create $n_{1}^{i}$ and $M_{\delta}(n_{1}^{i}) \leftarrow   D_{\delta}(s_{j}, p_{1})$;
			\EndIf
			\For{$j = 2$ to $m$ step 1}
			\If{$D_{\delta}(s_{i}, p_{j}) \leq \delta$}  
			\State Create $n_{j}^{i}$;
			\State Update $M_{r}(n_{j}^{i})$ according to Theorem 1;
			\If{$M_{r}(n_{j}^{i})>\gamma$} 
			\State Delete $n_{j}^{i}$ according to Theorem 2;
			\Else
			\State Establish parent-child relationships between its parents and  $n_{j}^{i}$ according to Theorem 3;
			\EndIf
			\EndIf
			\EndFor
			\EndFor
			\State return $LANtree$;
		\end{algorithmic}
	\end{algorithm}

	\subsubsection{Searching for nonoverlapping $(\delta, \gamma)$-approximate occurrences}
	\label{sec:Searching_occurrences}
	Section 4.2.1 explains the principle used to create a local approximate Nettree, where the corresponding occurrences of all full paths satisfy the local constraint. In this section, we will introduce the principle of searching for the nonoverlapping occurrences that satisfy the global constraint.
	
	\begin{lemma}
		Let $L_{1}$ and $L_{2}$ be two root-leaf paths that do not involve the same node, the corresponding occurrences of $L_{1}$ and $L_{2}$ are then nonoverlapping. 
	\end{lemma}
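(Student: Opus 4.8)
The plan is to reduce the statement to a direct translation between the node-identity notion for paths and the coordinate-wise condition in the definition of nonoverlapping occurrences (Definition 7). First I would invoke Lemma 1: a full path $L_1=<n_{1}^{a_{1}}, n_{2}^{a_{2}}, \cdots, n_{m}^{a_{m}}>$ corresponds to the occurrence $<a_{1}, a_{2}, \cdots, a_{m}>$, and similarly $L_2=<n_{1}^{b_{1}}, n_{2}^{b_{2}}, \cdots, n_{m}^{b_{m}}>$ corresponds to $<b_{1}, b_{2}, \cdots, b_{m}>$. The key observation to record up front is that the ID of the node occurring at level $j$ of a full path is exactly the $j$-th coordinate of the associated occurrence; this is what makes the whole argument purely combinatorial rather than computational.

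Next I would pin down precisely when two full paths "involve the same node." By Definition 9 a node $n_{j}^{i}$ is identified by the pair consisting of its level $j$ and its ID $i$, so two nodes coincide only if they agree in both. Since a full path passes through exactly one node at each of the levels $1,2,\cdots,m$, the node of $L_1$ at level $j$ is $n_{j}^{a_{j}}$ and the node of $L_2$ at level $k$ is $n_{k}^{b_{k}}$; these can be equal only when $j=k$ and $a_{j}=b_{k}$. Hence the two paths share a node if and only if there is some level $j$ with $a_{j}=b_{j}$. Equivalently, $L_1$ and $L_2$ involve no common node precisely when $a_{j}\neq b_{j}$ for every $j$ with $1\leq j\leq m$.

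I would then compare this with Definition 7: the occurrences $<a_{1}, \cdots, a_{m}>$ and $<b_{1}, \cdots, b_{m}>$ are nonoverlapping exactly when $a_{j}\neq b_{j}$ for all $j$. Since the hypothesis of the lemma (no shared node) has just been shown to be identical to this condition, the corresponding occurrences are nonoverlapping, which completes the argument. The conclusion in fact follows by a chain of equivalences, so the converse direction is available for free should it be needed elsewhere.

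There is no genuinely hard step here; the only point requiring care, and the one I would state explicitly, is that because each node carries its level as part of its identity, a collision between the two paths can occur only at a common level. Overlooking this could mislead one into worrying about a level-$j$ node of $L_1$ coinciding with a level-$k$ node of $L_2$ for $j\neq k$, which is impossible by construction. Once that subtlety is dispatched, the matching of the "no common node" condition with the coordinate-wise inequality of Definition 7 is immediate.
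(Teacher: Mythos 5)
Your proposal is correct and follows essentially the same route as the paper's proof: translate each full path into its occurrence via the path--occurrence correspondence, observe that ``no shared node'' means $a_j \neq b_j$ at every level $j$ (since a node's identity is the pair of level and ID, a collision can only occur at a common level), and conclude by the definition of nonoverlapping occurrences. The paper states the middle equivalence without justification, so your explicit handling of the level-identity subtlety is a slight refinement rather than a different argument; note only that the relevant definitions are numbered 6 (nonoverlapping) and 8 (node notation) in the paper, not 7 and 9.
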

\begin{proof}
Suppose we have $L_{1}=$ $<n_{1}^{a_{1}}, n_{2}^{a_{2}}, \cdots, n_{m}^{a_{m}}>$ and $L_{2}=$ $<n_{1}^{b_{1}}, n_{2}^{b_{2}}, \cdots,$ $n_{m}^{b_{m}}>$. The corresponding occurrences of $L_{1}$ and $L_{2}$ are $<a_{1}, a_{2}, \cdots, a_{m}>$ and $<b_{1}, b_{2}, \cdots, b_{m}>$, respectively. $L_{1}$ and $L_{2}$ do not involve the same node, i.e. for any $j$ $(1 \leq j \leq m), a_{j} \neq b_{j}$. Thus, according to Definition 6, $<a_{1}, a_{2}, \cdots, a_{m}>$ and $<b_{1}, b_{2}, \cdots, b_{m}>$ are two nonoverlapping occurrences.
\end{proof}
	
	In a local approximate tree, when we search for occurrences from an absolute leaf to its roots, we first assess whether or not the rightmost parent satisfies the condition. If not, we will assess the second rightmost parent, until a qualified parent is found. This is known as the rightmost parent strategy.
	
	For nonoverlapping exact pattern matching, NETLAP-Best \cite{Wu2017Strict} adopts the rightmost parent strategy, and iteratively searches for the rightmost occurrence of the max absolute leaf, and deletes both the occurrence and the related invalid nodes. However, this method cannot be employed to solve our problem, since this method is too blind, which will lead to the loss of solution. An illustrative example is shown as follows.

\begin{example}
 We use the same conditions as in Example 4.

In Figure  \ref{nettree1}, suppose the global threshold $\gamma=1$. We first find occurrence $<$4, 6, 7, 9$>$ with a $\gamma$-distance of one, which satisfies the global constraint. We delete $<$4, 6, 7, 9$>$, and then find occurrence $<$2, 3, 6, 8$>$ from $n_{4}^{8}$. However, the $\gamma$-distance of $<$2, 3, 6, 8$>$ is two, which is greater than $\gamma$. We therefore deselect the rightmost parent $n_{1}^{2}$ of $n_{2}^{3}$, and instead select the second rightmost parent $n_{1}^{1}$ to obtain $<$1, 3, 6, 8$>$ with a $\gamma$-distance of one. We delete $<$1, 3, 6, 8$>$, and no other occurrences can be found after that. Using NETLAP-best, we only obtain two nonoverlapping $(\delta, \gamma)$-approximate occurrences. However, in Figure  \ref{nettree1}, there are three nonoverlapping $(\delta, \gamma)$-approximate occurrences: $<$1, 2, 5, 6$>$, $<$2, 3, 6, 7$>$, and $<$4, 6, 7, 9$>$. Hence, the principle of NETLAP-Best cannot be applied to NDP.
\end {example}

To avoid the drawbacks of  NETLAP-Best algorithm, NetNDP applies two steps to obtain the rightmost occurrence. In the first step, we recalculate the MRD of each node in the subNettree of the max root, and judge whether the root can reach the absolute leaves under the condition of the global constraint. If so, we obtain the rightmost absolute leaf of the max root. In the second step we get the rightmost occurrence using the rightmost parent strategy with the rightmost absolute leaf. We iterate the above process until no new occurrences are found.

	The ReachLeaf algorithm, which obtains the rightmost absolute leaf from the max root, is shown in Algorithm 2.
	\begin{algorithm}[t] 
		\caption{ReachLeaf} 
		\hspace*{0.02in} {\bf Input:} 
		$LANtree$, root $R$, local threshold $\delta$, global threshold $\gamma$\\
		\hspace*{0.02in} {\bf Output:} 
		the rightmost absolute leaf $ral$ of $R$
		\begin{algorithmic}[1]
			\State $lal \leftarrow ral \leftarrow R$;  //  $lal$ and $ral$ are used to indicate the range from the leftmost node to the rightmost node traversed by each layer
			\For{$i = 1$ to $m-1$ step 1}
			\For{$j = lal$ to $ral$ step 1}
			\For{$k = 1$ to $LANtree[i][j]$.children.size() step 1}
			\State Recalculate the MRD of $LANtree[i+1][k]$ in the subnettree of $R$ according to Theorem 1; 
			\EndFor
			\EndFor
			\State $lal \leftarrow $first child at the ${i+1}^{\rm th}$ level, $ral \leftarrow$ last child at the ${i+1}^{\rm th}$ level;
			\If{$lal ==$ NULL}
			\State return -1;
			\EndIf
			\EndFor
			\State return $ral$;       
			
		\end{algorithmic}
	\end{algorithm}
	
	After obtaining the rightmost absolute leaf, we prove that we can obtain the rightmost occurrence without the need for a backtracking strategy.
	
	\begin{theorem}
		In the subNettree of a root, if the root can reach $n_{m}^{i}$ under the condition of the global constraint, we can obtain a full path from $n_{m}^{i}$ to the root without a backtracking strategy.
	\end{theorem}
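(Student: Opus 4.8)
The plan is to prove this by a descending induction on the level, carrying a single accumulated quantity down from the absolute leaf and showing that at every level at least one usable parent survives, so that the greedy rightmost-parent scan can always move one level closer to the root and never has to return to a level it has already passed. First I would fix the root $R$ and recall that, after the recomputation carried out by Algorithm 2 (ReachLeaf), every $M_r(n_j^a)$ inside the subNettree of $R$ equals the shortest $\gamma$-distance from $n_j^a$ back to $R$, evaluated by the recursion of Theorem 1 restricted to this subNettree. Under this reading the hypothesis ``$R$ can reach $n_m^i$ under the global constraint'' is simply $M_r(n_m^i) \le \gamma$.

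Next I would introduce, for a path partially built from the leaf $n_m^i$ down to a node $n_j^a$, the committed suffix cost $S_j$, namely the sum of the $\delta$-distances of the nodes already placed at levels $j+1,\dots,m$, so that $S_m = 0$ and $S_{j-1} = S_j + D_\delta(s_a, p_j)$ once $n_j^a$ is appended. The invariant to maintain is
\[
M_r(n_j^a) + S_j \le \gamma,
\]
which says that the cheapest full path agreeing with the chosen suffix and passing through $n_j^a$ still fits the global budget. The base case is exactly the hypothesis, since $M_r(n_m^i) + S_m = M_r(n_m^i) \le \gamma$. For the inductive step I would call a gap-feasible parent $n_{j-1}^{p}$ \emph{qualified} when $M_r(n_{j-1}^{p}) + S_{j-1} \le \gamma$; this is precisely the condition the rightmost-parent strategy tests. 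By Theorem 1 we have $M_r(n_j^a) = \min_{p} M_r(n_{j-1}^{p}) + D_\delta(s_a, p_j)$, so the parent $p^{*}$ attaining that minimum obeys $M_r(n_{j-1}^{p^{*}}) + S_{j-1} = M_r(n_j^a) + S_j \le \gamma$ and is therefore qualified. Hence a qualified parent always exists, the invariant is preserved by whichever qualified parent the scan commits to, and the descent never stalls.

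Finally I would close the induction at the root: when $j=1$ the reached node $n_1^{b}$ is $R$ itself, with $M_r(n_1^{b}) = D_\delta(s_b, p_1)$, so the invariant yields $\sum_{j=1}^{m} D_\delta(s_{l_j}, p_j) = M_r(n_1^{b}) + S_1 \le \gamma$; the assembled full path $<n_1^{b}, \dots, n_m^{i}>$ thus already satisfies the global constraint, while the local constraint is automatic because every surviving node has $D_\delta \le \delta$ by construction of the local approximate Nettree. The only place that needs genuine work is the inductive step, where the minimizing parent supplied by Theorem 1 is shown to be qualified; the rest is bookkeeping. The subtlety I would flag explicitly is that ``rightmost parent'' and ``qualified parent'' need not be the same node — the claim relies only on the guaranteed existence of \emph{some} qualified parent, which the rightmost-parent scan is certain to meet (in the worst case $p^{*}$) before the parent list is exhausted, and it is exactly this that removes the need to backtrack.
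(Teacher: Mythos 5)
Your proof is correct and takes essentially the same route as the paper's: both start from the hypothesis $M_{r}(n_{m}^{i}) \leq \gamma$ and use the min-recursion of Theorem 1 to guarantee, at every level, that some gap-feasible parent fits the remaining budget, so the rightmost such parent can be committed greedily and the descent never stalls before reaching the root. Your explicit invariant $M_{r}(n_{j}^{a}) + S_{j} \leq \gamma$ is just the complementary formulation of the paper's ``remaining distance $d$'' (namely $d = \gamma - S_{j}$), written with somewhat tidier bookkeeping.
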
 

\begin{proof}
If the root can reach $n_{m}^{i}$ under the condition of the global constraint, we know that $M_{r}(n_{m}^{i}) \leq \gamma$. Otherwise, according to Theorem 2, $n_{m}^{i}$ needs to be deleted and if $M_{r}(n_{m}^{i})>\gamma$, the shortest $\gamma$-distance from $n_{m}^{i}$ to the root is greater than $\gamma$, i.e. there is no path from $n_{m}^{i}$ to the root satisfies the global constraint. Thus, $M_{r}(n_{m}^{i}) \leq \gamma$. Suppose $D_{\delta}(s_{i}, p_{m})=k \leq \gamma$. We then need to search for a path from the parents of $n_{m}^{i}$ to the root with a $\gamma$-distance within $\gamma-k$. From Theorem 1, we know that $M_{r}(n_{m}^{i})= \min(M_{r}(n_{m-1}^{r^{1}}), M_{r}(n_{m-1}^{r^{2}}),$  $\cdots, M_{r}(n_{m-1}^{r^{t}}))+k \leq \gamma$. Hence, from the rightmost parent to the leftmost parent of $n_{m}^{i}$, there must be a parent whose MRD is no greater than $\gamma-k$. We select the parent, and iterate the process until the first level is reached. In other words, when searching for a path from $n_{m}^{i}$ to the root with a $\gamma$-distance of within $d$, we seek the rightmost parent whose MRD is not greater than $d$. Since the first level has only one root in the subNettree, we can obtain a full path from $n_{m}^{i}$ to the root without the need for a backtracking strategy.
\end{proof}
	
	Based on Theorem 4, we develop the RightOcc algorithm to obtain the rightmost occurrence, as shown in Algorithm 3.

	\begin{algorithm}[t] 
		\caption{RightOcc} 
		\hspace*{0.02in} {\bf Input:} 
		$LANtree$, rightmost absolute leaf $ral$, local threshold $\delta$, global threshold $\gamma$\\
		\hspace*{0.02in} {\bf Output:} 
		a nonoverlapping $(\delta, \gamma)$-approximate occurrence $occ$
		\begin{algorithmic}[1]
			\State $occ[m] \leftarrow LANtree[m][ral]$;
			\For{$i = m-1$ to 1 step -1}
			\State $occ[i] \leftarrow $ the rightmost parent of current under the condition of the global constraint; 
			\EndFor
			\State return $occ$;	
		\end{algorithmic}
	\end{algorithm}
	
	Example 7 illustrates the principle of NetNDP.
	
\begin{example}
 We use the same conditions as in Example 4.

	In Figure \ref{nettree3}, $n_{1}^{6}$ is the max root. It cannot reach the absolute leaves, and neither can $n_{1}^{5}$. Next, we assess root $n_{1}^{4}$, and recalculate the MRD for each node in its subnettree. From Figure  \ref{nettree3}, we can clearly see that the rightmost absolute leaf of $n_{1}^{4}$ is $n_{4}^{9}$, and $M_{r}(n_{4}^{9})=0 \leq \gamma$. Hence, according to Theorem 4, we find the $(\delta, \gamma)$-approximate occurrence $<$4, 6, 7, 9$>$ from $n_{4}^{9}$.

\begin{figure}[t]
		\centering
		\includegraphics[width=0.35\textwidth]{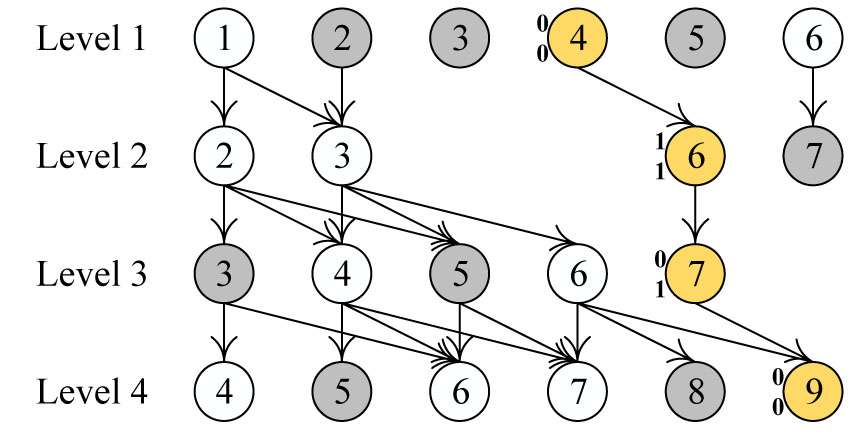}
		\caption{MRD for each node in the subNettree of $n_{1}^{4}$}
		\label{nettree3} 	
	\end{figure}

	We delete $<$4, 6, 7, 9$>$ and mark it in red. Now,  the rightmost absolute leaf of $n_{1}^{2}$ is $n_{4}^{7}$. Hence, we start searching for an occurrence from $n_{4}^{7}$ using the rightmost parent strategy. When searching for a path from $n_{4}^{7}$ to $n_{1}^{2}$ with a $\gamma$-distance of within one, we seek the rightmost parent whose MRD is no greater than one. Since $D_{\delta}(s_{7}, p_{4})=0$, $n_{3}^{6}$ is the rightmost parent, and $M_{r}(n_{3}^{6})=1$, meaning that there exists a root path with a $\gamma$-distance of one from $n_{3}^{6}$ to $n_{1}^{2}$. We therefore select $n_{3}^{6}$. This process is iterated until we finally find the $(\delta, \gamma)$-approximate occurrence $<$2, 3, 6, 7$>$ according to Theorem 4.

	We now delete $<$2, 3, 6, 7$>$ and assess the last root $n_{1}^{1}$, as shown in Figure  \ref{nettree4}. We recalculate the MRD for each node in the subNettree of $n_{1}^{1}$. We can see that $M_{r}(n_{4}^{6})=0 \leq \gamma$, and the rightmost absolute leaf of $n_{1}^{1}$ is $n_{4}^{6}$ after deleting $<$2, 3, 6, 7$>$. According to Theorem 4, we find the $(\delta, \gamma)$-approximate occurrence $<$1, 2, 5, 6$>$ from $n_{4}^{6}$. In summary, there are three nonoverlapping $(\delta, \gamma)$-approximate occurrences of pattern $P$ in sequence $S$: $<$1, 2, 5, 6$>$, $<$2, 3, 6, 7$>$, and $<$4, 6, 7, 9$>$.

	\begin{figure}[t]  
		\centering
		\includegraphics[width=0.35\textwidth]{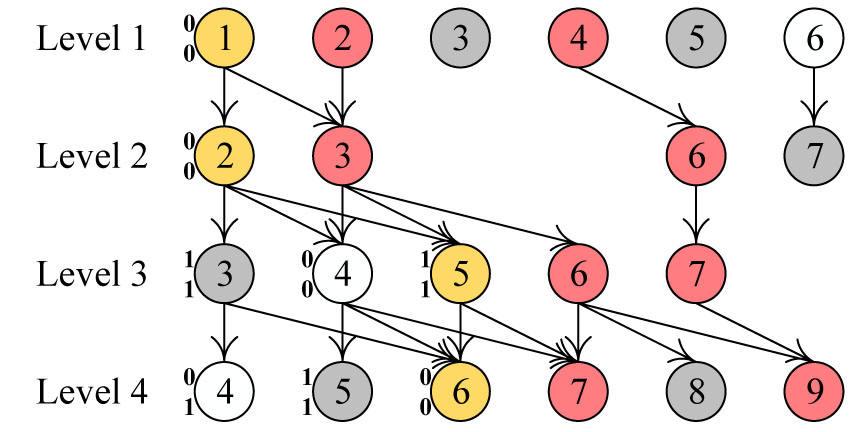}
		\caption{MRD for each node in the subNettree of $n_{1}^{1}$}
		\label{nettree4} 	
	\end{figure}
\end{example}

{The main steps of NetNDP are as follows. }

{Step 1: NetNDP uses Algorithm 1 to create a local approximate Nettree at first. }

{Step 2: NetNDP adopts Algorithm 2 to determine the max root which can reach the absolute leaves under the condition of the global constraint. }

{Step 3: If we can obtain the rightmost absolute leaf of the max root, then we employs Algorithm 3 to get the rightmost occurrence using the rightmost parent strategy with the rightmost absolute leaf. }

{Step 4: We iterate Steps 2 and 3 until no new occurrences are found.}

The NetNDP algorithm is shown in Algorithm 4.

	\begin{algorithm}[t] 
		\caption{NetNDP} 
		\hspace*{0.02in} {\bf Input:} 
		sequence $S$, pattern $P$, local threshold $\delta$, global threshold $\gamma$\\
		\hspace*{0.02in} {\bf Output:} 
		The maximum nonoverlapping $(\delta, \gamma)$-approximate occurrence set $OCC$
		\begin{algorithmic}[1]
\State Use Algorithm 1 to create a local approximate Nettree;
\For{$r =$ the number of roots to 1 step -1} 
	\State Use Algorithm 2 to get the rightmost absolute leaf $ral$;
	\If{$lal ==$ NULL}
		\State Use Algorithm 3 to get the rightmost occurrence $occ$;
		\State $OCC  \leftarrow  OCC \cup occ$;
		\State Delete $occ$;
	\EndIf
\EndFor
\State return $OCC$;	
		\end{algorithmic}
	\end{algorithm}

	\section{Experimental Results and Performance Analysis}
	\label{experimental}
	\subsection{Experimental Environment and Datasets}
	\label{experimental1}
	In this paper, we  focus on the problem of NDP, and propose the NetNDP algorithm. To validate the performance of NetNDP, we select the following competitive algorithms.

(1) INSGrow-appro, NETLAP-$(\delta, \gamma)$, and NETASPNO-$(\delta, \gamma)$:  INSGrow \cite{Ding2009Efficient}, NETLAP-Best \cite{Wu2017Strict} and NETASPNO \cite{Wu2018NETASPNO} are state-of-the-art algorithms for nonoverlapping pattern matching, but they do not support $(\delta, \gamma)$-approximate matching. We therefore  propose INSGrow-appro, NETLAP-$(\delta, \gamma)$, and NETAS- PNO-$(\delta, \gamma)$ to improve these algorithms, respectively.

(2) NetNDP-nonp: To verify the efficiency of our pruning strategy, we also develop an algorithm, called NetNDP-nonp, which does not prune invalid nodes and parent-child relationships. 

(3) NetDAP \cite{wu2020APIN}: To verify the NDP performance, we also select NetDAP \cite{wu2020APIN} as a competitive algorithm which is $(\delta, \gamma)$ - approximate pattern matching under no special condition. 

All of the algorithms are run on a computer with an Intel(R) Core(TM) i5-7200U, 2.50GHz CPU, 8.00GB RAM, and a Windows10 operating system. The compiling environment is VC++ 6.0.
	

{To verify the efficiency of NetNDP, we use eight real protein sequences ($S1\sim S8$), which can be downloaded from https://www.uniprot.org/uniparc/. Protein sequences are composed of 20 different letters (amino acids). Proteins with similar patterns often have similar functions. However, protein sequences may mutate. Therefore, if we know a sequence pattern, we can find similar pattern in the new protein sequences by approximate pattern matching with gap constraints, so as to further study and confirm the functional structure of the new protein sequence. To verify the matching effect of the $(\delta, \gamma)$-distance, we select two WormsTwoClass time series ($S9\sim S10$), which can be downloaded from https://www.cs. ucr.edu/$\sim$eamonn/time\_series\_data/. The dataset describes the trajectory of Caenorhabditis elegans. On average, the length of all sequences is 900. Although the accuracy of the data set is very high, there are also some errors. According to the time series of the first Eigenworm and the known trajectory pattern, we get similar time series by approximate pattern matching, which is easy to find similar types of worms. We apply SAX \cite{Lin2007Experiencing} (https://cs.gmu.edu/ $\sim$jessica/sax.htm) to symbolize the time series as character sequences ($A \sim T$). A description of the datasets is given in Table 2.}
	
	Table 3 shows the patterns $P1$ to $P8$ used to evaluate the experimental performance. Patterns $P1$ and $P2$ are two random patterns. To verify the influence of the length of the pattern on the experimental results, we set the patterns $P3$ to $P5$ as the same gap constraints, and gradually increase the pattern length. To verify the influence of the gap constraints on the experimental results, we set the patterns $P6$ to $P8$ as the same pattern length, and gradually increase the gap constraint. To verify the matching effect of the $(\delta, \gamma)$-distance, we add the patterns $P9$ to $P10$, and their trends are shown in Figure \ref{trend}.
	
		\begin{table}[t] 
		\centering
		\caption{Description of the datasets}
		\label{sequences}       
		\begin{tabular}{cccc}
			\hline\noalign{\smallskip}
			Sequence & Identification &	From & Length  \\
			\noalign{\smallskip}\hline\noalign{\smallskip}
			$S1$ & UPI000E62E3D8 & Schistosoma mansoni & 2578\\
			$S2$ & UPI000E62F0E8 & Ascaris suum & 3081\\
			$S3$ & UPI000EC610A6 & Corallococcus sp. CA051B & 3895\\
			$S4$ & UPI000E6F28B3 & Xiphophorus maculatus & 4260\\
			$S5$ & UPI0000F516B0 & Mus musculus & 4632\\
			$S6$ & UPI00078AE9A0 & Drosophila simulans & 4864\\
			$S7$ & UPI00090EABF2 & Xenopus laevis & 5606\\
			$S8$ & UPI000E641A62 & Ascaris suum & 6638\\
			$S9$ & Train79 & WormsTwoClass & 900\\
			$S10$ &	Train85 & WormsTwoClass & 900\\
			\noalign{\smallskip}\hline
		\end{tabular}
	\end{table}
	
	\begin{table}[t] 
		\centering
		\caption{Patterns}
		\label{patterns}       
		\begin{tabular}{ccc}
			\hline\noalign{\smallskip}
			Name & Pattern & Length  \\
			\noalign{\smallskip}\hline\noalign{\smallskip}
			$P1$ & V[1,5]L[1,7]S[4,9]L & 4 \\
			$P2$ & L[1,7]T[0,6]S[3,8]L[2,7] & 5 \\
			$P3$ & E[0,9]L[0,9]S[0,9]E[0,9]L & 5 \\
			$P4$ & E[0,9]L[0,9]S[0,9]E[0,9]L[0,9]S[0,9]E & 7 \\
			$P5$ & E[0,9]L[0,9]S[0,9]E[0,9]L[0,9]S[0,9]E[0,9]L & 9 \\
			$P6$ & Q[1,7]E[1,7]L[1,7]E[1,7]L[1,7]N & 6 \\
			$P7$ & Q[1,8]E[1,8]L[1,8]E[1,8]L[1,8]N & 6 \\
			$P8$ & Q[1,10]E[1,10]L[1,10]E[1,10]L[1,10]N & 6 \\
			$P9$ & P[0,6]M[0,6]D[0,6]L[0,6]Q & 5 \\
			$P10$ & F[0,6]B[0,6]J[0,6]Q[0,6]E[0,6]B & 6 \\
			\noalign{\smallskip}\hline
		\end{tabular}
	\end{table}

\begin{figure} 
	\centering
	\begin{minipage}[t]{0.5\linewidth}	
		\includegraphics[width=2.3in]{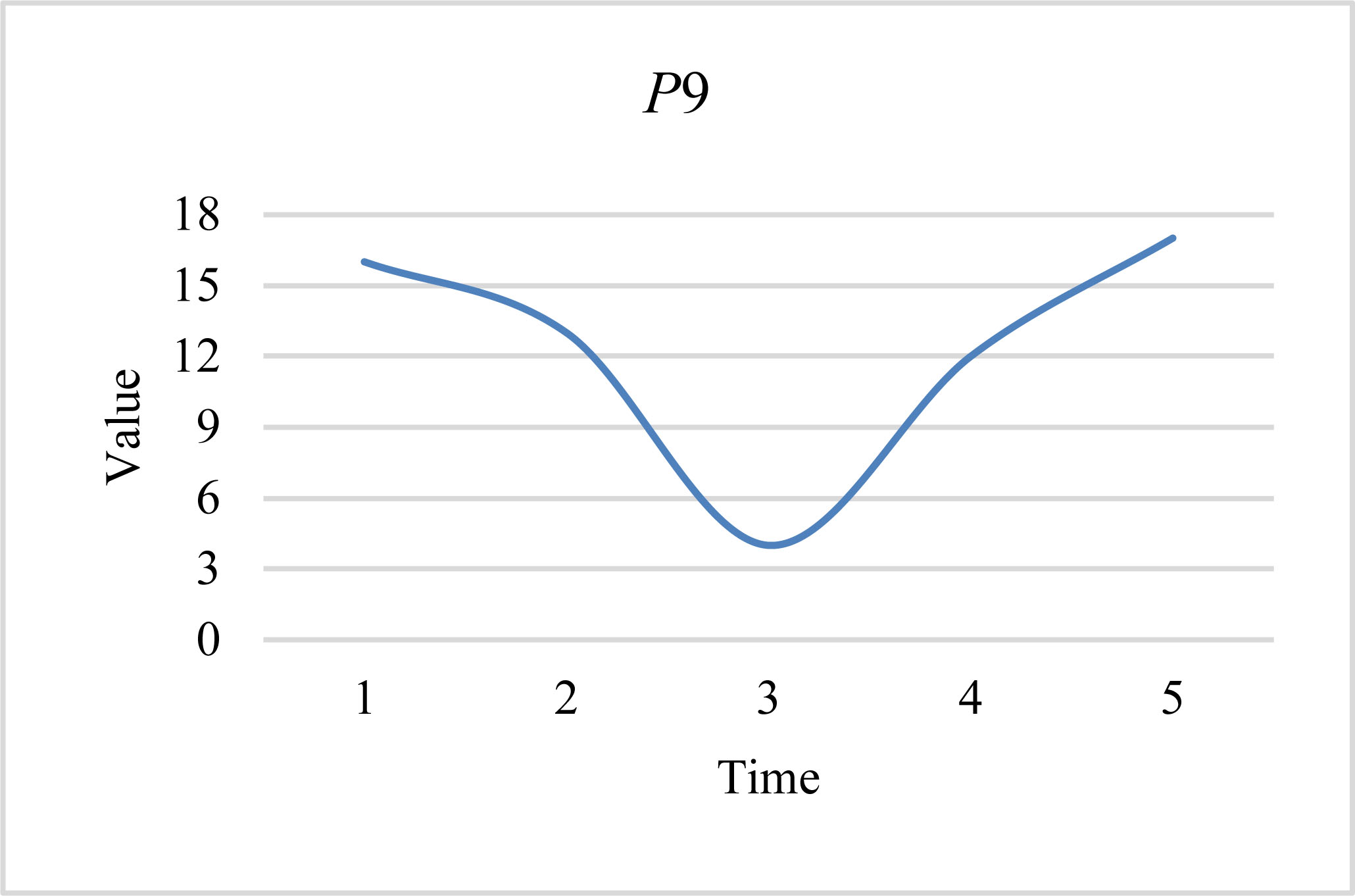}
	\end{minipage}%
	\begin{minipage}[t]{0.5\linewidth}	
		\includegraphics[width=2.3in]{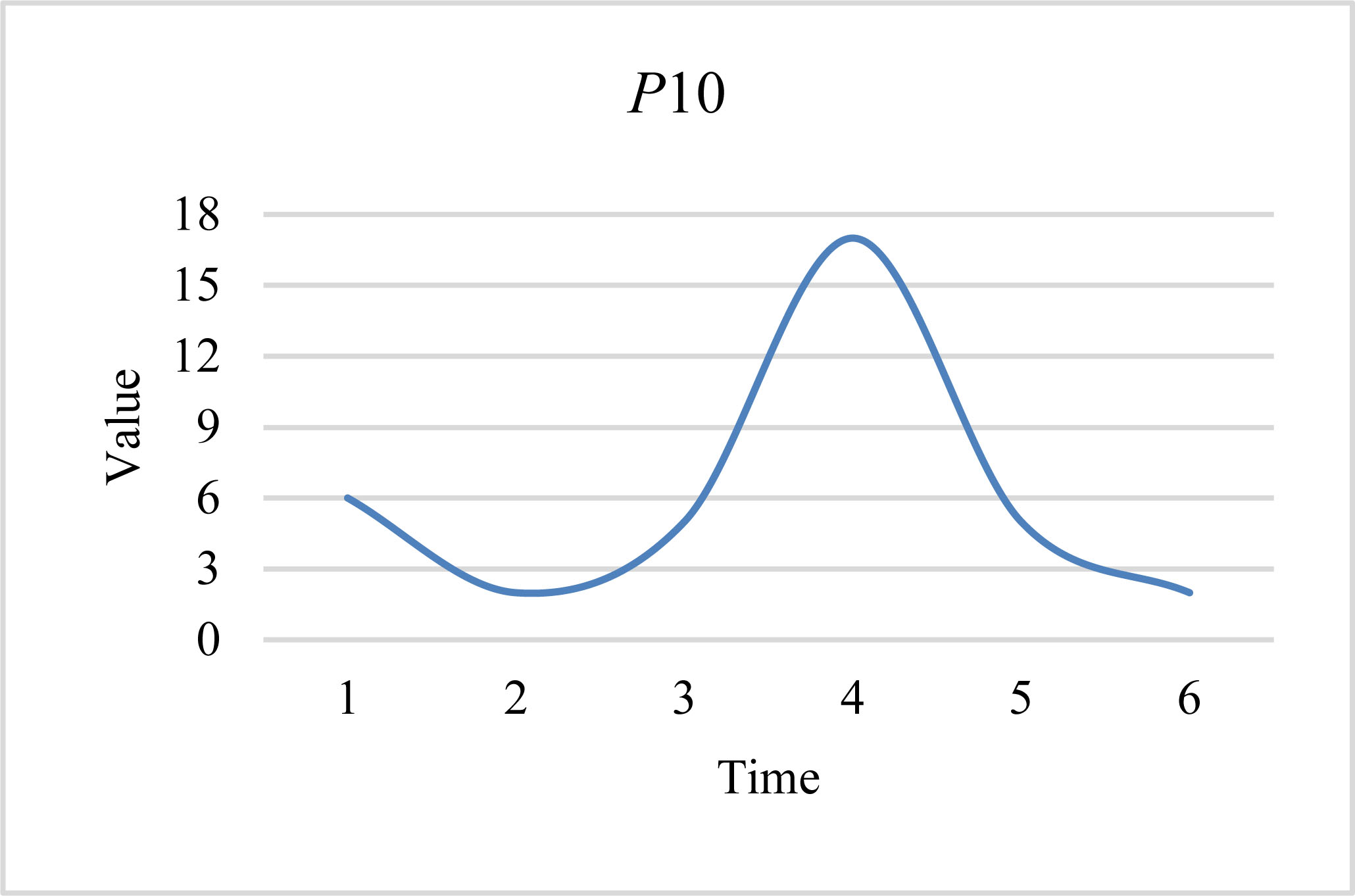}
	\end{minipage}%
	\caption{Trends in $P9\sim P10$}
	\label{trend}
\end{figure}

	\subsection{Efficiency}	\label{Experimental Results and Analysis}
	Since large deviations are not allowed in the matching results, the local threshold $\delta$ should not be set very high. Since the length of pattern $P1$ is four, the global threshold $\gamma$ should also not be set very high. We therefore use values of  $(\delta=1, \gamma=2)$, $(\delta=1, \gamma=3)$, and $(\delta=2, \gamma=2)$. Figures \ref{fig8} and \ref{fig9}  show a comparison of the results and running time for $P1\sim P8$ on $S1\sim S8$ with $(\delta=1, \gamma=2)$.
	
	\begin{figure} 
		\centering 
		\begin{minipage}[t]{0.5\linewidth}	
			\includegraphics[width=2.3in]{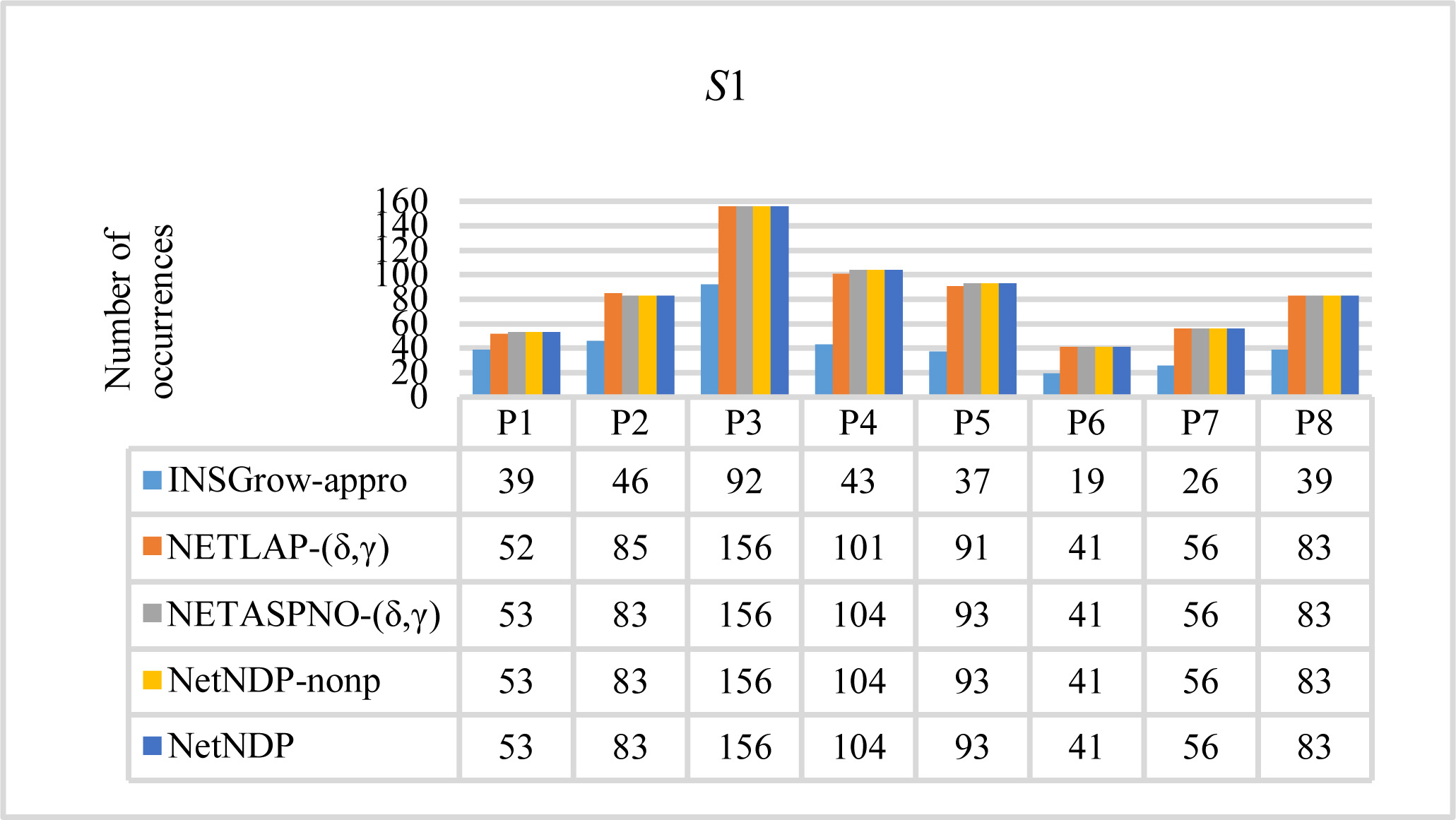}
		\end{minipage}%
		\begin{minipage}[t]{0.5\linewidth}	
			\includegraphics[width=2.3in]{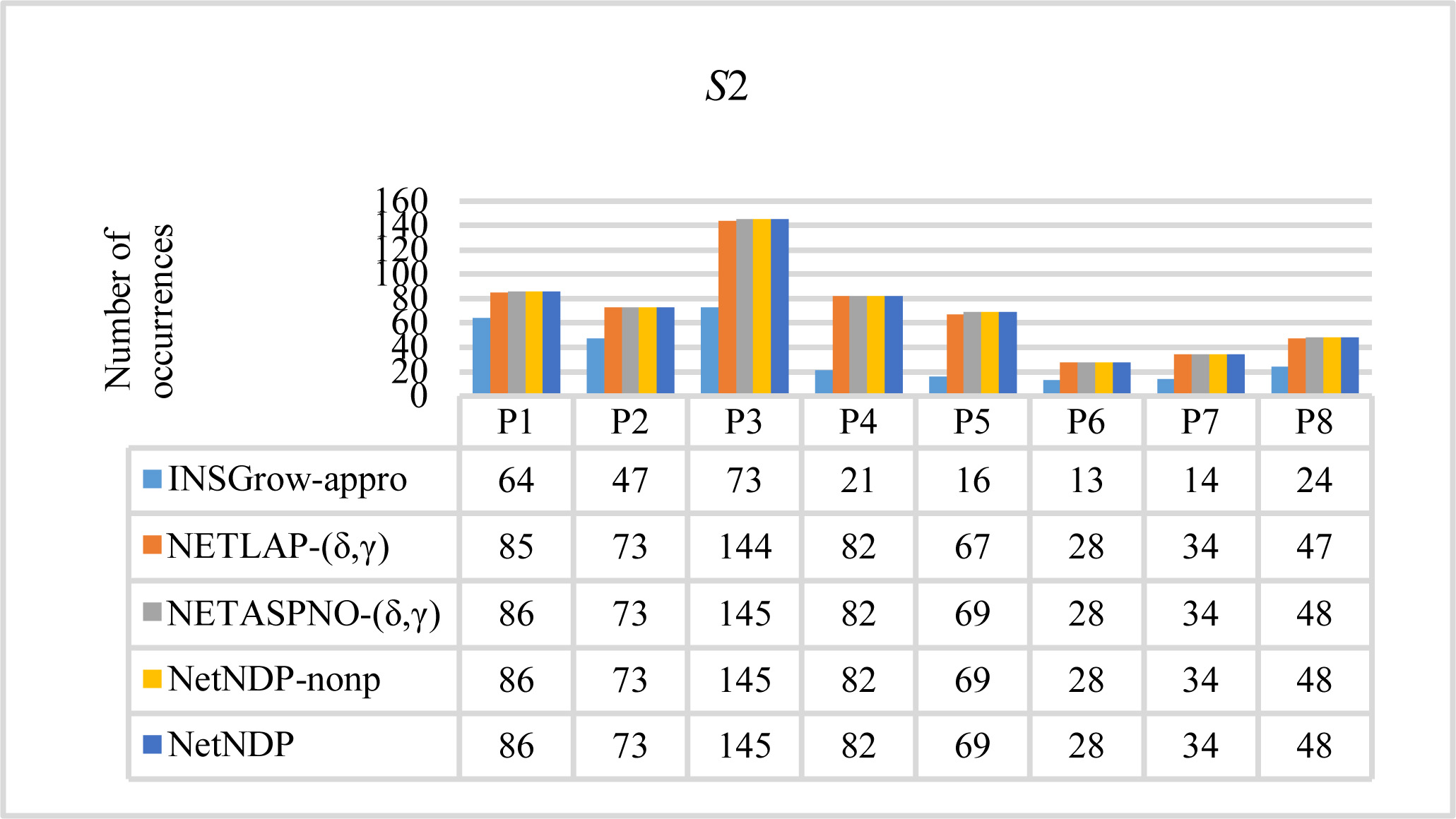}
		\end{minipage}%
		\quad
		\begin{minipage}[t]{0.5\linewidth}	
			\includegraphics[width=2.3in]{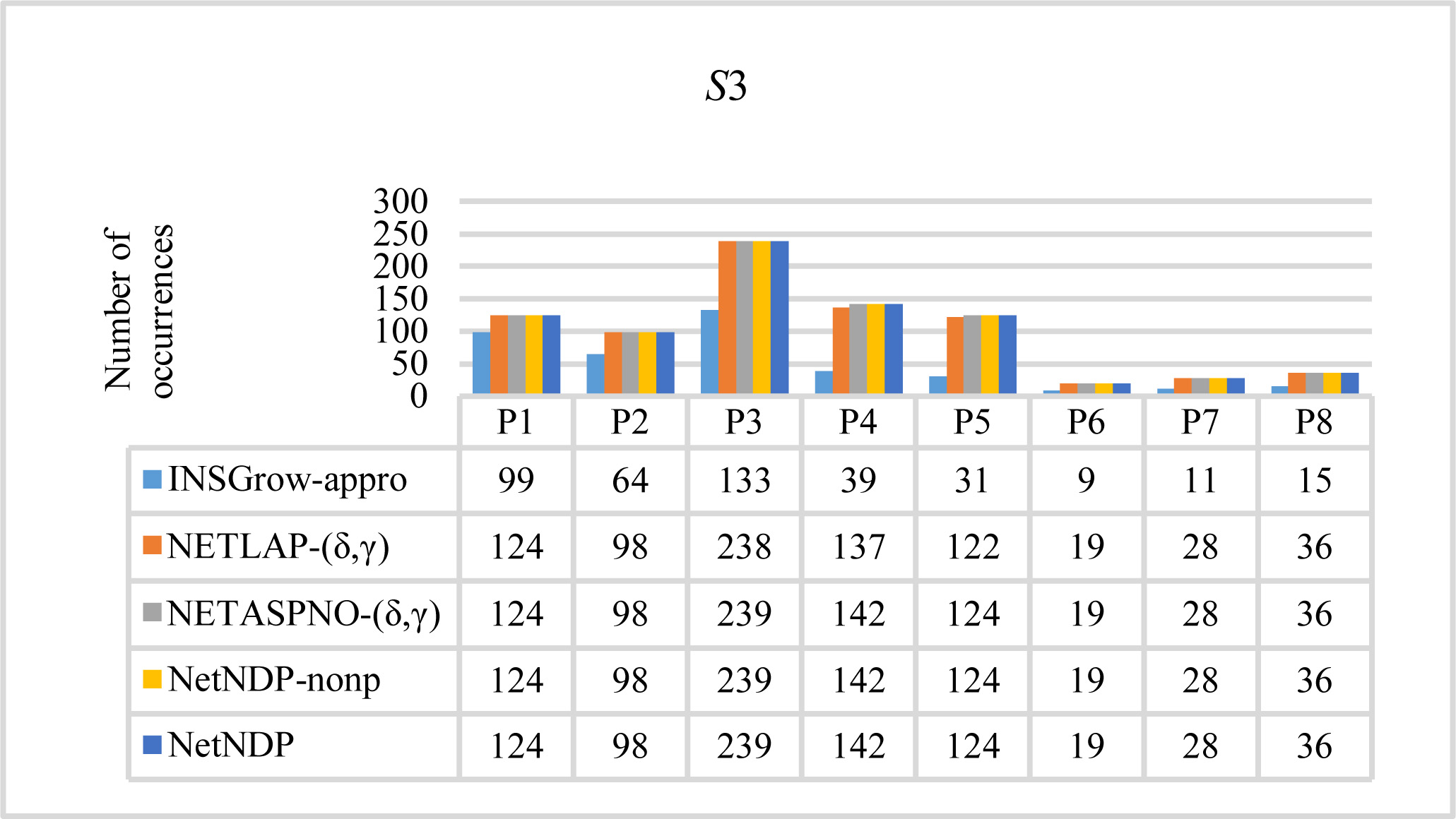}
		\end{minipage}%
		\begin{minipage}[t]{0.5\linewidth}	
			\includegraphics[width=2.3in]{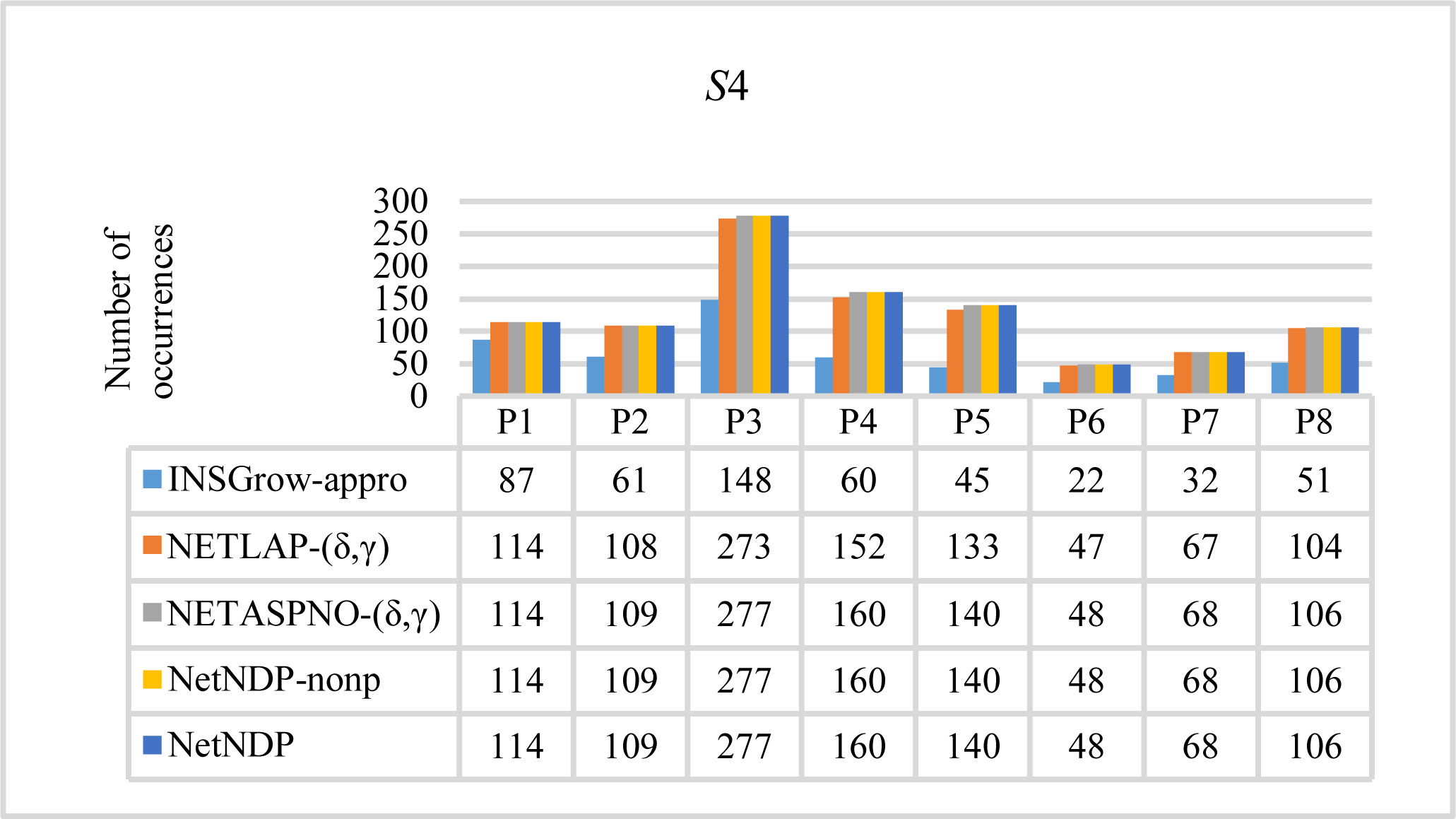}
		\end{minipage}%
		\quad
		\begin{minipage}[t]{0.5\linewidth}	
			\includegraphics[width=2.3in]{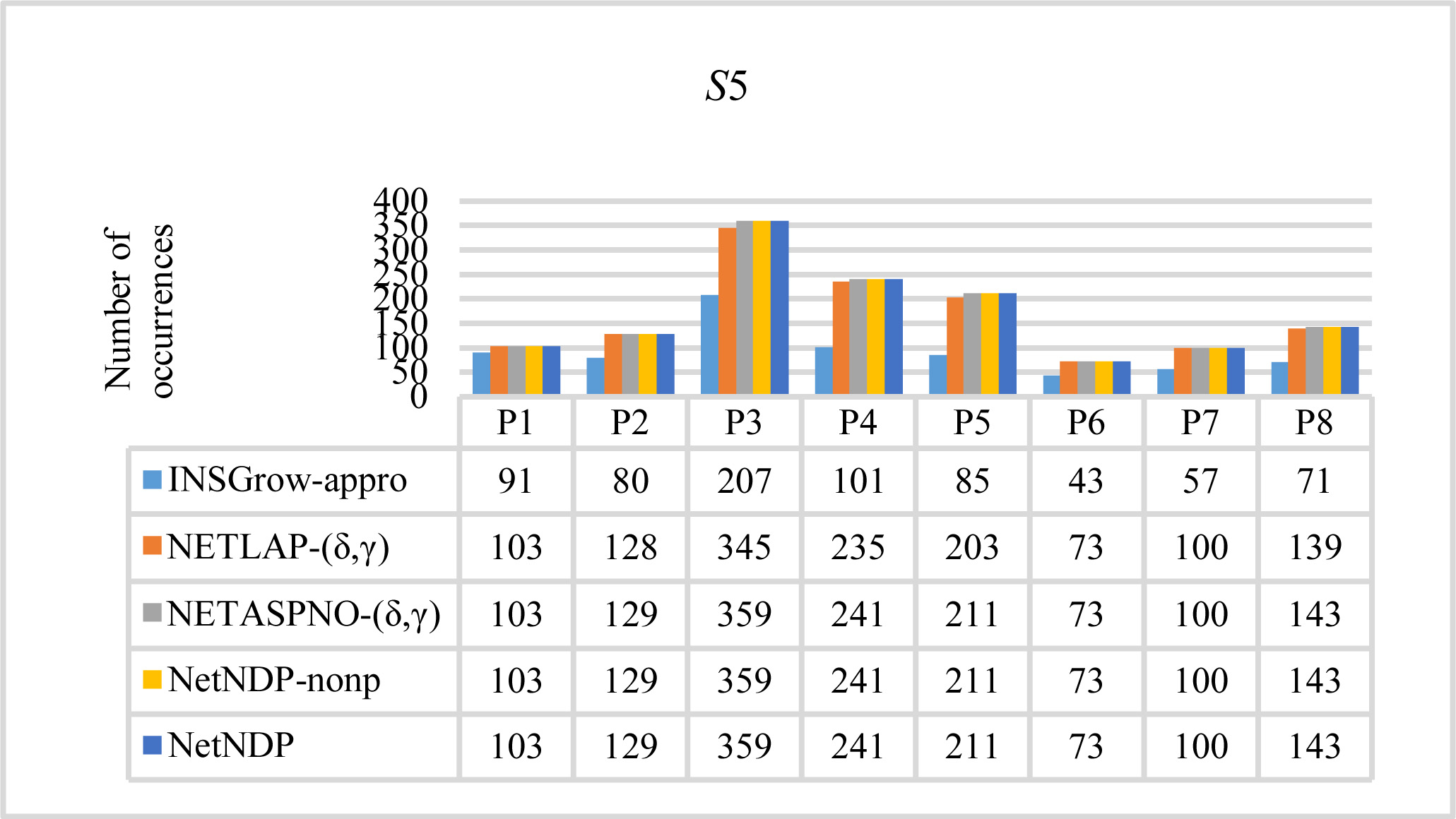}
		\end{minipage}%
		\begin{minipage}[t]{0.5\linewidth}	
			\includegraphics[width=2.3in]{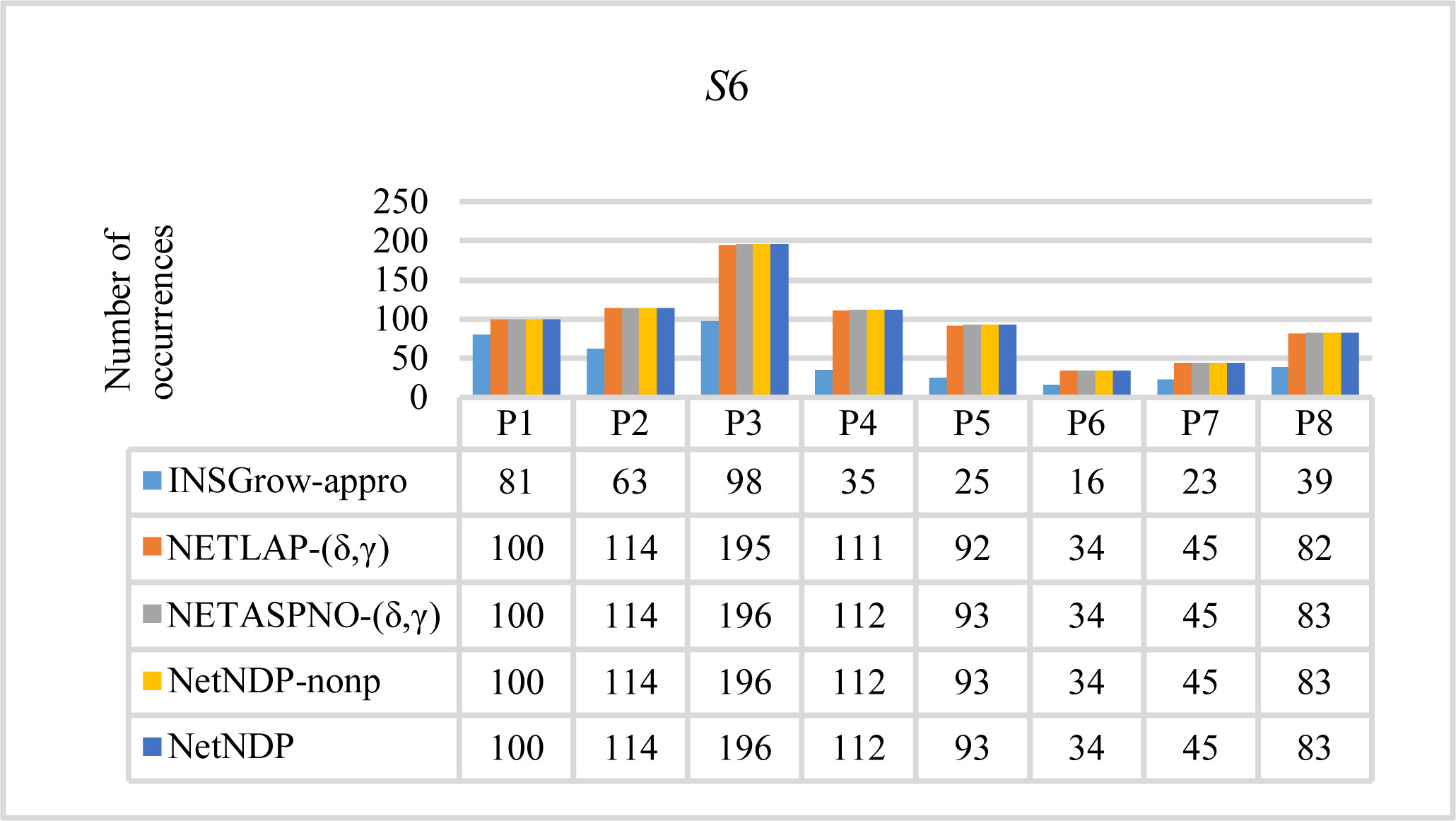}
		\end{minipage}%
		\quad
		\begin{minipage}[t]{0.5\linewidth}	
			\includegraphics[width=2.3in]{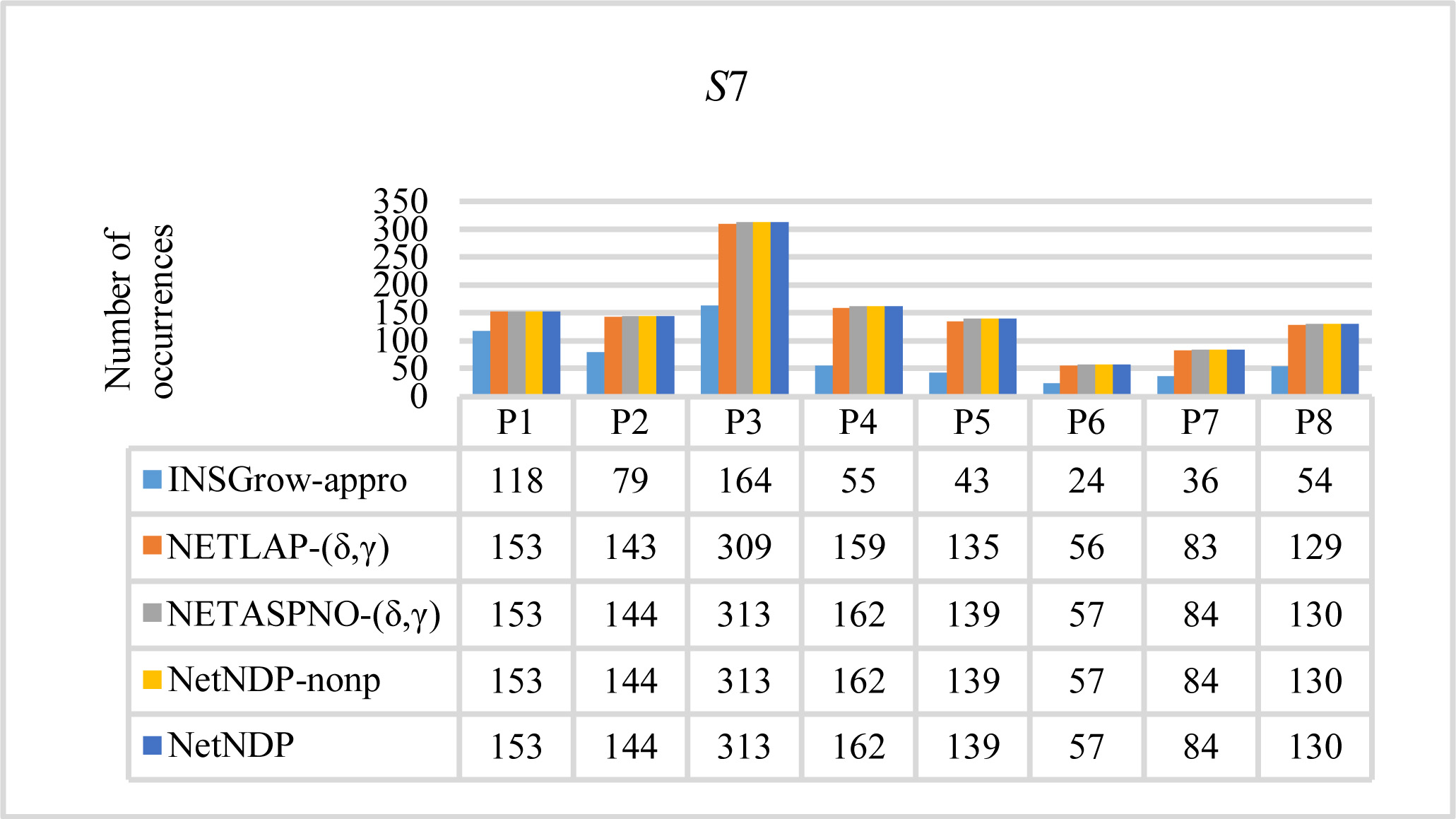}
		\end{minipage}%
		\begin{minipage}[t]{0.5\linewidth}	
			\includegraphics[width=2.3in]{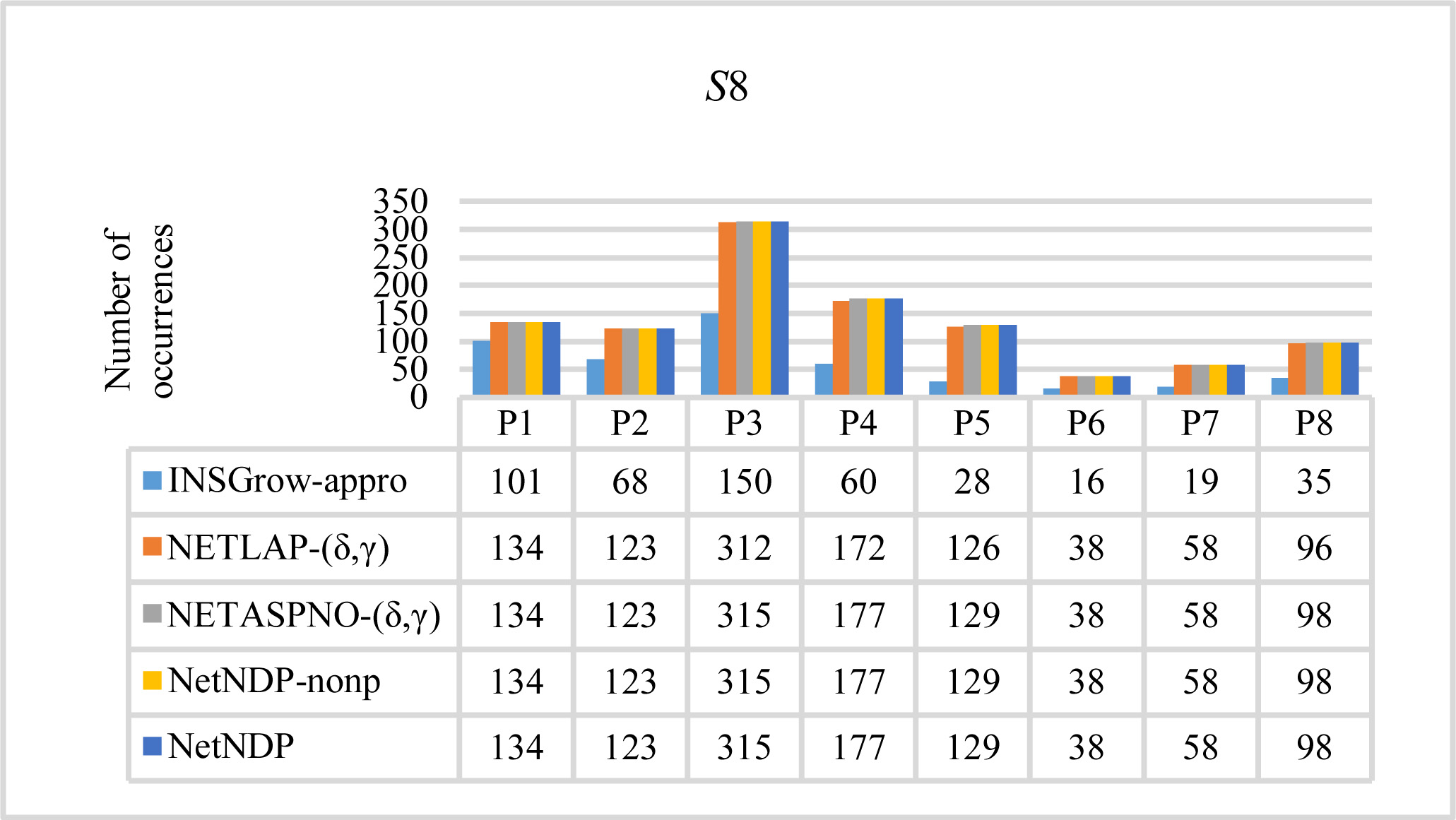}
		\end{minipage}%
		\caption{Comparison of results for $P1\sim  P8$ on $S1\sim S8$ with $(\delta=1, \gamma=2)$} 
		\label{fig8} 
	\end{figure}
	
	\begin{figure} 
		\centering 
		\begin{minipage}[t]{0.5\linewidth}	
			\includegraphics[width=2.3in]{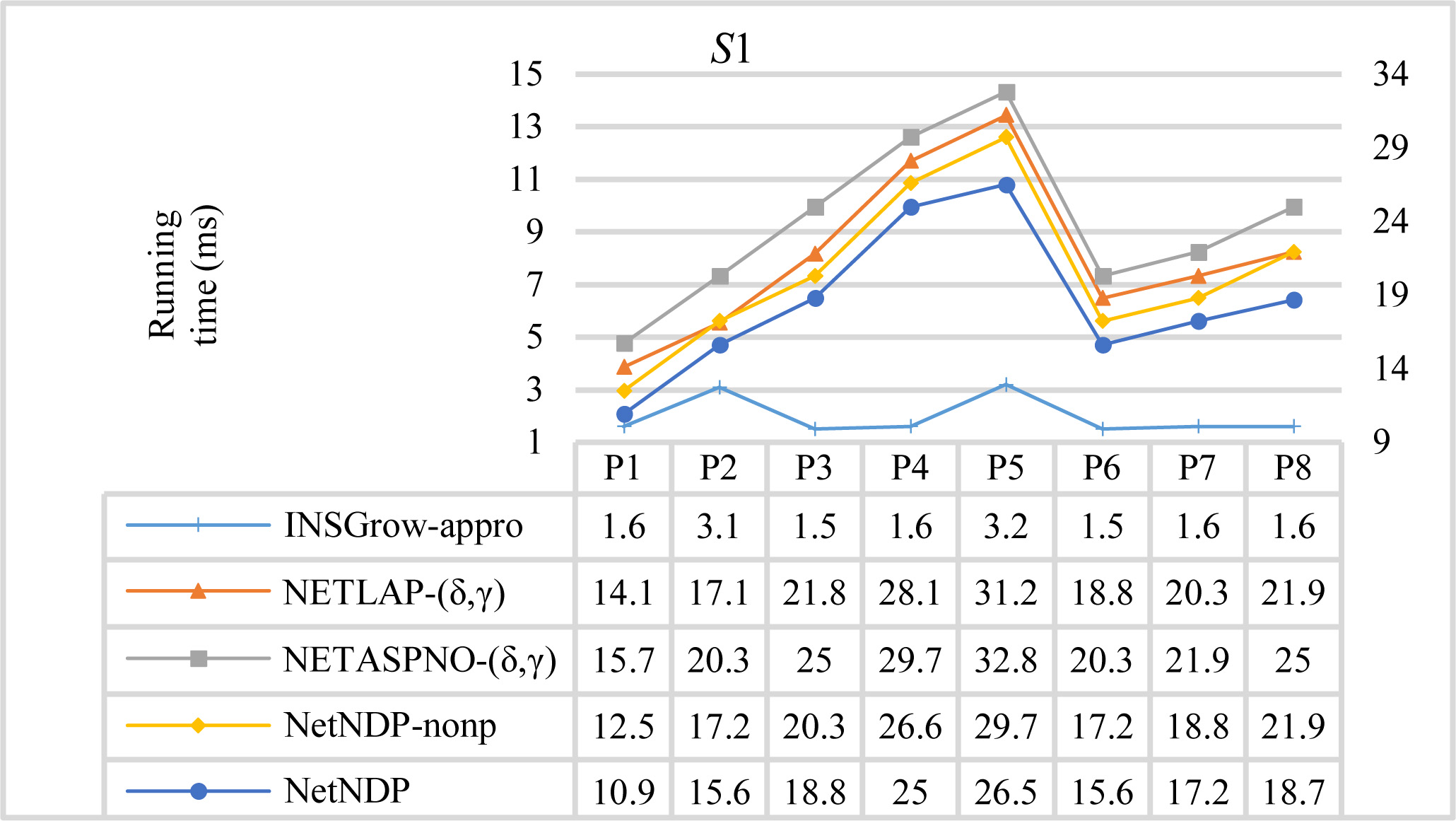}
		\end{minipage}%
		\begin{minipage}[t]{0.5\linewidth}	
			\includegraphics[width=2.3in]{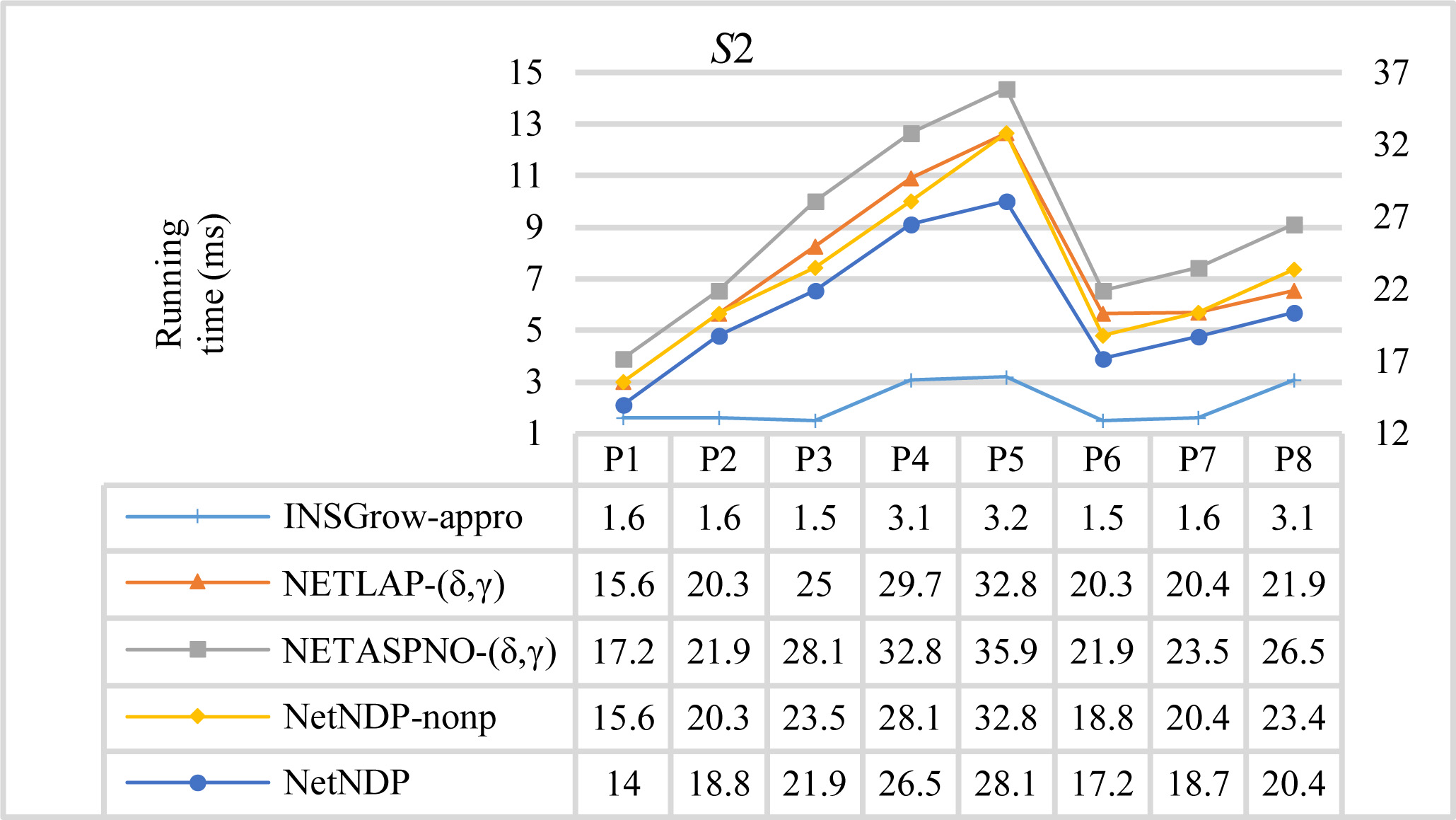}
		\end{minipage}%
		\quad
		\begin{minipage}[t]{0.5\linewidth}	
			\includegraphics[width=2.3in]{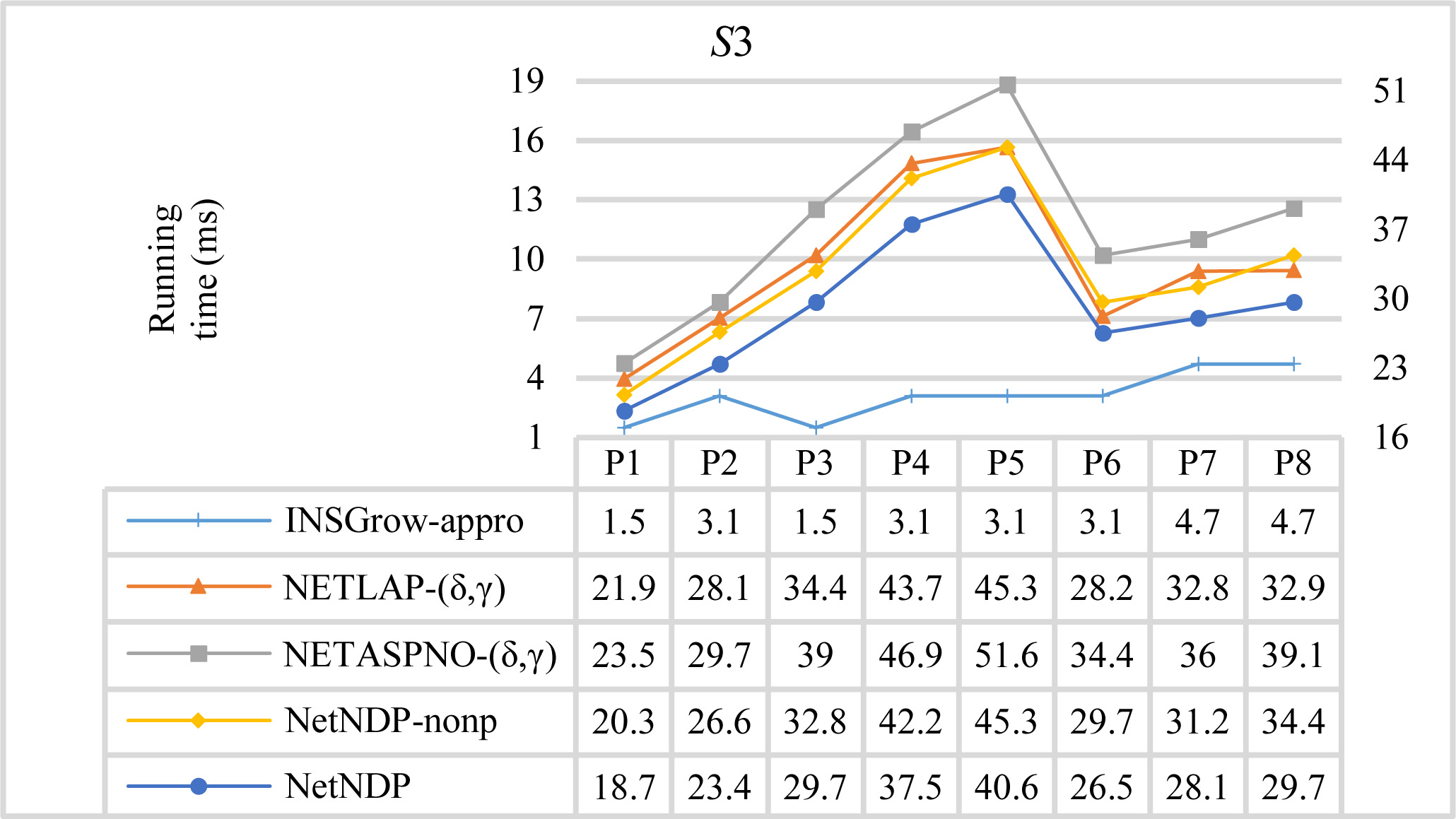}
		\end{minipage}%
		\begin{minipage}[t]{0.5\linewidth}	
			\includegraphics[width=2.3in]{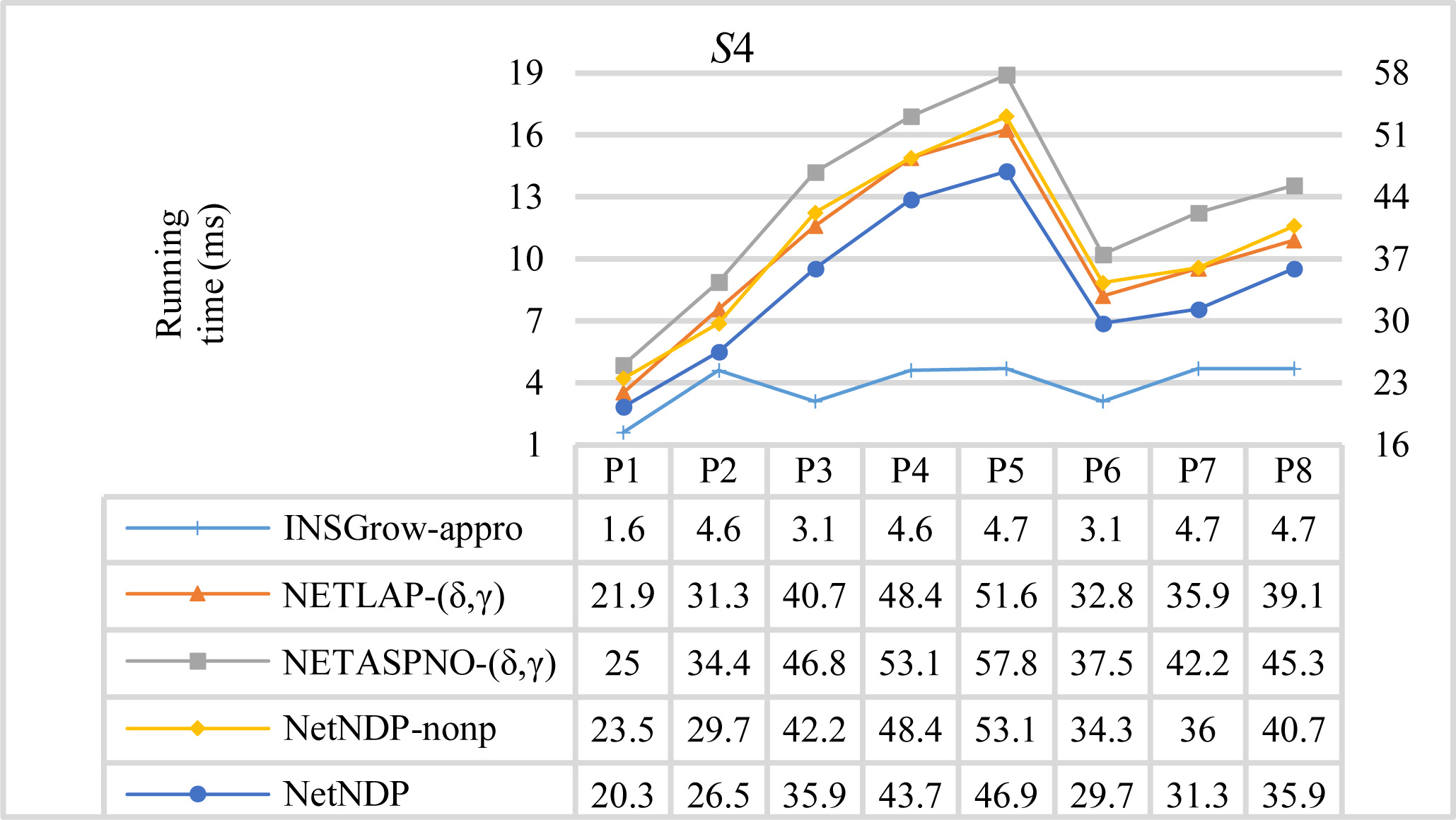}
		\end{minipage}%
		\quad
		\begin{minipage}[t]{0.5\linewidth}	
			\includegraphics[width=2.3in]{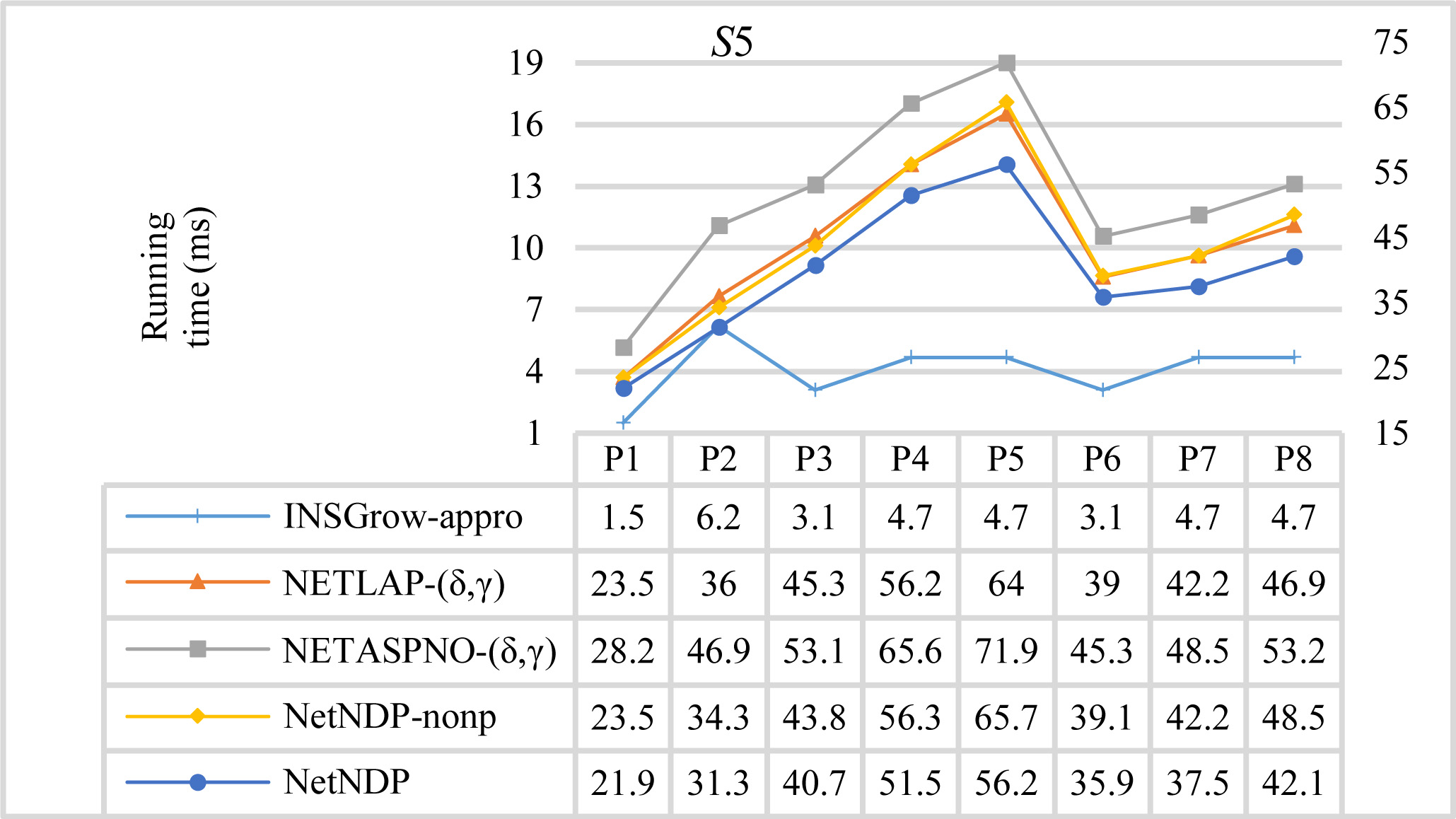}
		\end{minipage}%
		\begin{minipage}[t]{0.5\linewidth}	
			\includegraphics[width=2.3in]{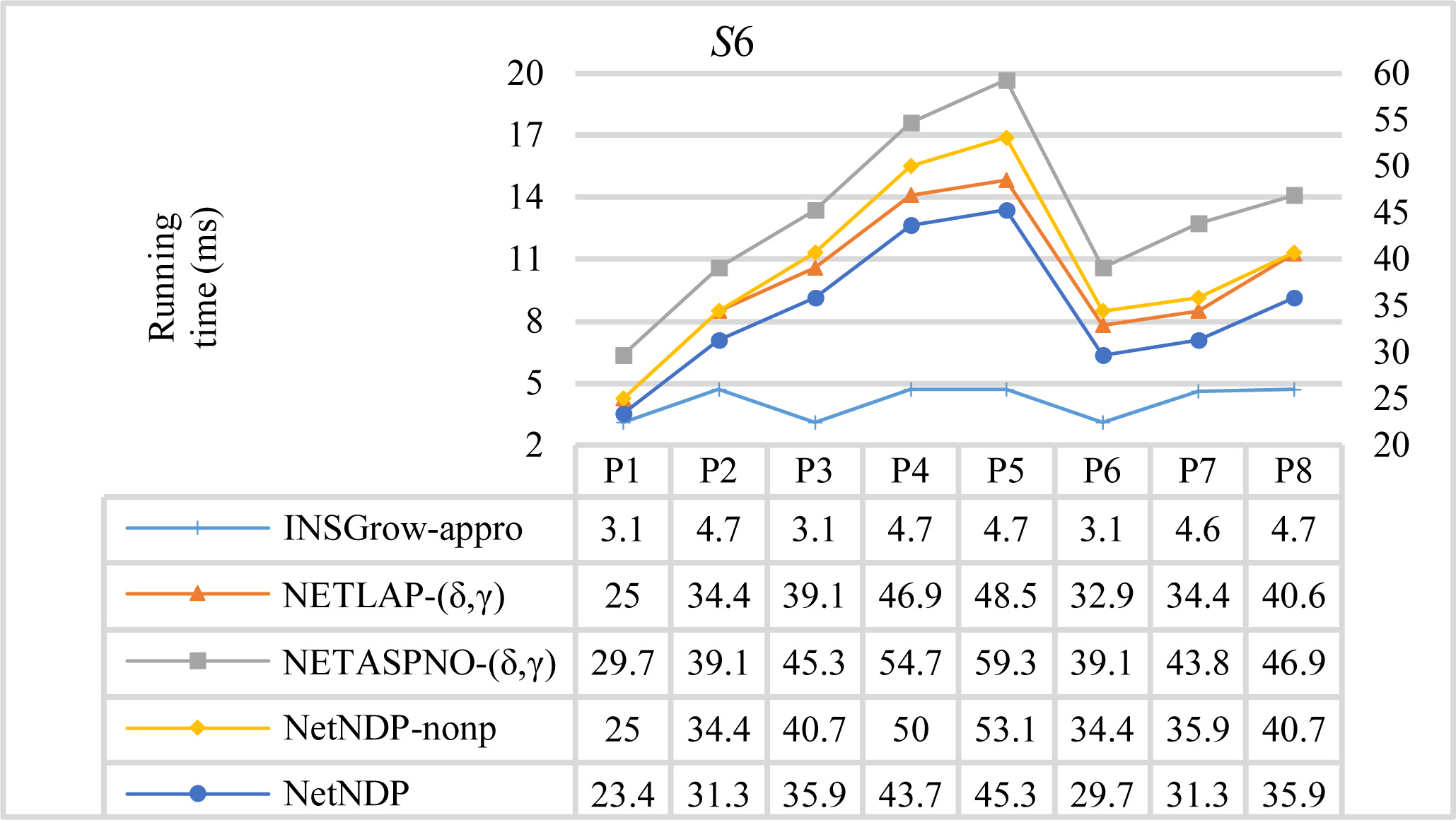}
		\end{minipage}%
		\quad
		\begin{minipage}[t]{0.5\linewidth}	
			\includegraphics[width=2.3in]{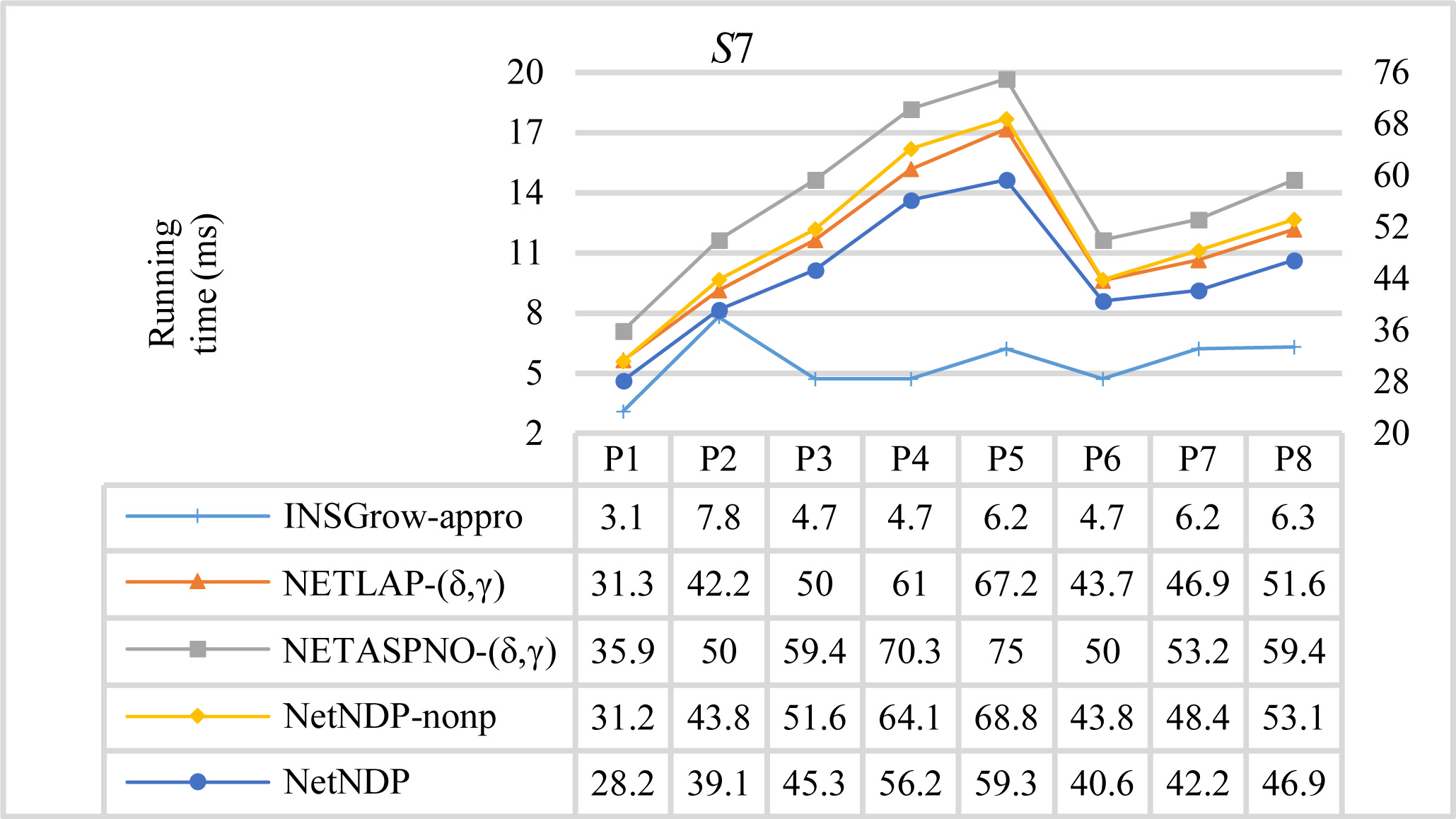}
		\end{minipage}%
		\begin{minipage}[t]{0.5\linewidth}	
			\includegraphics[width=2.3in]{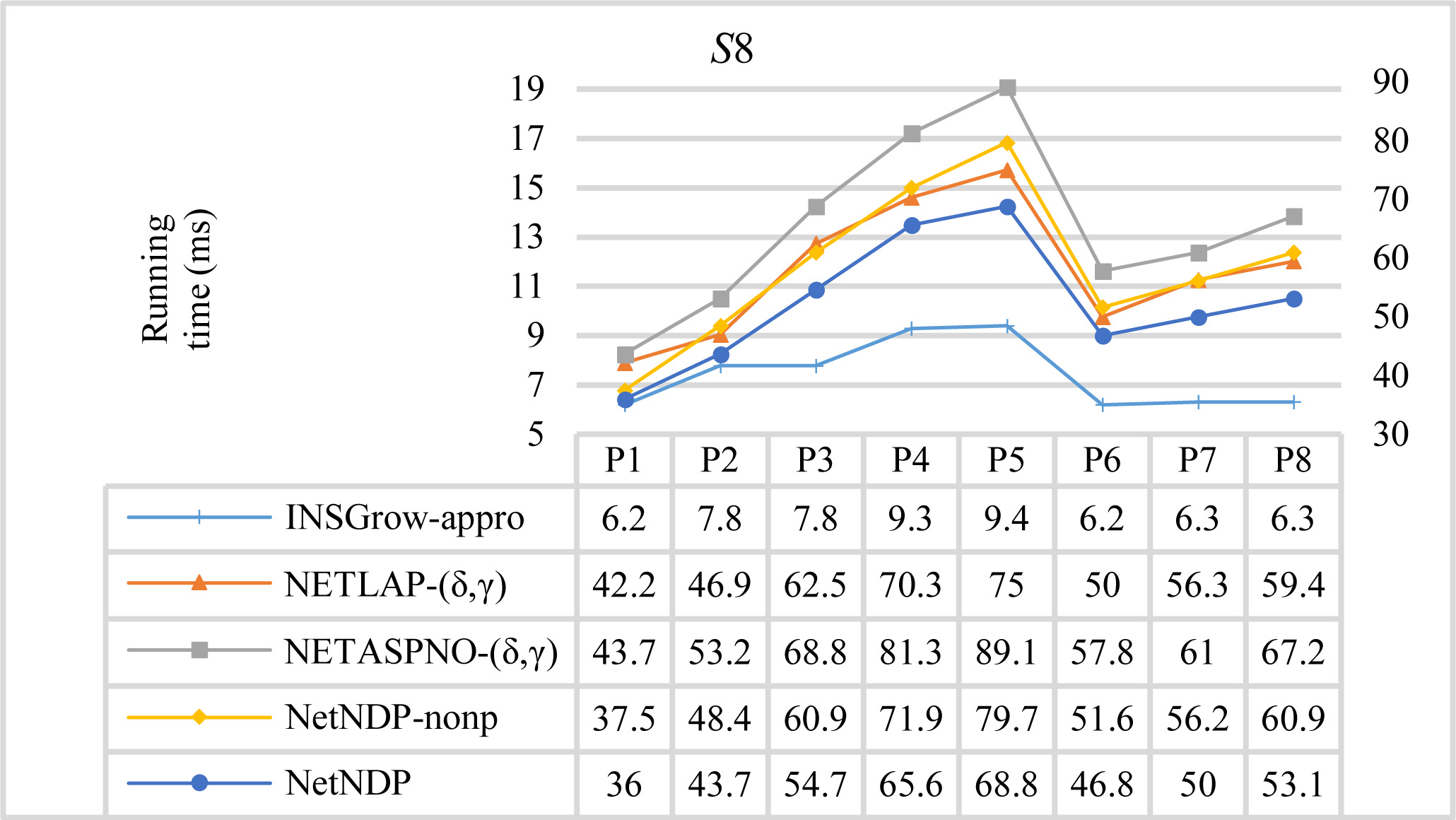}
		\end{minipage}%
		\caption{Comparison of running time for $P1\sim P8$ on $S1\sim S8$ with $(\delta=1, \gamma=2)$} 
		\label{fig9} 
	\end{figure}

	The results indicate the following observations.

	(1) From Figures \ref{fig8}  and \ref{fig9}, we can see that NetNDP outperforms INSGrow-appro. Figure \ref{fig9}  shows that INSGrow-appro is the fastest algorithm, since this algorithm is relatively simple. However, from Figure \ref{fig8}, we know that INSGrow-appro has the lowest number of occurrences. For example, the running time for INSGrow-appro for $P2$ on $S5$ is 6.2ms, which is the fastest of the five algorithms. However, INSGrow-appro only obtains 80 occurrences, while NetNDP finds 129. The reason for this is that INSGrow-appro employs the principle of INSGrow-appro \cite{Ding2009Efficient}, which only retains the position in which the first match succeeds, thus resulting in the loss of many occurrences. INSGrow-appro therefore overlooks many feasible occurrences, meaning that NetNDP has better performance.
	
	(2) NetNDP outperforms NETLAP-$(\delta, \gamma)$. From Figure \ref{fig8}, we can see that NETLAP-$(\delta, \gamma)$ finds slightly fewer occurrences. For example, the number of occurrences found by NetNDP for $P4$ on $S4$ is 160, while NETLAP-$(\delta, \gamma)$ only obtains 152. The main reason for this is that NETLAP-$(\delta, \gamma)$ employs the principle of NETLAP-Best \cite{Wu2017Strict}, which iteratively finds the rightmost occurrence from the max absolute leaf by finding the rightmost parent, and deletes the occurrence and related invalid nodes. In a similar way to NETLAP-Best, NETLAP-$(\delta, \gamma)$ does not prejudge the rightmost absolute leaf of each root, resulting in fewer occurrences. The illustrative example given in Section 4.2.2 shows that NETLAP-$(\delta, \gamma)$ only finds $<$1, 3, 6, 8$>$ and $<$4, 6, 7, 9$>$, while NetNDP finds three occurrences: $<$1, 2, 5, 6$>$, $<$2, 3, 6, 7$>$, and $<$4, 6, 7, 9$>$. As can been seen from Figure \ref{fig9}, NetNDP is also faster than NETLAP-$(\delta, \gamma)$. For example, the running time for NetNDP is 46.9ms for $P5$ on $S4$, while the running time for NETLAP-$(\delta, \gamma)$ is 51.6ms. NETLAP-$(\delta, \gamma)$ and NetNDP have the same time complexity of $O(n \times m^{2} \times W)$. The reason why NETLAP-$(\delta, \gamma)$ is slower is that it needs to prune invalid nodes associated with occurrences. NetNDP does not need to find these invalid nodes, and therefore NetNDP has better performance than NETLAP-$(\delta, \gamma)$.

	(3) NetNDP outperforms both NETASPNO-$(\delta, \gamma)$ and NetNDP-nonp. From Figure \ref{fig8}, we can see that these three algorithms obtain the same results, and give better performance than the first two algorithms. However, from Figure \ref{fig9}, we know that NetNDP has the best time efficiency. For example, although all three algorithms find 112 occurrences for $P4$ on $S6$, the running time for NetNDP, NETASPNO-$(\delta, \gamma)$ and NetNDP-nonp are 43.7, 54.7, and 50 ms, respectively. The reason for this is because NETASPNO-$(\delta, \gamma)$ and NetNDP-nonp employ the same strategy as NetNDP to search for the occurrences, i.e. they find the rightmost absolute leaf of the max root, search for the rightmost occurrence from the rightmost absolute leaf, and delete the occurrence, meaning that they obtain the same results. The difference is that NETASPNO-$(\delta, \gamma)$ employs the principle of NETASPNO \cite{Wu2018NETASPNO}, which calculates the number of root paths with a $\gamma$-distance of $d$ $(0 \leq d \leq \gamma)$ for each node. NETASPNO judges whether or not a node has root paths that satisfy the distance constraint, and prunes invalid nodes and parent-child relationships via the number of root paths. Thus, the time complexity of NETASPNO-$(\delta, \gamma)$ is $O(n \times m^{2} \times W \times \gamma)$, meaning that it has the longest running time. NetNDP-nonp and NetNDP have the same time complexity of $O(n \times m^{2} \times W)$, but NetNDP-nonp does not prune invalid nodes and parent-child relationships. Figure \ref{fig10} further shows the total numbers of nodes and parent-child relationships for NetNDP-nonp and NetNDP. 

	\begin{figure} 
		\centering
		\begin{minipage}[t]{0.5\linewidth}	
			\includegraphics[width=2.3in]{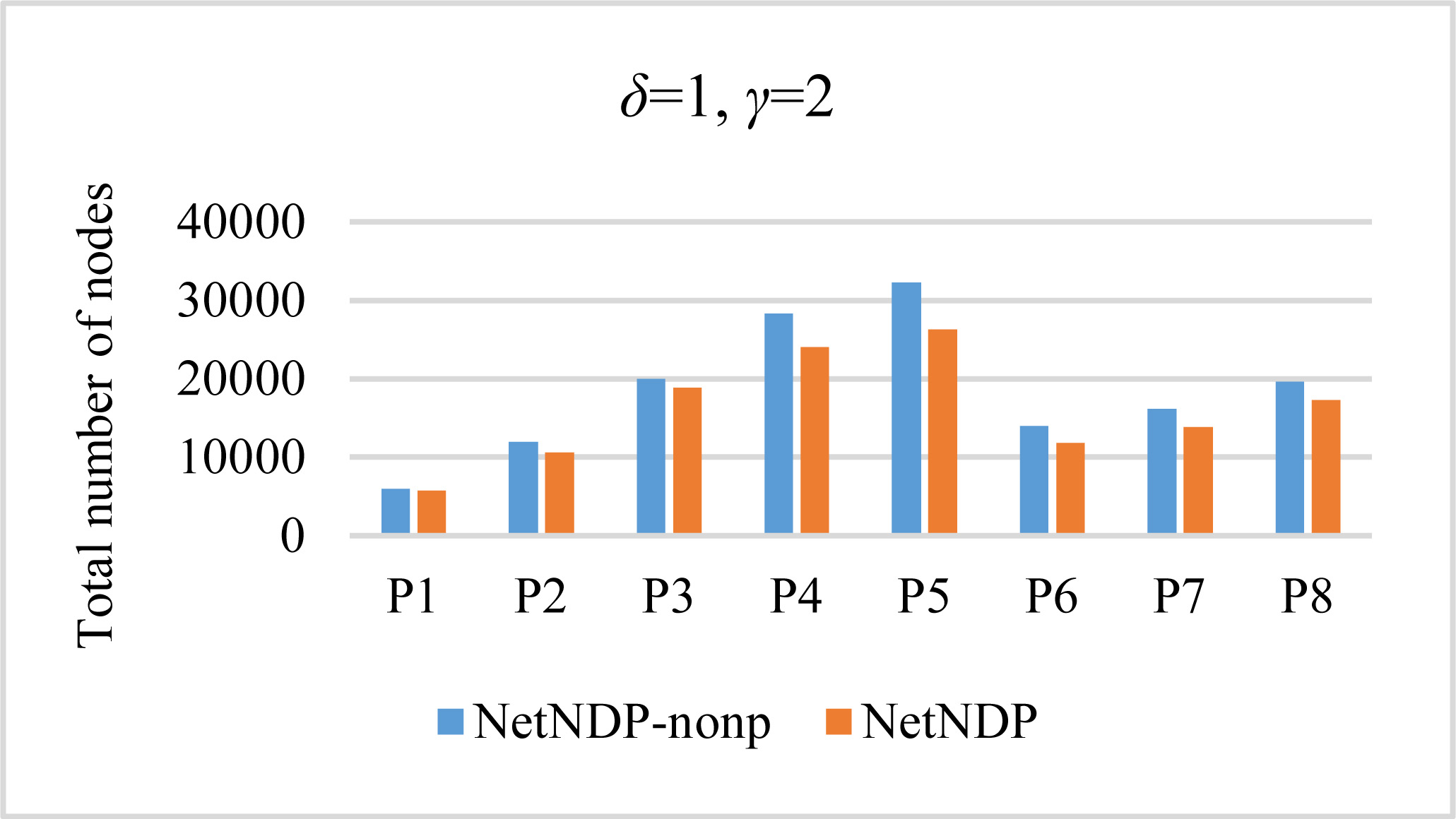}
		\end{minipage}%
		\begin{minipage}[t]{0.5\linewidth}	
			\includegraphics[width=2.3in]{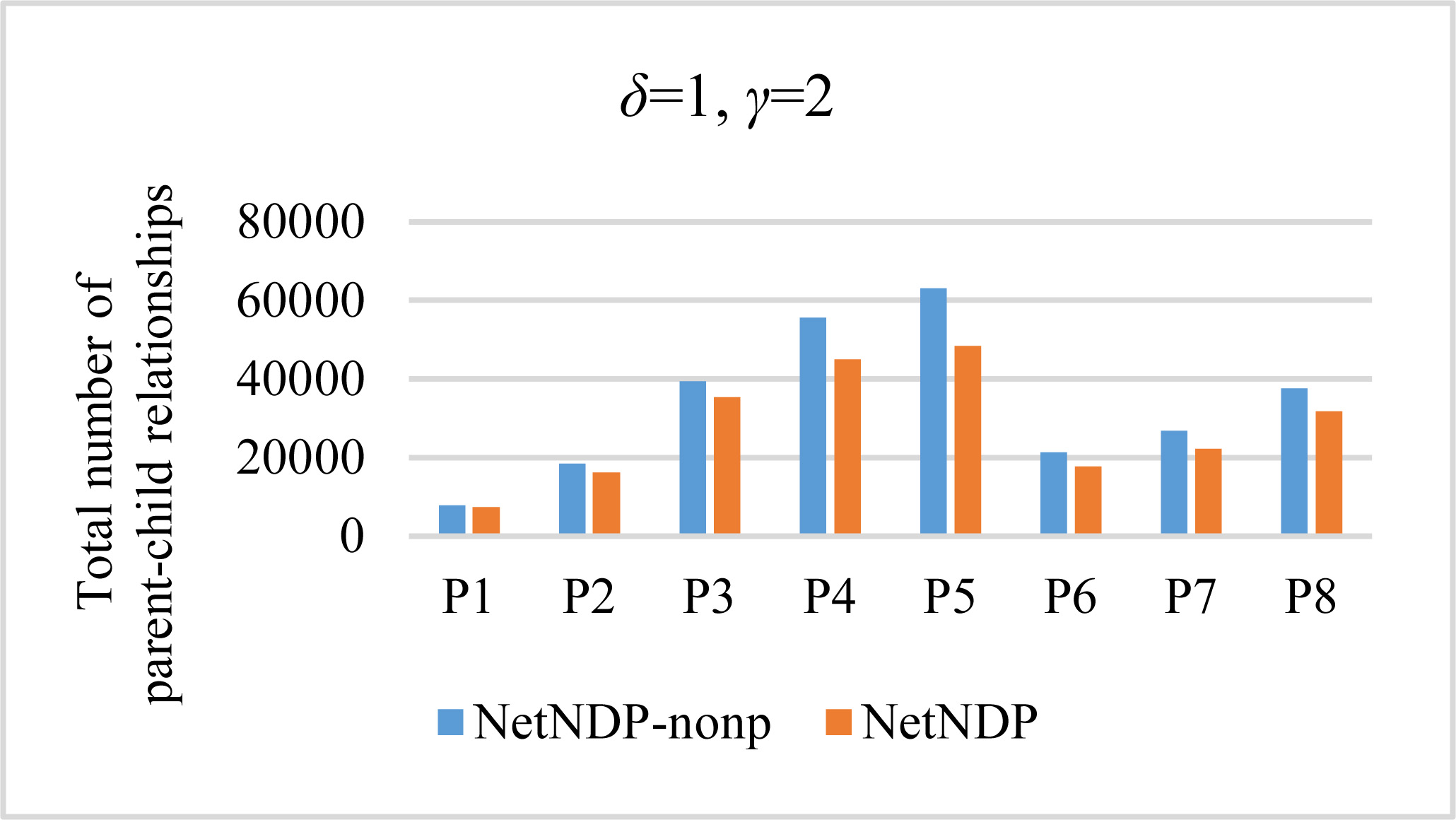}
		\end{minipage}%
		\caption{Total numbers of nodes and parent-child relationships for NetNDP-nonp and NetNDP with $(\delta=1, \gamma=2)$}
		\label{fig10}
	\end{figure}
	
From Figure \ref{fig10}, we can see that NetNDP-nonp has numerous invalid nodes and parent-child relationships, and therefore requires excessive numbers of invalid accesses, thus increasing the running time. Hence, NetNDP has better performance than NETASPNO-$(\delta, \gamma)$ and NetNDP-nonp.

	\subsection{Influence of Different Parameters}
	\label{parameters}
	To further demonstrate the impact of different threshold parameters on the experiment, we use values of $(\delta=1, \gamma=3)$ and $(\delta=2, \gamma=3)$. The corresponding numbers of occurrences are shown in Figures \ref{fig11} and \ref{fig12}, and the total running time is shown in Figures \ref{fig13} and \ref{fig14}.

		\begin{figure}[t] 
		\centering
		\includegraphics[width=0.5\textwidth]{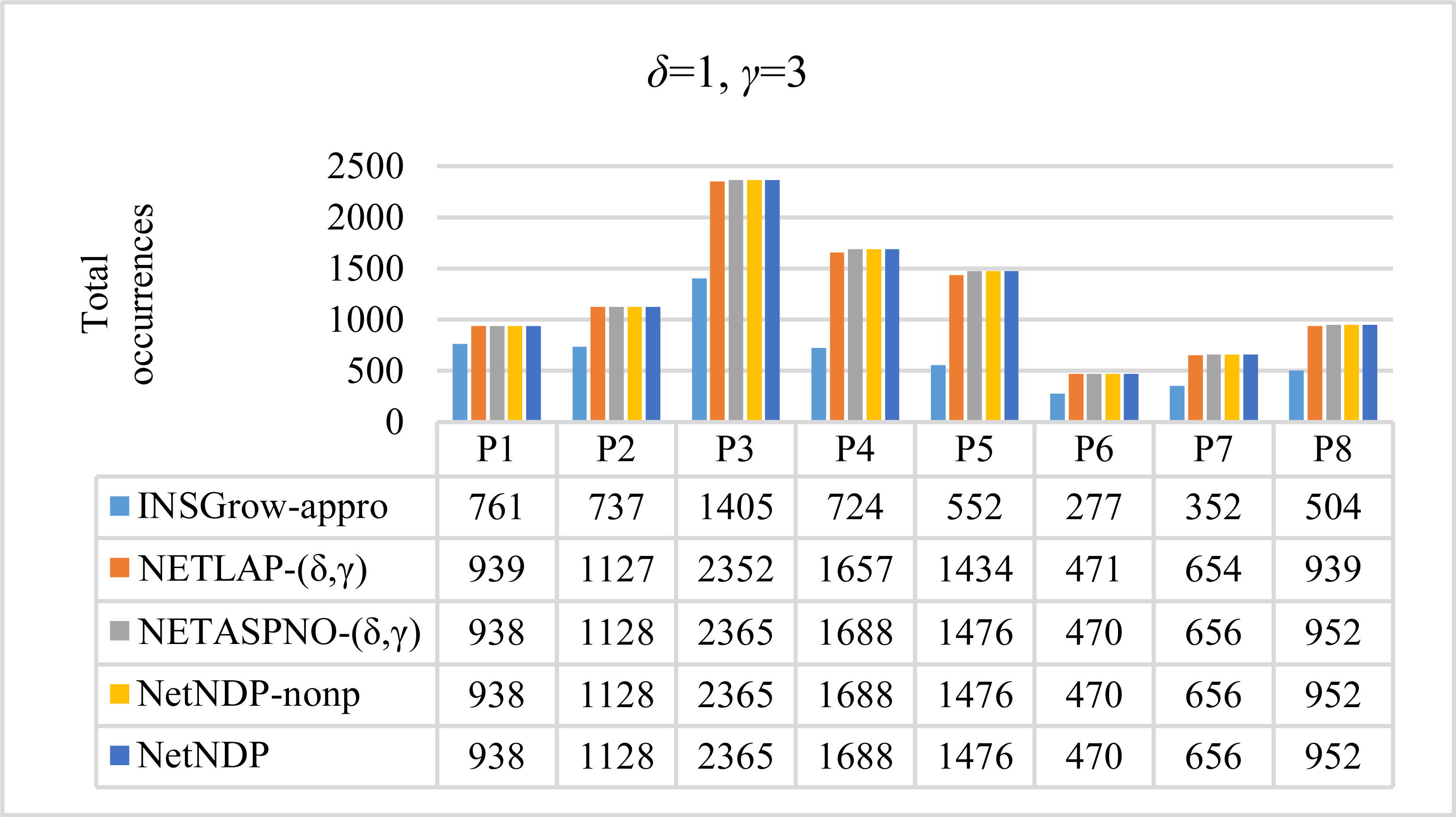}
		\caption{Total occurrences for $P1\sim P8$ on $S1\sim S8$ with $(\delta=1, \gamma=3)$}
		\label{fig11} 	
	\end{figure}
	
	\begin{figure}[t] 
		\centering
		\includegraphics[width=0.5\textwidth]{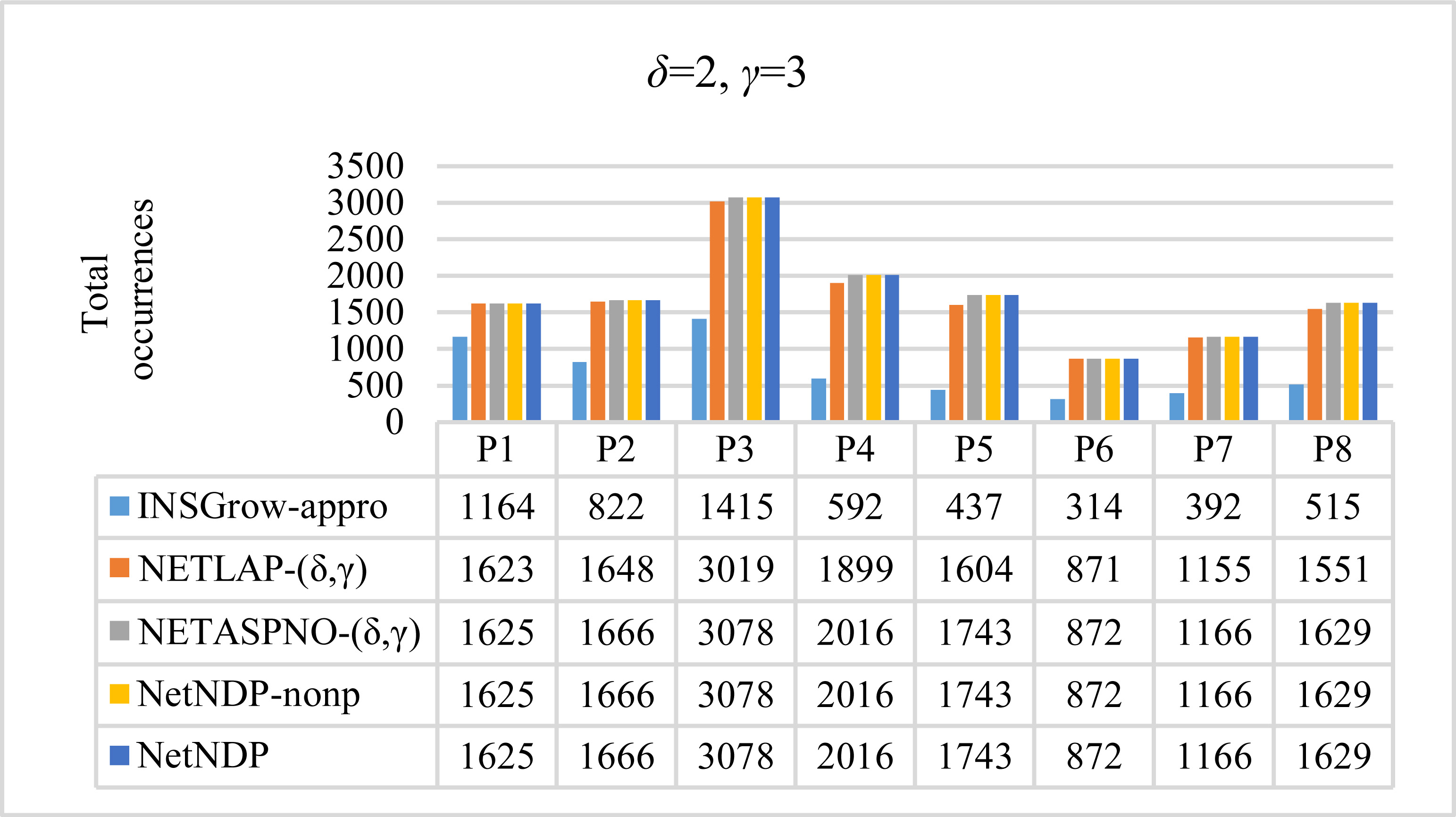}
		\caption{Total occurrences for $P1\sim P8$ on $S1\sim S8$ with $(\delta=2, \gamma=3)$}
		\label{fig12} 	
	\end{figure}
	\begin{figure}[t] 
		\centering
		\includegraphics[width=0.5\textwidth]{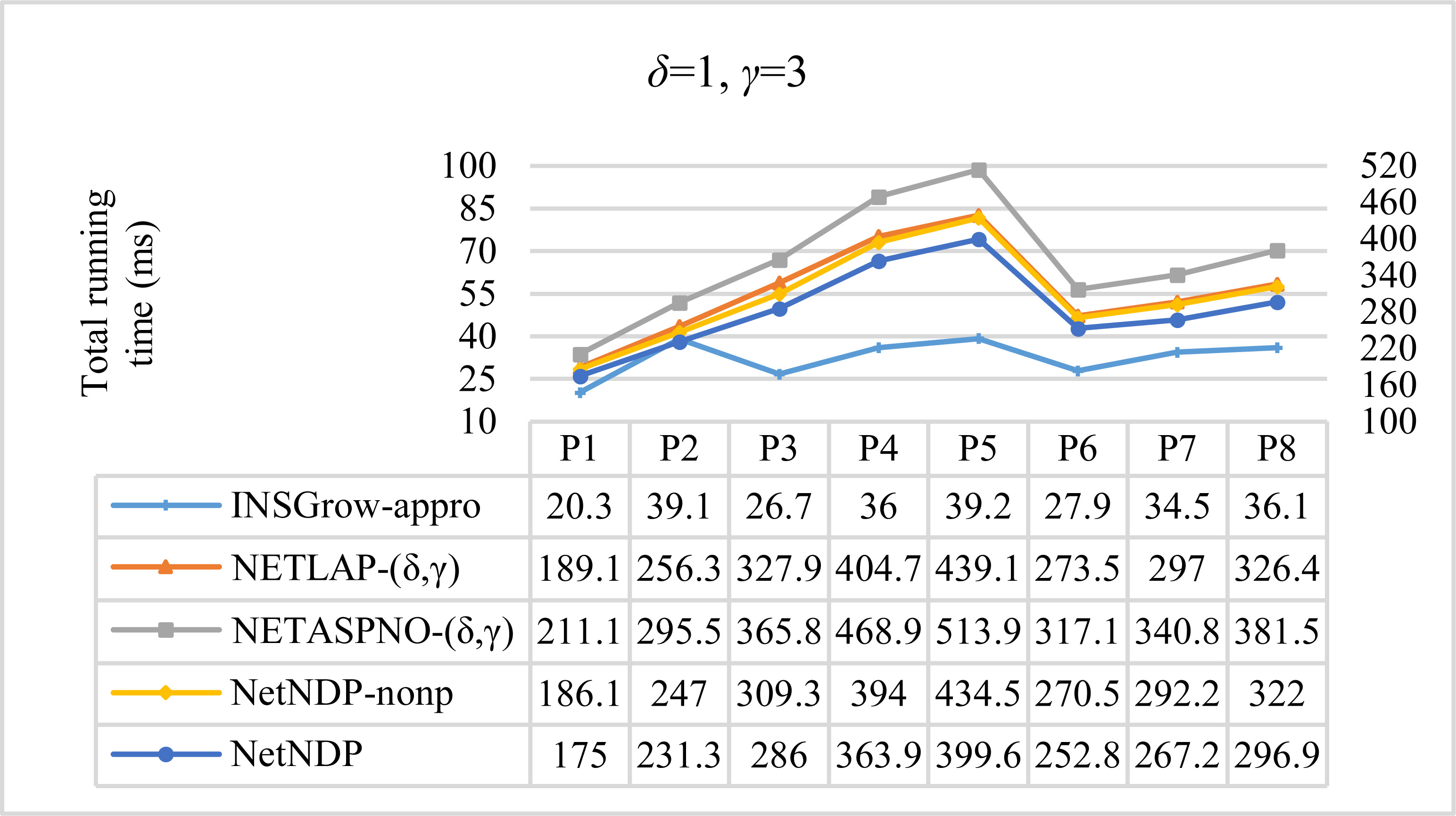}
		\caption{Total running time for $P1\sim P8$ on $S1\sim S8$ with $(\delta=1, \gamma=3)$}
		\label{fig13} 	
	\end{figure}
	
	\begin{figure}[t] 
		\centering
		\includegraphics[width=0.5\textwidth]{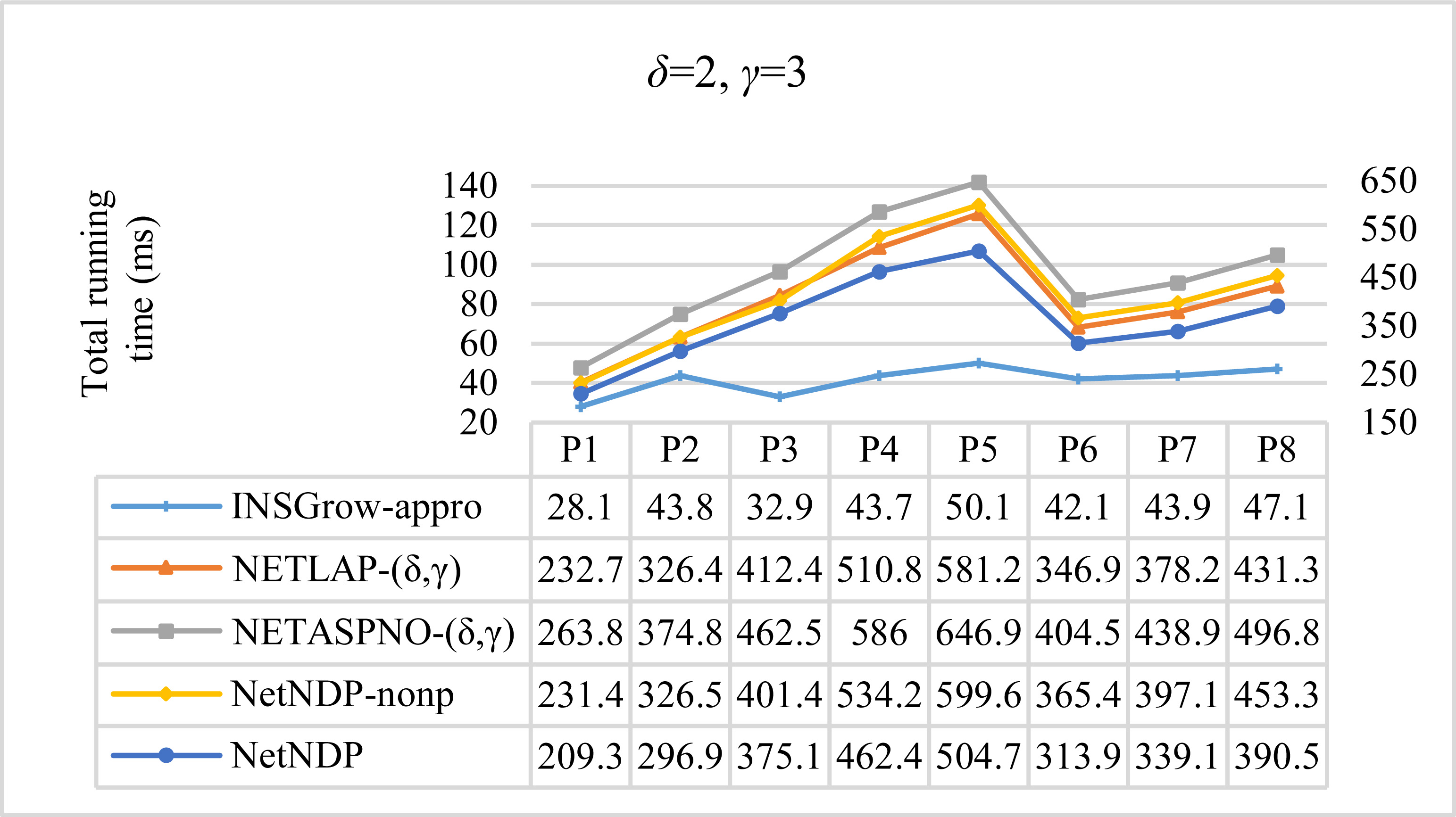}
		\caption{Total running time for $P1\sim P8$ on $S1\sim S8$ with $(\delta=2, \gamma=3)$}
		\label{fig14} 	
	\end{figure}

	(1) From Figures \ref{fig11} and \ref{fig12}, we can see that NetNDP finds the same number of occurrences as NETASPNO-$(\delta, \gamma)$ and NetNDP-nonp, and that its performance is better than INSGrow-appro and NTELAP-$(\delta, \gamma)$. Figures \ref{fig13} and \ref{fig14} show that except for INSGrow-appro, NetNDP is the fastest. The  analysis given above is consistent with the results for $(\delta=1, \gamma=2)$.
	
	(2) As can be seen from Figures \ref{fig9}, \ref{fig13}, and \ref{fig14}, the running time for NetNDP is positively correlated with $n$, $m$, and $W$, a result that is consistent with our theoretical analysis of a time complexity $O(n \times m^{2} \times W)$.

	From Table 2, we can see that the length of $S1$ is the shortest and the length of $S8$ is the longest. Figure \ref{fig9} shows that the running time for $P1\sim P8$ is the shortest on $S1$ and the longest on $S8$; in other words, the larger the value of $n$, the longer the running time. For instance, the running time for $P4$ on $S1$ and $S8$ is 25 and 65.6 ms, respectively, while the running time for $P4$ on the other sequences is greater than 25 ms and less than 65.6ms. Hence, the running time for NetNDP is positively correlated with the length of the sequence.
	
	Table 3 shows that $P3\sim P5$ have the same gap constraints and that the length increases gradually. From Figures \ref{fig13} and \ref{fig14}, we know that the total running time for $P5$ is always greater than for $P3$ on $S1\sim S8$. For example, the total running time for $P3$ and $P5$ on $S1\sim S8$ is 286 and 399.6ms $(\delta=1, \gamma=3)$, respectively. These experimental results demonstrate that the running time increases with the length of the pattern.

	From Table 3, we see that $P6\sim P8$ are same except for the gap constraints and that their maximum gap increases gradually. Figures \ref{fig13} and \ref{fig14} show that the total running time for $P8$ is always greater than that of $P6$ on $S1\sim S8$, indicating that the longer the maximum gap, the longer the running time.

	In summary, the running time of NetNDP is consistent with a time complexity of $O(n \times m^{2} \times W)$, and NetNDP outperforms other competitive algorithms.

	\subsection{Matching Effect}
	\label{matching_effect}
	To illustrate that the $(\delta, \gamma)$-distance outperforms the Hamming distance in terms of the matching effect, we obtain occurrences with the $(\delta, \gamma)$-distance using NetNDP $(\delta=1, \gamma=3)$, and occurrences with a Hamming distance using NETASPNO \cite{Wu2018NETASPNO} $(h=3)$. Figures \ref{fig15} and \ref{fig16} show the matching results for $P9$ on $S9$ and $P10$ on $S10$, respectively. 

	\begin{figure}[t] 
		\begin{minipage}[t]{0.5\linewidth}	
			\includegraphics[width=2.3in]{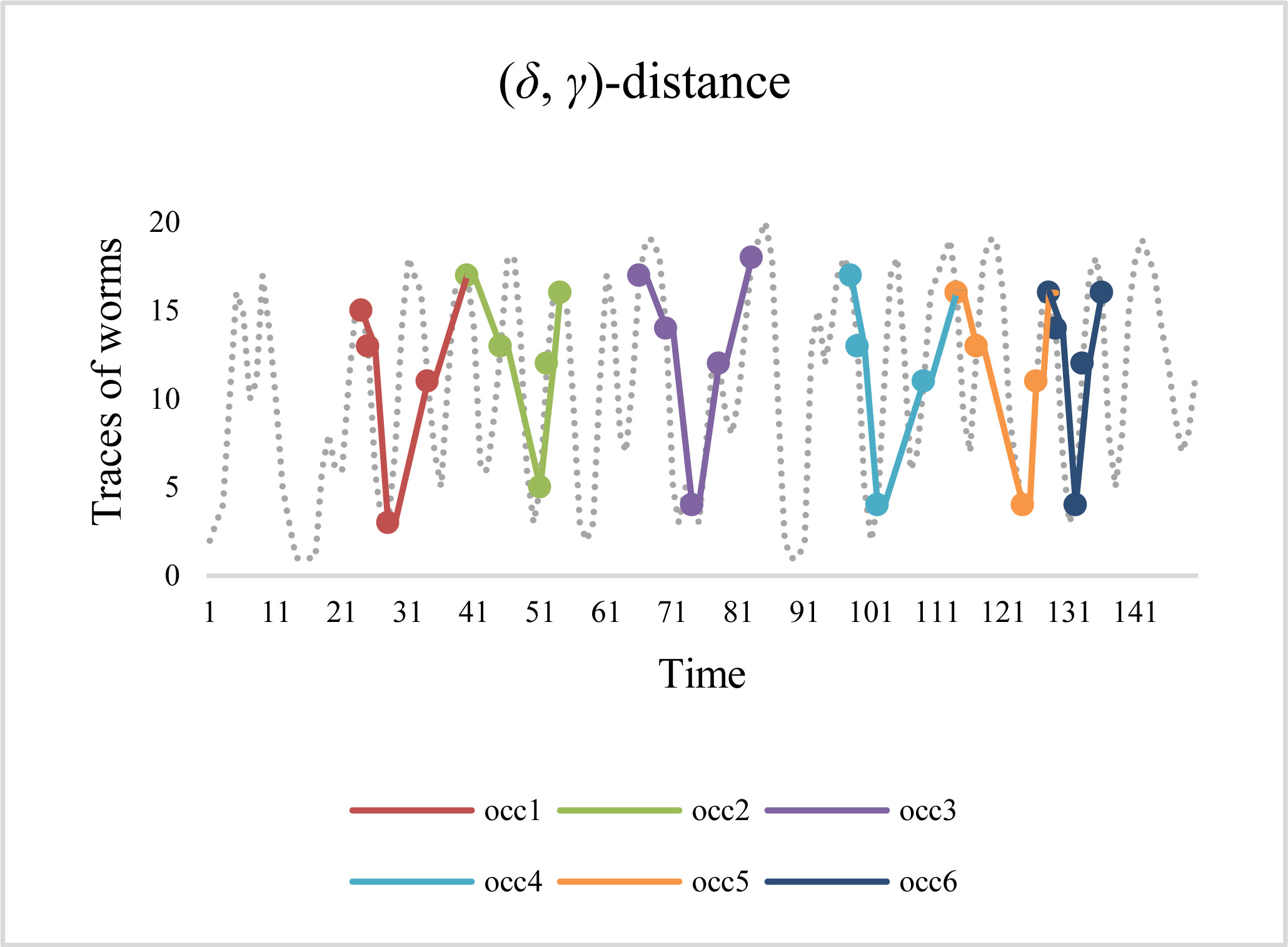}
		\end{minipage}%
		\begin{minipage}[t]{0.5\linewidth}	
			\includegraphics[width=2.3in]{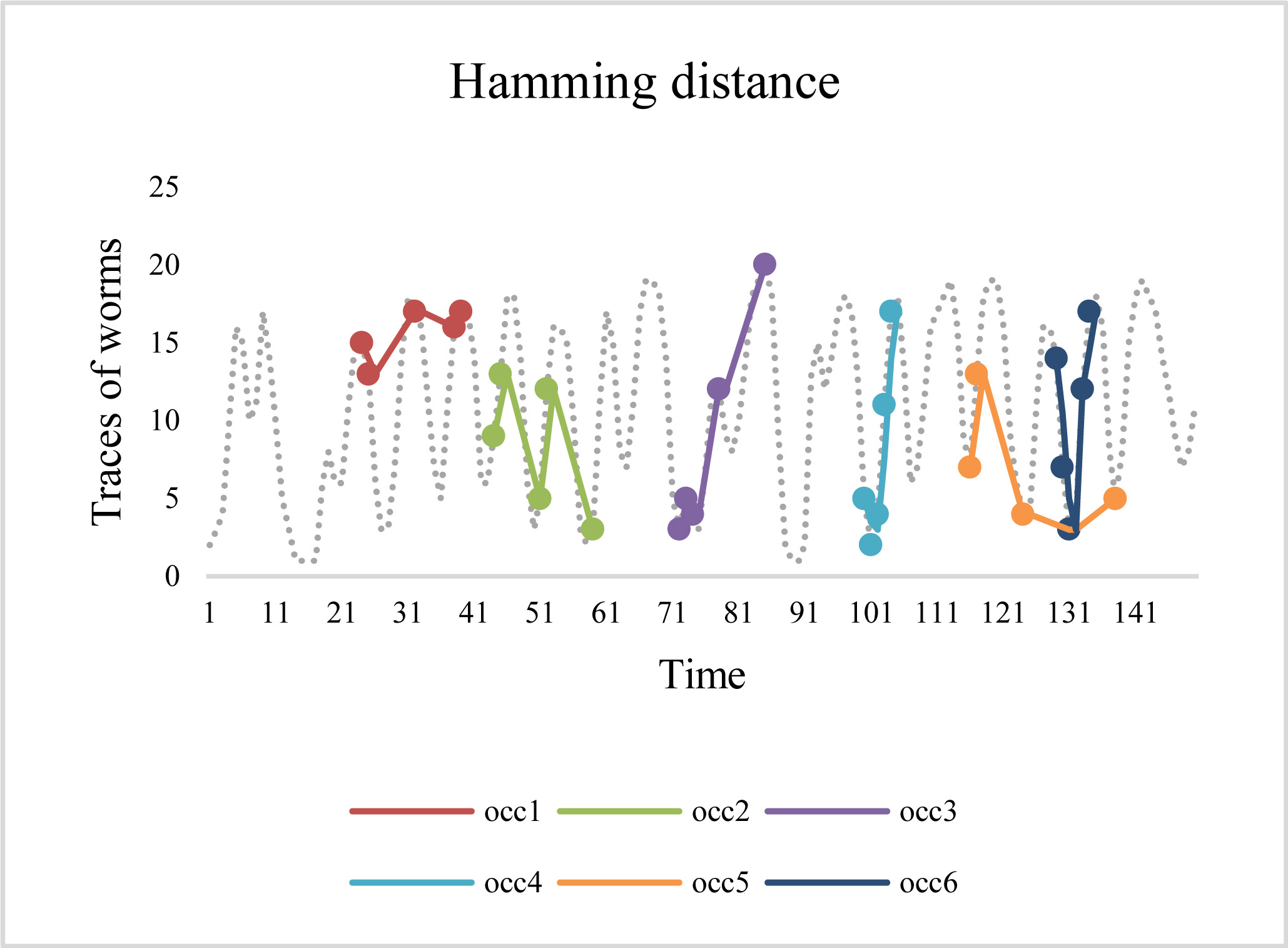}
		\end{minipage}%
		\caption{Matching results for $P9$ on $S9$ with $(\delta=1, \gamma=3, h=3)$}
		\label{fig15}
	\end{figure}

	\begin{figure} 
		\begin{minipage}[t]{0.5\linewidth}	
			\includegraphics[width=2.3in]{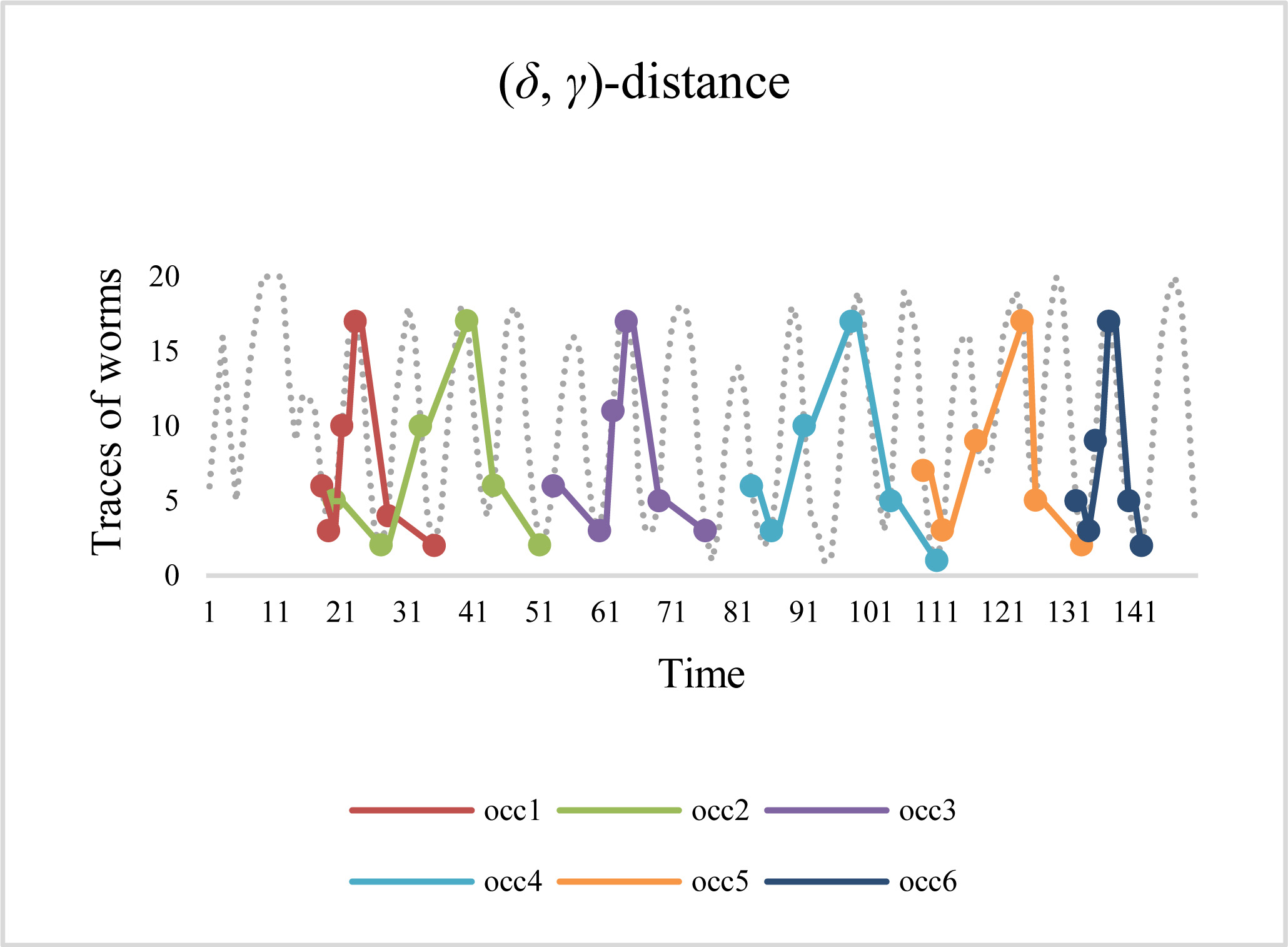}
		\end{minipage}%
		\begin{minipage}[t]{0.5\linewidth}	
			\includegraphics[width=2.3in]{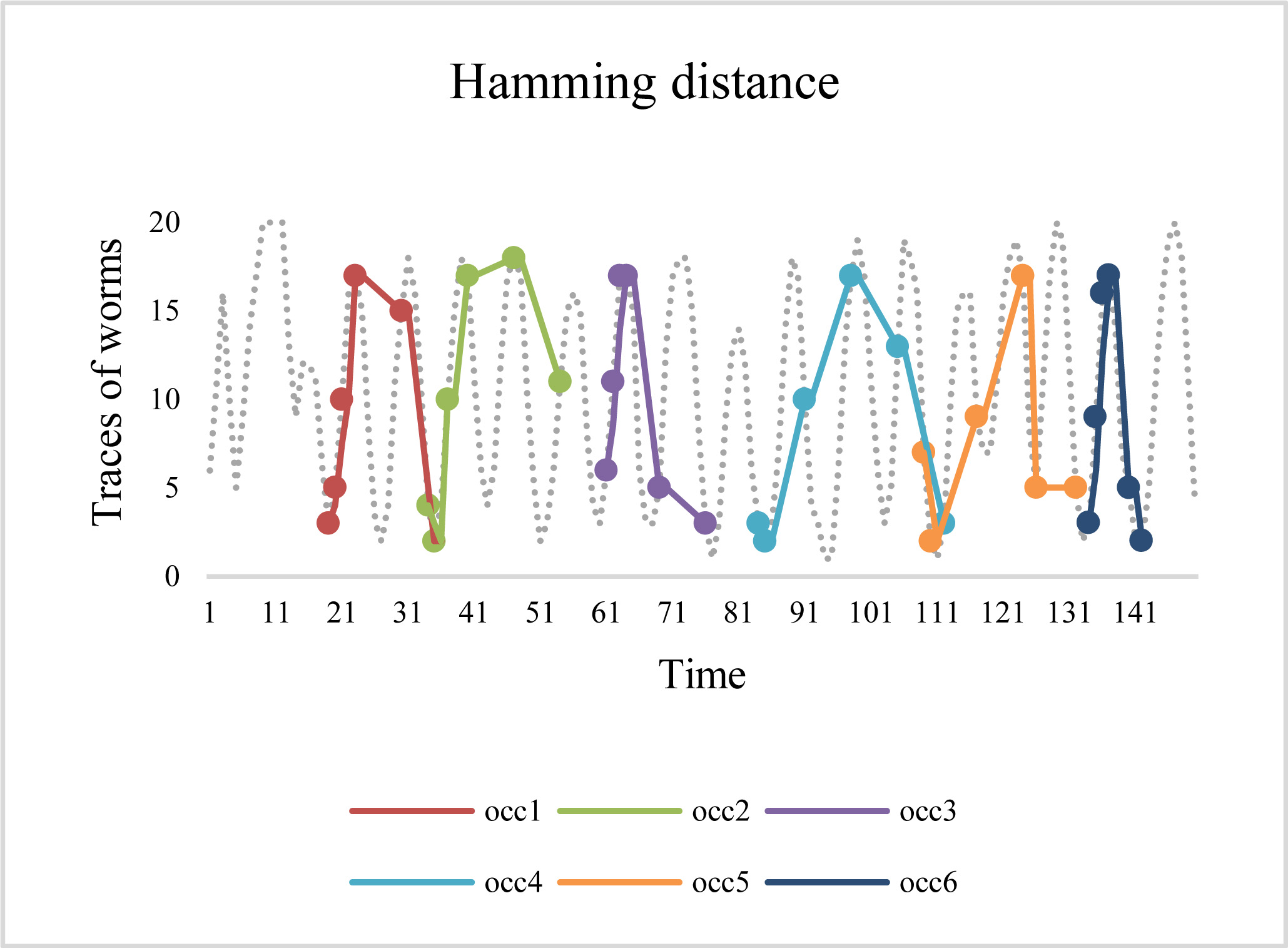}
		\end{minipage}%
		\caption{Matching results for $P10$ on $S10$ with $(\delta=1, \gamma=3, h=3)$}
		\label{fig16}
	\end{figure}
	As can be seen from Figure \ref{fig15}, the trends of all occurrences found with the $(\delta, \gamma)$-distance are the same that of $P9$, while the trends of some occurrences with the Hamming distance are different from that of $P9$. The reason for this is that the Hamming distance cannot reflect the distance between characters, i.e. it cannot measure the local approximation, resulting in large deviations in matching results. For instance, with the  Hamming distance, $occ1$ has a large deviation from $P9$ in position 32, which leads to a dissimilarity between $occ1$ and $P9$. However, when the $(\delta, \gamma)$-distance is used, the distance between the corresponding characters of occ1 and $P9$ is less than one due to the local constraint, and $occ1$ therefore is similar to $P9$. Figure \ref{fig16} shows a similar phenomenon. Based on this analysis, we know that approximate pattern matching with the $(\delta, \gamma)$-distance is more effective than with the Hamming distance.
	
	\section{Conclusion}
	\label{conclusion}
	
The Hamming distance cannot be used to measure the local approximation between the subsequence and pattern, resulting in large deviations in matching results. To overcome this weakness of the Hamming distance, we explore the use of nonpoverlapping approximate pattern matching with the $(\delta, \gamma)$-distance, where the $\delta$-distance and $\gamma$-distance are used to measure the local and the global approximations, respectively. 	We develop the concept of a local approximate Nettree, and construct an efficient algorithm called NetNDP based on a local approximate Nettree.To improve the time efficiency, NetNDP employs MRD to prune invalid nodes and parent-child relationships, and to assess whether or not the root paths satisfy the global constraint. NetNDP finds the rightmost absolute leaf of the max root, searches for the rightmost occurrence from the rightmost absolute leaf, and deletes the occurrence. These processes are iterated until there are no new occurrences in the local approximate Nettree. Numerous experimental results give rise to the following conclusions. NetNDP runs faster than other competitive algorithms, since it can avoid creating invalid nodes and parent-child relationships. More importantly, approximate pattern matching with the $(\delta, \gamma)$-distance has better matching performance than the Hamming distance, since the trends of all occurrences found with the $(\delta, \gamma)$-distance are the same with that of the pattern, while the trends of some occurrences with the Hamming distance are different from that of the pattern.

	\section*{Acknowledgement}
		This work was party supported by National Natural Science Foundation of China (61976240, 917446209), National Key Research and Development Program of China(2016YFB1000901), Natural Science Foundation of Hebei Province, China (No. F2020202013).


\begin{thebibliography}{}
\bibitem{Al-Ssulami2015Hybrid} 
A.M. Al-Ssulami,  Hybrid string matching algorithm with a pivot, \textit{Journal of Information Science}  \textbf{41}(1) (2015), 82--88.

		\bibitem{Fernau2020Pattern}
H.	Fernau, F. Manea, R. Mercaş and M.L. Schmid, Pattern matching with variables: Efficient algorithms and complexity results, \textit{ACM Transactions on Computation Theory (TOCT)}  \textbf{12}(1)  (2020), 1--37.
	
\bibitem{qiangtkde}
J. Qiang, Z. Qian, Y. Li, Y. Yuan and X. Wu, Short text topic modeling techniques, applications, and performance: A survey, \textit{IEEE Transactions on Knowledge and Data Engineering} (2020), DOI:10.1109/TKDE.2020.2992485,

\bibitem{wu2015apin}
Y. Wu, Shuai Fu, He Jiang, Xindong Wu. Strict approximate pattern matching with general gaps. \textit{Applied Intelligence} \textbf{42}(3) (2015) 566--580.

\bibitem{ida2020}
G. Liu, L. Li and X. Wu, Multi-fuzzy-constrained graph pattern matching with big graph data, \textit{Intelligent Data Analysis}  \textbf{24}(4) (2020), 941--958.



		\bibitem{Nie2016Query} 
L. Nie, H. Jiang, Z. Ren, Z. Sun and X. Li, Query expansion based on crowd knowledge for code search, \textit{IEEE Transactions on Services Computing} \textbf{9}(5) (2016), 771--783.
	
\bibitem{li2021dse}	
H. Yuan and G. Li, A survey of traffic prediction: from spatio-temporal data to intelligent transportation, \textit{Data Science and Engineering}  \textbf{6}(2) (2021) 63--85.


		\bibitem{Wang2019} 
X. Wang, L. Chai, Q. Xu, Y. Yang, J. Li, J. Wang and Y. Chai, Efficient subgraph matching on large RDF graphs using mapReduce, \textit{Data Science and Engineering} \textbf{4}(1)  (2019),   24--43.


		\bibitem{Wu2020NetNCSP}
Y.	Wu, C. Zhu, Y. Li, L. Guo and Wu X, NetNCSP: Nonoverlapping closed sequential pattern mining, \textit{Knowledge-Based Systems} \textbf{196} (2020), 105812.

\bibitem{Min2018Frequent} 
F. Min, Z. Zhang, W. Zhai and R. Shen, Frequent pattern discovery with tripartition alphabets, \textit{Information Sciences} \textbf{507} (2020), 715--732.

\bibitem{song2021kais}
W. Song, L. Liu and  C. Huang, Generalized maximal utility for mining high average-utility itemsets,  \textit{Knowledge and Information Systems} \textbf{63} (2021) 2947--2967. 




\bibitem{Wu2019On} 
M. Wu and X. Wu, On big wisdom, \textit{Knowledge and Information Systems} \textbf{58}(1) (2019), 1--8.
		


		\bibitem{Upama2015A} 
P.B. Upama, J.T. Khan, F. Zemim, Z. Yasmin and N. Sakib, A new approach in pattern matching: Codon detection in DNA and RNA using hash function (CDDRHF), In \textit{Proceedings of the 18th International Conference on Computer and Information Technology}, Dhaka, Bangladesh, 2015, pp. 172--177.
		
		\bibitem{Lee2018Fault} 
H. Lee, K. Cho, D. Kim and S. Kang,  Fault group pattern matching with efficient early termination for high-speed redundancy analysis, \textit{IEEE Transactions on Computer-Aided Design of Integrated Circuits and Systems} \textbf{37}(7) (2018), 1473--1482.
		
		\bibitem{Nguyen2017Pattern} 
T.S. Nguyen.  Pattern matching-based prediction using affine combination of two measures: two are better than one, \textit{International Journal of Business Intelligence and Data Mining} \textbf{12}(3) (2017), 236--256.
		



\bibitem{wu2021jos}
Y. Wu, X. Liu, W. Yan, L. Guo and Xindong Wu, Efficient algorithm for solving strict pattern matching under nonoverlapping condition, \textit{Journal of Software} \textbf{32}(11) (2021) 3331--3350.
	
\bibitem{ida2018}
H. Liu, L. Wang, Z. Liu, P. Zhao and X. Wu, Efficient pattern matching with periodical wildcards in uncertain sequences,\textit{Intelligent Data Analysis} \textbf{22}(4) (2018), 829--842.


\bibitem{wu2021tkdd}
Y. Wu, L. Luo,  Y. Li, L. Guo, P. Fournier-Viger, X. Zhu and Xindong Wu, NTP-Miner: Nonoverlapping three-way sequential pattern mining, \textit{ACM Transactions on Knowledge Discovery from Data}  DOI: 10.1145/3480245. 

\bibitem{min2020ins} 
F. Min, Z. Zhang, W. Zhai and R. Shen, Frequent pattern discovery with tri-partition alphabets, \textit{Information Sciences} \textbf{507} (2020) 715--732.

		


		

		




X. Chai, Dan Yang, Jingyu Liu, Yan Li, Youxi Wu. Top-k sequence pattern mining with non-overlapping condition. \textit{Filomat}  \textbf{32}(5) (2018) 1703--1710.

\bibitem{apin2021}
Y. Wang, Y. Wu, Y. Li, F. Yao, P. Fournier-Viger and Xindong Wu, Self-adaptive nonoverlapping sequential pattern mining, Applied Intelligence, 2021, DOI: 10.1007/s10489-021-02763-y.

\bibitem{Fournier2021} 
P.		Fournier-Viger, P. Yang, R.U. Kiran, S. Ventura and J. M. Luna, Mining local periodic patterns in a discrete sequence, \textit{Information Sciences} \textbf{544}  (2021), 519--548.
		
		\bibitem{Xie2017Efficient} 
F. Xie, X. Wu and X. Zhu, Efficient sequential pattern mining with wildcards for keyphrase extraction, \textit{Knowledge-Based Systems} \textbf{115} (2017), 27--39.
	

	\bibitem{Liu2021} 
N. Liu, F. Xie and X. Wu, Suffix array for multi-pattern matching with variable length wildcards, \textit{Intelligent Data Analysis} \textbf{25}(2)  (2021), 283--303.
	
		\bibitem{Huang2013Algorithms} 
G. Huang, D. Guo and X. Hu, Algorithms for approximate pattern matching with wildcards and length constraints, \textit{Journal of Computer Applications} \textbf{33}(3)  (2013), 800--805.
		
		


\bibitem{Wu2021eswa}
Y. Wu, R. Lei, Y. Li, L. Guo and X. Wu. HAOP-Miner: Self-adaptive high-average utility one-off sequential pattern mining, \textit{Expert Systems with Applications} \textbf{184}  (2021), 115449. 
	

		\bibitem{Ding2009Efficient} 
B. Ding, D. Lo, J. Han, S. Khoo, Efficient mining of closed repetitive gapped subsequences from a sequence database. In: \textit{Proceedings of the 2009 IEEE 25th International Conference on Data Engineering}, Shanghai, China, 2009, pp. 1024--1035.
		
		
		\bibitem{Wu2017Strict} 
Y. Wu, C. Shen, H. Jiang and X. Wu, Strict pattern matching under non-overlapping condition. \textit{Science China Information Sciences} \textbf{60}(1) (2017), 012101.
		
		
		\bibitem{Shi2020NetNPG}
Q. Shi, J. Shan, W. Yan, Y. Wu and Wu X,  NetNPG: Nonoverlapping pattern matching with general gap constraints, \textit{Applied Intelligence} \textbf{50} (2020), 1832-1845.


\bibitem{Wu2018NOSEP} 
Y. Wu, Y. Tong, X. Zhu and X. Wu, NOSEP: Nonoverlapping sequence pattern mining with gap constraints, \textit{IEEE Transactions on Cybernetics} \textbf{48}(10) (2018), 2809--2822.



		\bibitem{Chen2014Bit} 
K. Chen, G. Huang and  R.C.T. Lee, Bit-parallel algorithms for exact circular string matching, \textit{The Computer Journal} \textbf{57}(5) (2014), 731--743.
		
		\bibitem{Hu2015GFilter} 
H. Hu, K. Zheng, X. Wang and A. Zhou, GFilter: A general gram filter for string similarity search, \textit{IEEE Transactions on Knowledge and Data Engineering} \textbf{27}(4) (2015), 1005--1018.
		
		\bibitem{Chen2018On} 
Y. Chen and Y. Wu, On the string matching with k mismatches, \textit{Theoretical Computer Science} \textbf{726} (2018), 5--29.
		
		\bibitem{Wu2018NETASPNO} 
Y. Wu, S. Li, L. Guo, J. Liu and X. Wu, NETASPNO: Approximate strict pattern matching under nonoverlapping condition, \textit{IEEE Access} \textbf{6}(1) (2018), 24350--24361.


		\bibitem{Wu2016Approximate} 
	Y.	Wu, Z. Tang, H. Jiang and X.  Wu, Approximate pattern matching with gap constraints, \textit{Journal of Information Science} \textbf{42}(5) (2016), 639--658.
		
		
		

		

		

		



\bibitem{li2021apind}
Y. Li, L. Yu, J. Liu, L. Guo, Y. Wu and X. Wu. NetDPO: (delta, gamma)-approximate pattern matching with gap constraints under one-off condition. \textit{Applied Intelligence} (2021) DOI: 10.1007/s10489-021-03000-2.		
		
\bibitem{wu2020APIN}
Y. Wu, J. Fan, Y. Li, L. Guo and X. Wu, NetDAP: (delta, gamma) - Approximate pattern matching with length constraints, \textit{Applied Intelligence} \textbf{50}(11)  (2020),  4094--4116.


		

\bibitem{Zhang2017On} 
 P.	Zhang and  M. J. Atallah, On approximate pattern matching with thresholds, \textit{Information Processing Letters} \textbf{123} (2017), 21--26.
		
\bibitem{Fredriksson2006Efficient} 
Fredriksson K, Grabowski S,  Efficient algorithms for (delta, gamma, alpha)-matching. \textit{Stringology} (2006), 29--40.

\bibitem{Clifford2005} 
P. Clifford, R. Clifford and C. Iliopoulos, Faster algorithms for delta, gamma-matching and related problems, In: \textit{Annual Symposium on Combinatorial Pattern Matching}, Springer, Berlin, Heidelberg, 2005, pp. 68--78.





\bibitem{dong2018tcyb} 
X. Dong, Y. Gong and L. Cao,  e-RNSP: An efficient method for mining repetition negative sequential patterns, \textit{IEEE Transactions on Cybernetics} \textbf{50}(5)  (2020), 2084--2096.

\bibitem{wu2021tmis}
Y. Wu, X. Wang, Y. Li, L. Guo, Z. Li, J. Zhang and X. Wu, OWSP-Miner: Self-adaptive one-off weak-gap strong pattern mining, \textit{ACM Transactions on Management Information Systems} DOI: 10.1145/3476247.

\bibitem{Wu2021tcyb}
Y. Wu, Y. Wang, Y. Li, X. Zhu and X. Wu, Top-k self-adaptive nonoverlapping contrast sequential pattern mining, \textit{IEEE Transactions on Cybernetics}  (2021) doi: 10.1109/TCYB.2021.3082114.
	
\bibitem{wu2019cc}
Y. Wu, Y. Wang, J. Liu, M. Yu, J. Liu and Y. Li, Mining distinguishing subsequence patterns with nonoverlapping condition, \textit{Cluster Computing}   \textbf{22}(3) (2019), 5905--5917.

\bibitem{Truong2020EHAUSM} 
T. Truong, H. Duong, B. Le, P. Fournier-Viger, EHAUSM: An efficient algorithm for high average utility sequence mining, \textit{Information Sciences} \textbf{515} (2020), 302--323.


\bibitem{Fournier2020Mining} 
P. Fournier-Viger, J. Li, J.C.W. Lin, T. T. Chi, L. J. Chun-Wei and  R.U. Kiran, Mining cost-effective patterns in event logs, \textit{Knowledge-Based Systems} \textbf{191} (2020), 105241.
		

	\bibitem{li2021apin}
Y. Li, S. Zhang, L. Guo, J. Liu, Y. Wu and X. Wu, NetNMSP: Nonoverlapping maximal sequential pattern mining, \textit{Applied Intelligence} (2021) DOI: 10.1007/s10489-021-02912-3.

\bibitem{wu2021kbs}
Y. Wu, M. Geng, Y. Li, L. Guo, Z. Li, P. Fournier-Viger, X. Zhu and X. Wu, HANP-Miner: High average utility nonoverlapping sequential pattern mining, \textit{Knowledge-Based Systems} \textbf{229} (2021), 107361.





		

		


		
		





		
		\bibitem{Lin2007Experiencing} 
J. Lin, E. Keogh, L. Wei and S. Lonardi, Experiencing SAX: A novel symbolic representation of time series, \textit{Data Mining and Knowledge Discovery} \textbf{15}(2) (2007), 107--144.
		
	



	\end{thebibliography}
\end{document}